\newcommand{\be}{\begin{eqnarray}}
\newcommand{\ee}{\end{eqnarray}}
\newcommand{\ben}{\begin{eqnarray*}}
\newcommand{\een}{\end{eqnarray*}}
\newcommand{\bfl}{\begin{flalign*}}
\newcommand{\efl}{\end{flalign*}}
\newcommand{\dref}[1]{(\ref{#1})}
\newcommand{\expect}[1]{{\mathbb E} \Bigl[ #1\Bigr]}
\newcommand{\prob}[1]{{\mathbb P} \left( #1\right)}
\newcommand{\calD}{{\mathcal D}}
\newcommand{\calN}{{\mathcal N}}
\newcommand{\calJ}{{\mathcal J}}
\newcommand{\calI}{\mathcal I}
\newcommand{\calS}{{\mathcal S}}
\newcommand{\calT}{{C}}
\newcommand{\calM}{{\mathcal M}}
\newcommand{\calP}{{\mathcal P}}
\newtheorem{theorem}{Theorem}
\newtheorem{definition}{Definition}
\newtheorem{lemma}{Lemma}
\newtheorem{corollary}{Corollary}
\newtheorem{remark}{Remark}
\newcommand{\BNA}{\textsf{BNA}}
\newcommand{\DMA}{\textsf{DMA}}
\newcommand{\DMART}{\textsf{DMA-SRT}}
\newcommand{\OCDMA}{\textsf{DMA-RT}}
\newcommand{\GOCDM}{\textsf{G-DM-RT}}
\newcommand{\GOCDMBF}{\textsf{G-DM-RT-BF}}
\newcommand{\OJOD}{\textsf{G-DM}}
\newcommand{\OJODBF}{\textsf{G-DM-BF}}
\begin{document}

\title{Scheduling Coflows with Dependency Graph}
\author{Mehrnoosh~Shafiee, Javad~Ghaderi
\thanks{The authors are with the Department
of Electrical Engineering, Columbia University, New York,
NY 10027, USA (e-mail: s.mehrnoosh@columbia.edu, jghaderi@ee.columbia.edu). This research was supported by NSF grants CNS-1717867 and CNS-1652115.}}

%\markboth{IEEE/ACM Transactions on Networking}%
%{Shafiee and Ghaderi: Scheduling Coflows with Dependency Graph}

\maketitle

\begin{abstract}
Applications in data-parallel computing typically consist of multiple stages. In each stage, a set of intermediate parallel data flows (\emph{Coflow}) is produced and transferred between servers to enable starting of next stage. While there has been much research on scheduling isolated coflows, the dependency between coflows in multi-stage jobs has been largely ignored. In this paper, we consider scheduling coflows of multi-stage jobs represented by general \textit{DAG}s (Directed Acyclic Graphs) in a shared data center network, so as to minimize the total weighted completion time of jobs.
This problem is significantly more challenging than the traditional coflow scheduling, as scheduling even a single multi-stage job to minimize its completion time is shown to be NP-hard. 
 In this paper, we propose a polynomial-time algorithm with approximation ratio of $O(\mu\log(m)/\log(\log(m)))$, where $\mu$ is the maximum number of coflows in a job and $m$ is the number of servers.
For the special case that the jobs' underlying dependency graphs are \textit{rooted trees}, we modify the algorithm and improve its approximation ratio.
To verify the performance of our algorithms, we present simulation results using real traffic traces that show up to $53 \%$ improvement over the prior approach. We conclude the paper by providing a result concerning an optimality gap for scheduling coflows with general DAGs.
\end{abstract}

\begin{IEEEkeywords}
	Multi-Stage Job, Coflow, Scheduling Algorithms, Approximation Algorithms, Data Centers
\end{IEEEkeywords}
\IEEEpeerreviewmaketitle

\section{Introduction}
\label{CH3introduction}
Modern parallel computing platforms (e.g. Hadoop~\cite{HadoopYarn}, Spark~\cite{zaharia2010spark}, Dryad~\cite{isard2007dryad}) have enabled processing of big data sets in data centers. Processing is typically done through multiple computation and communication stages. While a computation stage involves local operations in servers, a communication stage involves data transfer among the servers in the data center network to enable the next computation stage. 
Such intermediate communication stages can have a significant impact on the application latency~\cite{chowdhury2012coflow}. \textit{Coflow} is an  abstraction that has been proposed to model such communication patterns~\cite{chowdhury2012coflow}. Formally, a coflow is defined as a collection of flows whose completion time is determined by the last flow in the collection. For jobs with a single communication stage, minimizing the average completion times of coflows results in the job's latency improvement. However, for multi-stage jobs, minimizing the average coflow completion time might not be the right metric and might even lead to a worse performance, as it ignores the dependencies between coflows in a job~\cite{tian2019scheduling,tian2018scheduling,chowdhury2015efficient}.

There are two types of dependency between coflows of a multi-stage job: \textit{Starts-After} and \textit{Finishes-Before}~\cite{chowdhury2015efficient}. A Starts-After constraint between two coflows represents an explicit barrier that the second coflow can start only after the first coflow has been completed~\cite{hive}. A Finishes-Before constraint is common when pipelining is used between successive stages~\cite{isard2007dryad}, where 
two dependent coflows can coexist but the second coflow cannot finish until the first coflow finishes. In this paper we focus on scheduling coflows of multi-stage jobs with Starts-After dependency, however, our techniques and results can be easily extended to the other case. Each job is represented by a DAG (\textit{Directed Acyclic Graph}) among its coflows that capture the (Starts-After) dependencies among the coflows. As in~\cite{shafiee2018improved,Chowdhury2014,qiu2015minimizing,tian2018scheduling,tian2019scheduling}, the data center network is modeled as an $m \times m$ switch where $m$ is the number of servers  (see Section~\ref{CH3model} for the formal job and data center network model). As an illustration, Figure~\ref{CH3multistagecoflow} shows one multi-stage job in a $2 \times 2$ switch.    
Given a set of weights, one for each job, our goal is 
to minimize the total weighted completion time of jobs, where the completion time of a job is determined by the completion of the last coflow in its DAG. The weights can capture priorities for different jobs. We state the results as approximation ratios in terms of $m$ (the number of servers), and $\mu$ (the maximum number of coflows in a job).

\begin{figure}[t]
	\centering
	\begin{subfigure}{0.28\columnwidth}
		\includegraphics[width=1 in,height=1.3in]{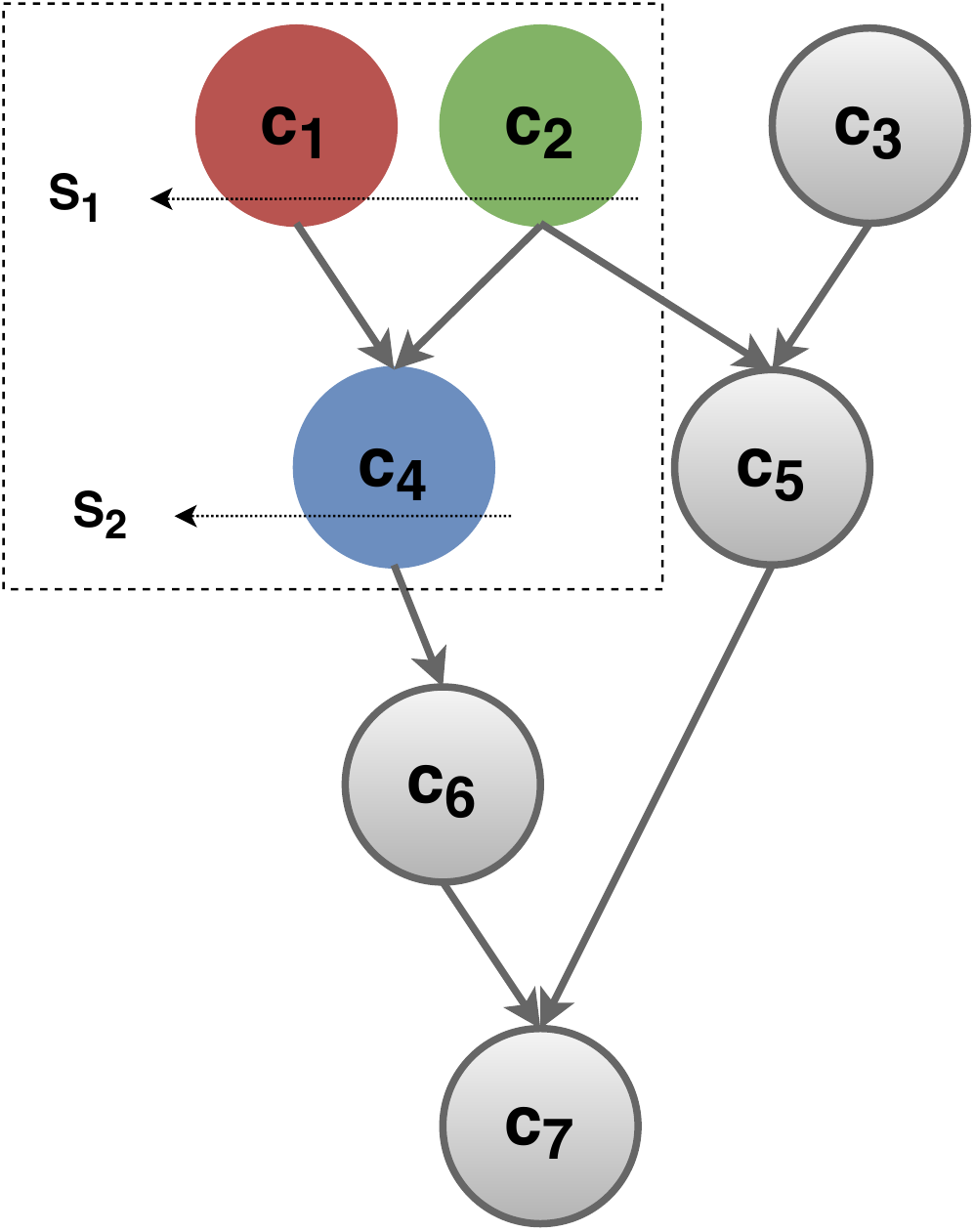}%
		\caption{A multi-stage job with $7$ coflows.}
		\label{CH3jobcoflow}
	\end{subfigure}\hfill
	\begin{subfigure}{0.68\columnwidth}
		\vspace{23pt}
		\includegraphics[width=2.6 in,height=0.8in]{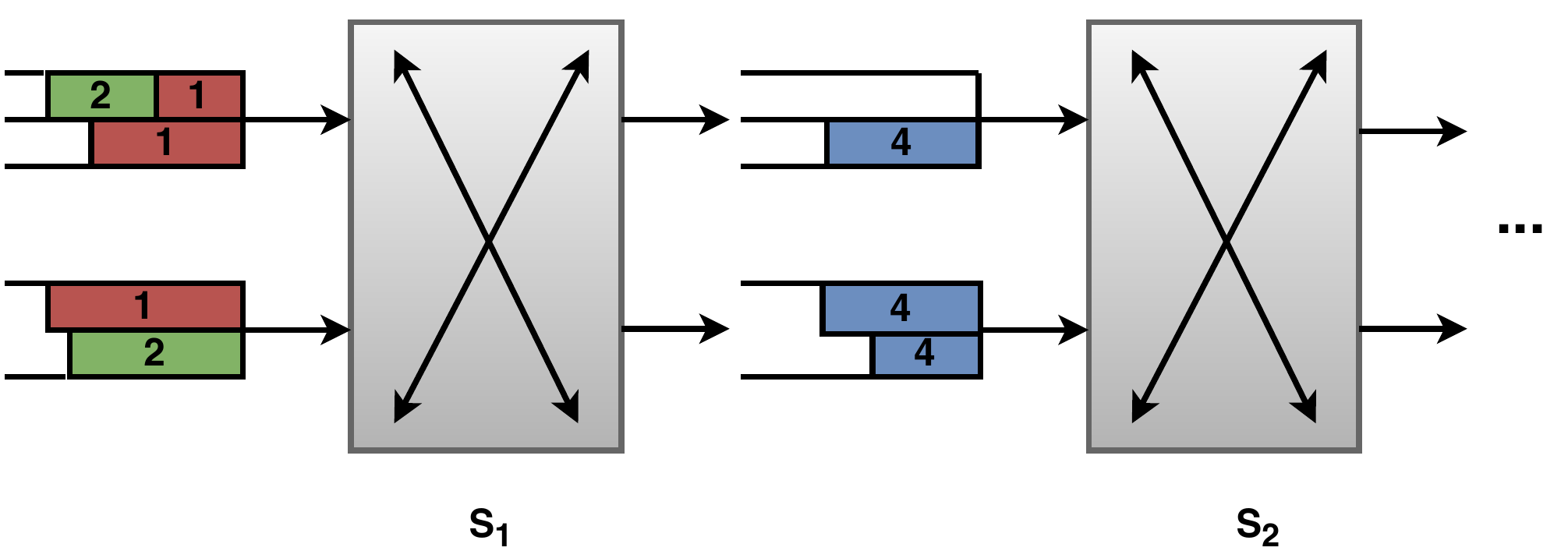}%
		\vspace{14pt}
		\caption{Flows of coflows $1$, $2$, and $4$ and their dependencies in a $2 \times 2$ switch.}
		
		\label{CH3coflowswitch}
	\end{subfigure}\hfill
	\caption{A multi-stage job in a $2 \times 2$ switch. Part of the DAG (in the dashed box) consisting of coflows $1$, $2$, and $4$ is shown in the switch. Coflows $1$ and $2$ can share the network resources at the same time because they are independent (see $\text{S}_1$). Once all their flows are transmitted, flows of coflow $4$ will be ready to be transmitted ($\text{S}_2$ after $\text{S}_1$).}
	%\vspace{-15pt}
	\label{CH3multistagecoflow}
\end{figure}

\subsection{Related Work}
The problem considered in this paper can be thought of as a generalization of coflow scheduling that has been widely studied from both theory and system perspectives~\cite{Chowdhury2014,zhao2015rapier,chowdhury2015efficient,qiu2015minimizing,ahmadi2017scheduling,shafiee2018improved,agarwal2018sincronia,Chowdhury2019,im2019matroid,jahanjou2017asymptotically}. 
%have been proposed to solve it efficiently. 
However, there are only a few works~\cite{tian2019scheduling,tian2018scheduling,chowdhury2015efficient,liu2016scheduling} that consider the multi-stage generalization, with only one algorithm with theoretical performance guarantee~\cite{tian2019scheduling,tian2018scheduling}. Among the heuristics, Aalo~\cite{chowdhury2015efficient} mainly focused on coflow scheduling problem and only provides a brief heuristic to incorporate the multi-stage case. The paper~\cite{liu2016scheduling} proposed a two-level scheduling method based on the most-bottleneck-first heuristic to find the jobs to schedule at each round, and a weighted fair scheduling scheme for intra-job coflow scheduling.
 
The recent papers~\cite{tian2019scheduling,tian2018scheduling} are the most relevant to our work. They consider the problem of scheduling multi-stage job (with Starts-After dependency) to minimize the total weighted job completion times and provide an LP (Linear Program)-based algorithm with $O(m)$ approximation ratio. This algorithm utilizes the technique based on ordering variables, that was also used for coflow scheduling. Their analysis for this algorithm relies on aggregating the load on all the $m$ servers which results in the loss of $O(m)$ in the approximation ratio. \textit{In this paper, we exponentially improve this result by proposing an algorithm that achieves an approximation ratio of $O(\mu g(m))$}, where $\mu$ is the maximum number of coflows in a job, and $g(m)=\log(m)/\log(\log(m))$. Moreover, in the case that the multi-stage job's dependency graph is a rooted tree, we propose an algorithm that achieves an approximation ratio of $O(\sqrt{\mu} g(m) h(m,\mu))$, where $h(m,\mu)=\log(m\mu)/(\log(\log(m\mu))$. We would like to emphasize that the $O(m)$ approximation in~\cite{tian2019scheduling,tian2018scheduling} will not improve if the graph is a rooted tree rather than a general DAG. Note that in practice, the number of coflows in a job is some constant which is much smaller than the number of servers in real-world data centers with hundreds of thousands of servers, i.e., $\mu \ll m$. Also, unlike the $O(m)$ algorithm~\cite{tian2019scheduling,tian2018scheduling}, both of our algorithms are completely combinatorial and do not need to solve a linear program explicitly, hence reducing the complexity. 
A key reason behind the performance improvement in our algorithms is that they utilize the network resources more efficiently by interleaving schedules of coflows of different jobs, unlike the $O(m)$ algorithm~\cite{tian2019scheduling,tian2018scheduling} that schedules coflows one at a time.

Since we represent the dependencies between coflows of a multi-stage job with a \textit{Directed Acyclic Graph} (DAG), DAG scheduling problem is a related line of work. In traditional DAG scheduling, each node represents a task with some processing time and an edge between two nodes indicates the tasks' dependency. There has been extensive results on DAG scheduling problem (DAG-SP) where the goal is to assign tasks to machines in order to minimize the DAG's completion time~\cite{queyranne2006approximation,li2020scheduling,grandl2016graphene,kwok1999static,graham1969bounds,kwok1996dynamic}.

There are also results on DAG-shop scheduling problem (DAG-SSP)~\cite{coffman1976computer,shmoys1994improved,goldberg2001better,schmidt1995chernoff} in which, unlike the DAG-SP, the machine on which each task has to be processed is fixed and no two tasks of the same job can be processed simultaneously.  Our problem of scheduling coflow DAGs is different from the aforementioned problems in several aspects: First, a node in our DAG represent a coflow which itself is a collection of data flows, each with a given pair of source-destination servers. Such couplings are fundamentally different from DAG-SP.  
Second, flows of the same coflow and different unrelated coflows can be scheduled at the same time, which is fundamentally different from DAG-SSP. Hence, algorithms from DAG-SP and DAG-SSP cannot be applied to our problem.

\subsection{Main Contributions} Define $g(m):=\log(m)/\log(\log(m))$, and $h(m, \mu):=\log(m\mu)/\log(\log(m\mu))$. Our main results in this paper can be summarized as follows.

\begin{enumerate}[leftmargin=*] 
	\item[1.] We first prove that even scheduling a multi-stage job to minimize its completion time (makespan) is NP-hard. We then propose an algorithm for minimizing the time to schedule a given set of multi-stage jobs. Our algorithm runs in polynomial time and constructs a schedule in which the makespan is within $O(\mu g(m))$ of the optimal solution for the case that jobs have general DAGs, and $O(\sqrt{\mu} g(m) h(m,\mu))$ when each job is represented as a rooted tree. The algorithms rely on random delaying and merging the greedy schedules of jobs, followed by enforcing the bandwidth constraints. 
	\item[2.] We propose two approximation algorithms for minimizing the total weighted completion time of a given set of multi-stage jobs. For general DAGs, the approximation ratio of our algorithm is $O(\mu g(m))$. For the case of rooted trees, the ratio is improved to $O(\sqrt{\mu} g(m) h(m,\mu))$. Our algorithms are completely combinatorial and do not rely on an explicit solution of a linear program (LP), thus reducing the complexity dramatically. Our approximation algorithms are significant improvements over the LP-based $O(m)$-algorithm of~\cite{tian2019scheduling,tian2018scheduling}.
	\item[3.] To demonstrate the gains in practice, we present extensive simulation results using real traffic traces. The results indicate that our algorithms outperform the $O(m)$-algorithm~\cite{tian2019scheduling,tian2018scheduling} by up to $36 \%$ and $53 \%$ for general DAGs and rooted trees, respectively, in the same settings. 
	\item[4.] We illustrate the existence of instances for which the optimal makespan for a single job with a general DAG is $\Omega(\sqrt{\mu})$ factor larger than two lower bounds for the problem.
	\end{enumerate}
\section{Model and Problem Statement}
\label{CH3model} 
\textit{Network Model:} We consider a cluster of $m$ servers, denoted by the set $\calM$. Each server  has $2$ communication links, one input and one output link with capacity (bandwidth) constraints. For simplicity, we assume all links have equal capacity and without loss of generality, we assume that all the link capacities are normalized to one. Similar to the models in~\cite{shafiee2018improved,Chowdhury2014,qiu2015minimizing,tian2018scheduling}, we abstract out the data center network as one giant non-blocking switch. Each server in the set $\calM$ is represented by one sender server and one receiver server. Therefore, we have an $m \times m$ switch, where the $m$ sender (source) servers on one side, denoted by set $\calM_S$, connected to $m$ receiver (destination) servers on the other side, denoted by set $\calM_R$.

\textit{Job Model:} There is a collection of $n$ multi-stage jobs, denoted by the set $\calN$. 
Each job $j\in \calN$ consists of $\mu_j$ coflows that need to be processed in a given (partial) order. 
Each coflow $c$ of job $j$ is a collection of flows denoted by an $m \times m$ demand matrix $\calD^{(cj)}$. Every flow is a quadruple $(s,r,c,j)$, where $s \in \calM_S$ is its source server, $r \in \calM_R$ is its destination server, and $c$ and $j$ are the coflow and the job to which it belongs. The size of flow $(s,r,c,j)$, denoted by $d_{sr}^{cj}$, is the $(s,r)$-th element of the matrix $\calD^{(cj)}$. For two coflows $c_1, c_2 \in j$, we say coflow $c_1$ \textit{precedes} coflow $c_2$, and denote it by $c_1 \prec c_2$, if all flows of $\calD^{(c_1j)}$ should complete before we can start scheduling any flow of $\calD^{(c_2j)}$ (i.e., Starts-After dependency). We use a DAG $G_j$ to represent the dependency (partial ordering) among the coflows in job $j$, i.e., nodes in $G_j$ represent the coflows of job $j$ and directed edges represent the dependency (precedence constraint) between them. We use $\mu=\max_{j \in \calN} \mu_j$ to denote the maximum number of coflows in any job. 
Figure~\ref{CH3multistagecoflow} illustrates a multi-stage job in a $2 \times 2$ switch network.

\textit{Scheduling Constraints:} Without loss of generality, we assume file sizes of flows are integers and the smallest file size is at least one which is referred to as a \emph{packet}. Scheduling decisions are restricted to such data units (packets), i.e., each sender server can send at most one packet in every time unit (time slot) and each receiver server can receive at most one packet in every time slot, and the feasible schedule at any time slot has to form a \textit{matching} of the switch's bipartite graph. Note that the links' capacity constraints are captured by matching constraints, similarly to models in~\cite{tian2019scheduling,shafiee2018improved,qiu2015minimizing,tian2018scheduling}. 
Further, in a valid schedule, all the precedence constraints in any DAG $G_i$ have to be respected.

\textit{Optimization Objective:} A job is called completed only when all of its coflows finish their processing. Define $C_{cj}$ to be the completion time of coflow $(c,j)$. Then, the completion time of job $j$, denoted by $C_j$, is equal to completion time of its last coflow, i.e., $C_j=\max_{c \in j} C_{cj}$.
The total time that it takes to complete all the jobs in the set $\calN$ is called \textit{makespan} which we denote it by $\calT^{(\calN)}$. Note that by definition $\calT^{(\calN)}=\max_{j \in \calN} C_j$. Given a set of jobs, our first objective is to minimize $\calT^{(\calN)}$. Next, given positive weights $w_j$, $j \in \calN$, we consider the problem of minimizing \textit{the sum of weighted job completion times} defined by $\sum_{j \in \calN} w_j C_j$. The weights can capture different priority for different jobs. 
In the special case that all the weights are equal, the problem is equivalent to minimizing the average job completion time.

\section{Definitions and Preliminaries}
We first present a few definitions and preliminaries regarding complexity of the scheduling problem, and how to optimally schedule a single job whose graph is a path using known results.
\subsection{Definitions}
\begin{definition}[Server Load and Effective Size of a Coflow] \label{CH3def1}
	Suppose a coflow $\calD=\big (d_{sr} \big )_{s,r=1}^m$ is given. Define
	\be
	\label{CH3restrictedsize}
	d_s= \sum_{r \in \calM_R} d_{sr};\ \ d_{r}= \sum_{s \in \calM_S} d_{sr},
	\ee
then $d_s$ ($d_r$) is called the load that needs to be sent from sender server $s$ (received at receiver server $r$) for coflow $\calD$. Further, the effective size of the coflow is defined as
	\begin{equation}
	\label{CH3coflowload}
	D=\max \{\max_{s \in \calM_S} d_s , \max_{r \in \calM_R} d_{r}\}.
	\end{equation}
\end{definition}

Thus $D$ is the maximum load that needs to be sent or received by a server for the coflow. Note that, due to normalized capacity constraints on links, \textit{we need at least $D$ time slots to process all its flows}. 
%As we see later, it is known that we can indeed schedule a coflow in $D$ time.
\begin{definition}[Aggregate Size of a Set of Coflows] \label{CH3def4}
 Given a set of coflows, consider an aggregate coflow $\calD=\sum_{c}\calD^{c}$ for $c$'s in the set. Then, aggregate size of the set is defined as the effective size of $\calD$ based on Definition~\ref{CH3def1}. Similarly, aggregate size of job $j$ is defined as the aggregate size of its set of coflows and is denoted by $\Delta_j$ .
 \end{definition}
 
\begin{definition}[Size of a Directed Path and Critical Path in a Job] \label{CH3def2}
	Given a job $j \in \calN$ and its rooted tree $G_j$, size of a directed path $p$ in $G_j$ is defined as $T_{p,j}=\sum_{c \in p} D^{(cj)}$, where $D^{(cj)}$ is the effective size of coflow $c$ of job $j$, and $c \in p$ denotes that coflow $c$ appears in path $p$.
	
	Critical path of job $j$ is a directed path that has the maximum size among all the directed paths in $G_j$. We use $T_j=\max_{p} T_{p,j}$ to denote its size.
\end{definition}
\begin{definition}[A Path Job] \label{CH3def3}
	We say a job is a path job if its corresponding dependency graph is a path, i.e., there is a total ordering of its coflows according to which they should get scheduled.
\end{definition}
\begin{definition}[A Rooted-Tree Job] \label{rootedtreedef}
We say a job is a rooted-tree job if its corresponding dependency graph is a rooted tree, i.e., it is a tree and there is a unique node called \emph{the root} and either all the directed edges point away from this node (fan-out tree) or point toward this node (fan-in tree). For each rooted-tree job $G_j$, we use $R_j$ to denote its root.
\end{definition}
\begin{definition}[Height and Coflow Sets for a Job]\label{CH3def5}
Given a job $j \in \calN$ and its graph $G_j$, we define $H_j$ to be the height of $G_j$, i.e., the length of the longest path in $G_j$ (in terms of \textit{number} of coflows). Further, we define $S_0$ to denote the set of coflows with no in-edge. Similarly, define $S_i$, $i=1,\dots,H_j-1$ to denote the set of coflows whose longest path to some coflow of set $S_0$ has length $i$. Note that coflows in $G_j$ are partitioned by $S_i$s, i.e., $\cup_{i=0}^{H_j-1} S_i=G_j$ and $S_i \cap S_{i^\prime}=\varnothing$, for $i, i^\prime=0,\dots,H_j-1$, $i \neq i^\prime$. We refer to $S_i$s as coflow sets of job $j$.
\end{definition}

%\mehrnoosh{In Definition~\ref{CH3def2} and Definition~\ref{CH3def5}, note that while $H_j$, height of graph $G_j$ is defined based on number of coflows, $T_j$, size of $G_j$'s critical path, is defined based on effective size of coflows.}
\subsection{Complexity of Minimizing Makespan}
\label{CH3complexity}
Scheduling a multi-stage job to minimize its completion time (makespan) is NP-hard. To show this, we consider a single multi-stage job whose DAG is a rooted tree. The proof is through a reduction from preemptive makespan minimization for Flow Shop Problem (FSP) which is known to be NP-complete~\cite{gonzalez1978flowshop,lawler1993sequencing,garey1976complexity}. This is in contrast to traditional coflow scheduling where a single coflow can be scheduled optimally as we see in Section~\ref{CH3mkspathjob}. This also shows that the known complexity results for preemptive FSP holds for single multi-stage job scheduling.
For FSP, there is no algorithm with an approximation ratio less than $5/4$, unless P = NP~\cite{williamson1997short}.

\begin{theorem}
	\label{CH3nphard}
	Given a single multi-stage job represented by a rooted tree, scheduling its coflows to minimize makespan over an $m \times m$ switch is NP-hard.
\end{theorem}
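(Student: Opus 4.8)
\emph{Approach.} The plan is to reduce from preemptive makespan minimization for the flow shop problem (FSP), which is NP-complete~\cite{gonzalez1978flowshop}. An FSP instance has machines $M_1,\dots,M_k$ and jobs $1,\dots,\ell$; job $j$ needs $p_{ij}$ units of work on $M_i$, the tasks of job $j$ must run in the order $M_1,M_2,\dots,M_k$, a task may be preempted, and the objective is to finish all tasks as early as possible. Given such an instance together with a target value $B$, I will build in polynomial time a single rooted-tree multi-stage job on an $m\times m$ switch, with $m=k+1$, whose optimal makespan is $B+1$ exactly when the FSP instance admits a schedule of makespan $B$; since an optimal (or even a better-than-$\tfrac{5}{4}$-approximate) makespan algorithm for the coflow job would then solve (approximate) preemptive FSP, NP-hardness follows.

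\emph{The construction.} Identify machine $M_i$ with the diagonal link from sender $s_i$ to receiver $r_i$, and reserve one further link $(s_0,r_0)$. For every job $j$ and every $i$, create a coflow $c_{ij}$ whose demand matrix has $p_{ij}$ in entry $(s_i,r_i)$ and zeros elsewhere; by Definition~\ref{CH3def1} its effective size is exactly $p_{ij}$, so processing it takes precisely $p_{ij}$ time slots, each of which uses link $(s_i,r_i)$, and (since scheduling is preemptive at slot granularity) those slots may be arbitrary. Add a root coflow $c_0$ consisting of a single packet on $(s_0,r_0)$. The dependency graph $G$ has root $c_0$, an edge $c_0\to c_{1j}$ for each $j$, and an edge $c_{ij}\to c_{(i+1)j}$ for each $j$ and $i=1,\dots,k-1$; this is a fan-out rooted tree with $k\ell+1$ nodes, so the reduction runs in polynomial time. (If empty coflows are permitted, dropping $c_0$'s packet makes the two optimal makespans coincide exactly.)

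\emph{Equivalence of schedules.} From a preemptive FSP schedule of makespan $B$: send $c_0$ in slot $1$; in slot $t+1$, for each machine $M_i$ busy during FSP time unit $t$ on a piece of job $j$, activate link $(s_i,r_i)$ and move one packet of $c_{ij}$. Distinct machines use distinct senders and distinct receivers, so each slot is a valid matching; since FSP runs job $j$'s work on $M_i$ before that on $M_{i+1}$, the resulting coflow schedule obeys every edge $c_{ij}\to c_{(i+1)j}$; its makespan is $B+1$. Conversely, from any valid coflow schedule of makespan $B'$, delete from each matching every active link other than the $(s_i,r_i)$'s (no packet can cross any other link); now at most one packet of one coflow $c_{ij}$ crosses $M_i$ per slot, so setting ``$M_i$ works on job $j$ during that time unit'' yields a preemptive FSP assignment in which task $(i,j)$ receives exactly $p_{ij}$ units (because $c_{ij}$ was given $p_{ij}$ slots) and job $j$ completes on $M_i$ before it starts on $M_{i+1}$ (because $G$ forces $c_{ij}$ to finish before $c_{(i+1)j}$ starts); its makespan is at most $B'-1$. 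Hence $\mathrm{OPT}_{\mathrm{coflow}}=\mathrm{OPT}_{\mathrm{FSP}}+1$, which completes the reduction.

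\emph{Where the difficulty lies.} The construction is immediate; the care goes into the converse direction, namely justifying that one may restrict to diagonal links (so the matching constraint collapses onto ``at most one job per machine per slot''), checking that each $c_{ij}$ consumes exactly $p_{ij}$ slots so the two instances carry identical amounts of work, and verifying that the Starts-After chain $c_{1j}\prec\dots\prec c_{kj}$ encodes precisely the flow-shop routing constraint, in particular that it automatically prevents job $j$ from occupying two machines simultaneously. The remaining points, including the innocuous $+1$ offset caused by the root, are bookkeeping.
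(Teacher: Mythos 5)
Your proof is correct and follows essentially the same route as the paper's: a reduction from preemptive flow-shop makespan in which each FSP job becomes a branch of a rooted tree hanging off a dummy root coflow, each task becomes a single-flow coflow of size $p_{ij}$ on a link dedicated to its machine, and the links for distinct machines are chosen so that they always form a matching. The only differences are cosmetic (you use diagonal links $(s_i,r_i)$ with one extra server for the root, whereas the paper uses links $(s_l, r_{l+1})$ on the same $m$ servers) and that you spell out both directions of the equivalence, which the paper leaves implicit.
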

\begin{proof}
	We prove the theorem using a reduction from preemptive makespan minimization for Flow Shop Problem (FSP).
	In FSP, there is a set of $n$ jobs each of which consists of $m$ tasks that need to be processed \textit{in a given order} on $m$ machines. Task $i$ of job $j$ must be scheduled on machine $i$ for $p_{ij}$ amount of time (all the jobs require the same order on their tasks.). 
	Preemptive makespan minimization of FSP is known to be NP-complete~\cite{gonzalez1978flowshop,garey1976complexity}. 
	
	Consider an instance $I$ of FSP with $n$ jobs and $m$ machines. We convert the makespan minimization for $I$ to makespan minimization of an instance $I^\prime$ of a single multi-coflow job with a rooted tree topology. The instance $I^\prime$ consists of $m$ source and $m$ destination servers and $n \times m+1$ coflows where each has a single flow. Further, the corresponding dependency graph of $I^\prime$ is a tree with a root node and $n$ branches. The root node is a dummy coflow which has one flow of size one from source server $2$ (or any other source server) to destination server $1$. Each of the $n$ branches of the tree represents a job in $I$ and consists of $m$ coflows. The nodes in the $l$-th level of the tree, $l=1,\dots,m-1$ (the level of root node is zero) represent coflows that each has a single flow from source server $l$ to destination server $l+1$ with sizes $p_{lj}$, $j=1,\dots,n$. Similarly, the nodes at level $m$ are coflows with a single flow from source node $m$ to destination server $1$ with sizes $p_{mj}$, $j=1,\dots,n$. 
	
	If one can find the optimal makespan for the instance $I^\prime$ of a single multi-coflow job, the solution gives an optimal scheduling for the instance $I$ by ignoring the first time unit that is used to schedule the dummy coflow in $I^\prime$. Therefore, the theorem is proved. 
	\end{proof}

Using Theorem~\ref{CH3nphard} it is easy to see that minimizing makespan for multiple jobs and total weighted completion time of jobs are NP-hard. 
\subsection{Optimal Makespan for A Path Job}
\label{CH3mkspathjob}
In this section, we first show how one can schedule a single coflow optimally and in a polynomial time using the previous results. 
As a result of Birkhoff-von Neumann Theorem~\cite{birkhoff1946tres}, given a coflow $\calD=\big (d_{sr} \big )_{s,r=1}^m$ there exists a polynomial-time algorithm which finishes processing of all the flows in an interval whose length is equal to the coflow effective size $D$ (see Equation~\dref{CH3coflowload}). 

We present one example of such an algorithm in Algorithm~\ref{CH3Neumann}, which was proposed originally in~\cite{lawler1978preemptive}, and refer to it as {\BNA} that stands for Birkhoff-von Neumann Algorithm. {\BNA} returns a list of matchings $L$ and a list of times $\tau$. To schedule flows of $\calD$, we use each matching $L(k)$ for $\tau(k+1)-\tau(k)$ time units, for $k=1, \dots , |L|$.
 
%\begin{algorithm}[h]
%	\captionof{algorithm}{{\BNA} for  Scheduling a Single Coflow}
%	\label{CH3Neumann}
%	\begin{algorithmic}
%		\State  Given a coflow $\calD=\big (d_{ij} \big )_{i,j=1}^m$:					\begin{algorithmic}[1]
%		\State $L=\varnothing$, $\tau=[0]$ \Comment{$L$ is the list of matchings and $\tau$ is the list of starting times for each matching.}
%			\While {$\calD \neq \mathbf{0}$}
%			\State For any $s \in \calM_S$ and $r \in \calM_R$, compute $d_{s}$, $d_r$, and $D$ according to Definition~\ref{CH3def1}.
%			\State Find the set of tight nodes as $\Omega=(\arg\max_{s \in \calM_S} d_{s}) \cup (\arg\max_{r \in \calM_R} d_{r}).$
%			\State Find a matching $M$ among the source and destination nodes such that all the nodes in $\Omega$ are involved.
%			\State $t=\min\Big\{ \min_{(s,r) \in M} d_{sr}, \min_{\substack{s: (s,r) \notin M}} (D-d_s), \min_{\substack{r: (s,r) \notin M}} (D-d_r)\Big\}$.
%			
%			\State Add $M$ and $t+\tau(end)$ to the end of lists $L$ and $\tau$, respectively.
%			\State Update the flow sizes as $d_{sr} \leftarrow d_{sr}-t,  \ \forall (s,r) \in M$. 
%			\EndWhile
%			\State Return $L$ and $\tau$.
%		\end{algorithmic}
%	\end{algorithmic}
%\end{algorithm}

\begin{algorithm}[!h]
	\captionof{algorithm}{{\BNA} for Single Coflow Scheduling}
	\label{CH3Neumann}
Given a coflow $\calD=\big (d_{sr} \big )_{s,r=1}^m$:
	\begin{enumerate}[leftmargin=*,label=\arabic*.]
		\item Let $L$ be the list of matchings and $\tau$ be the list of starting times for each matching. Initially, 
$L=\varnothing$, $\tau=[0]$.
			
		\item For any $s \in \calM_S$ and $r \in \calM_R$, compute $d_{s}$, $d_r$, and $D$ according to Definition~\ref{CH3def1}.
		\item Find the set of tight nodes as $\Omega=(\arg\max_{s \in \calM_s} d_{s}) \cup (\arg\max_{r \in \calM_R} d_{r}).$
		\item Find a matching $M$ among the source and destination nodes such that all the nodes in $\Omega$ are involved.
		\item Find 
		\ben
		t&=\min\Big\{ \min_{(s,r) \in M} d_{sr}, \min_{\substack{s: (s,r) \notin M}} (D-d_s),\\
		& \min_{\substack{r: (s,r) \notin M}} (D-d_r)\Big\}
		\een
		\item Add $M$ and $t+\tau [\text{end}]$ to the lists $L$ and $\tau$, respectively.
		\item Update the flow sizes as $d_{sr} \leftarrow d_{sr}-t,  \ \forall (s,r) \in M$. 
		\item While $\calD \neq \mathbf{0}$, repeat Steps $2-7$.
		\item Return $L$ and $\tau$.
	\end{enumerate}
\end{algorithm}
It is immediate that the optimal makespan for a path job can be found in polynomial time, by optimally scheduling its coflows successively using {\BNA}.

\begin{lemma}
	\label{CH3optmkspan}
	Optimal makespan for a path job $j$ is equal to $\sum_{c=1}^{\mu_j} D^{(cj)}$ where $D^{(cj)}$ is the effective size of coflow $c$ of job $j$ and the corresponding schedule can be constructed in polynomial time by  successively using {\BNA}.
\end{lemma}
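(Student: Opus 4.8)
The plan is to prove the claimed value is simultaneously a lower bound on the makespan of \emph{every} valid schedule of the path job $j$ and an upper bound attained by the successive-{\BNA} construction; the equality and the polynomial-time claim then follow.

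For the lower bound, I would fix an arbitrary valid schedule, write the (totally ordered) coflows of $j$ as $1 \prec 2 \prec \cdots \prec \mu_j$, and let $C_{cj}$ be the completion time of coflow $c$ in that schedule, with the convention $C_{0j}=0$. The Starts-After dependency forces every packet of coflow $c$ to be transmitted after $C_{c-1,j}$, so all packets of coflow $c$ live in the time window $(C_{c-1,j}, C_{cj}]$. Picking a server $s^\star \in \calM_S \cup \calM_R$ that realizes the effective size of coflow $c$ (i.e. carries load $D^{(cj)}$ for it) and using the matching constraint that $s^\star$ handles at most one packet per slot, at least $D^{(cj)}$ distinct slots of $(C_{c-1,j}, C_{cj}]$ are used by $s^\star$; hence $C_{cj} \ge C_{c-1,j} + D^{(cj)}$. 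Unrolling this recursion gives $C_{\mu_j,j} \ge \sum_{c=1}^{\mu_j} D^{(cj)}$, and since the job finishes when its last coflow does, the makespan is at least $\sum_{c=1}^{\mu_j} D^{(cj)}$.

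For the matching upper bound, I would invoke the Birkhoff--von Neumann guarantee realized by {\BNA}: each coflow $c$, considered in isolation, can be scheduled within an interval of length exactly $D^{(cj)}$. I then concatenate these schedules in precedence order, running {\BNA} on coflow $c$ over the interval $[\sum_{c'<c} D^{(c'j)},\, \sum_{c'\le c} D^{(c'j)}]$. This is a feasible schedule: within each interval the matchings returned by {\BNA} respect the switch's matching constraints, and coflow $c$ begins exactly when coflow $c-1$ completes, so all Starts-After constraints are met. Its makespan is $\sum_{c=1}^{\mu_j} D^{(cj)}$, meeting the lower bound, so this schedule is optimal. Since {\BNA} runs in polynomial time and is invoked $\mu_j$ times, the overall construction is polynomial.

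The argument is short, and the only step needing care is the lower-bound induction: one must argue that preemption gives no advantage — the size-realizing server of coflow $c$ still needs $D^{(cj)}$ fresh slots strictly after coflow $c-1$ has \emph{fully} completed — and one should state the slot-counting convention cleanly so that the (harmless) off-by-one between "a slot is occupied" and "the completion time" does not muddle the chain $C_{cj} \ge C_{c-1,j} + D^{(cj)}$.
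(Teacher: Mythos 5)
Your proof is correct and follows exactly the argument the paper intends (and treats as immediate): the per-coflow effective size $D^{(cj)}$ is a lower bound because some server carries that much load and the Starts-After constraint forces the coflows' time windows to be disjoint and ordered, while successive applications of {\BNA} achieve the sum exactly. Your write-up simply makes explicit the slot-counting recursion $C_{cj}\ge C_{c-1,j}+D^{(cj)}$ that the paper leaves unstated; there is no gap.
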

We will use {\BNA} in our algorithms in the rest of the paper. 
\section{Makespan Minimization for Scheduling Multiple General DAG Jobs}
\label{CH3mkspan4multijobs0}
\subsection{{\DMA} (Delay-and-Merge Algorithm) }
\label{CH3mksgenjob}
For each job $j$, we consider a topological sorting of nodes in $G_j$, i.e., we sort its coflows (nodes) such that for every precedence constraint $c_1 \prec c_2$ (directed edge $c_1 \to c_2$), coflow $c_1$ appears before $c_2$ in the ordering. This ordering is not unique and can be found in polynomial time~\cite{knuth1997art}. For example, for the job in Figure~\ref{CH3jobcoflow}, the orderings $c_1, c_2, c_3, c_4,c_5,c_6, c_7$ and $c_2, c_3,c_1,c_5,c_4,c_6,c_7$ are both valid topological sorts. We then re-index coflows from $1$ to $\mu_j$ according to this ordering.

Further, we use $\Delta_j$ to denote the maximum load that a server should send or receive considering all of job $j$'s coflows. Formally, for job $j$, consider an aggregate coflow $\calD^j=\sum_{c=1}^{\mu_j} \calD^{(cj)}$. Then, $\Delta_j$ is the effective size of $\calD^j$ based on Definition~\ref{CH3def1}. We also use $\Delta$ to denote the maximum load a node has to send or receive \textit{considering all the jobs}. 
% \begin{figure*}[h]
%	\centering
%	\begin{subfigure}{0.3\textwidth}
%		\includegraphics[width=\textwidth]{DMA0.pdf}%
%		\vspace{4pt}
%		\caption{Performance of {\OJOD} and {\textsf{O(m)Alg}} with and without backfilling for different numbers of servers, and $\bar{\mu}=5$.}
%		\label{CH3equalweightserver}%%
%	\end{subfigure}\hfill
%	\begin{subfigure}{0.6\textwidth}
%		\vspace{8pt}
%		\includegraphics[width=\textwidth,height=1.8 in]{DMA.pdf}%
%		\vspace{3.5pt}
%		\caption{Performance of {\OJOD} and {\textsf{O(m)Alg}} with and without backfilling for different average numbers of coflows per job, and $m=150$.}
%		\label{CH3equalweightcoflow}%
%	\end{subfigure}\hfill
%	\caption{Performance of {\OJOD} and {\textsf{O(m)Alg}} with and without backfilling.}
%	%\vspace{-15pt}
%\end{figure*}

 Algorithm~\ref{CH3DMA} ({\DMA}) describes our algorithm for scheduling multiple general DAG jobs. 
\begin{algorithm}[!h]
	\captionof{algorithm}{{\DMA} for Scheduling a General DAG $G_j$}
	\label{CH3DMA}
	\begin{enumerate}[leftmargin=*,label=\arabic*.]
		\item For each job $j$, compute a topological sorting of nodes in $G_j$. Then, find a feasible schedule by optimally scheduling its coflows successively using {\BNA}, i.e., $L_{cj},\tau_{cj}=$ {\BNA}($\calD^{(cj)}$), for coflow $c=1, \dots, \mu_j $. We refer to these schedules as \textit{isolated} schedules of jobs. 
		
		\item Delay each isolated schedule by a random integer time chosen uniformly in $[0, \Delta/\beta]$, for a constant $\beta > 1/e$, independently of other isolated schedules, i.e.,  $\tau_{cj} \leftarrow \tau_{cj}+t_{j}$ where $t_{j}$ is the random delay of job $j$.
		\item Greedily merge the delayed isolated schedules. I.e., for any time slot $t$, add corresponding matchings of different jobs.
		\item Construct a feasible merged schedule. Let $\alpha_t \geq 1$ denote the maximum number of packets that a server needs to send or receive at time slot $t$ in the merged schedule in Step 3. For each time slot $t$, consider an interval of length $\alpha_t$, and use {\BNA} to feasibly schedule all its packets.
	\end{enumerate}
\end{algorithm}

Note that in {\DMA}, in each of the isolated schedules in Step 1 all the precedence constraints among coflows are respected. However, in Step 3, the link capacity constraints may be violated. In Step 4, in the final schedule, both link capacity constraints and precedence constraints among coflows are satisfied. The parameter $\beta >1/e$ in {\DMA} is a constant and has no effect on the theoretical result. However, it can be used to control the range of delays in practice.
 \begin{figure}[t]
	\centering
	\includegraphics[width=3.6in,height=1.6in]{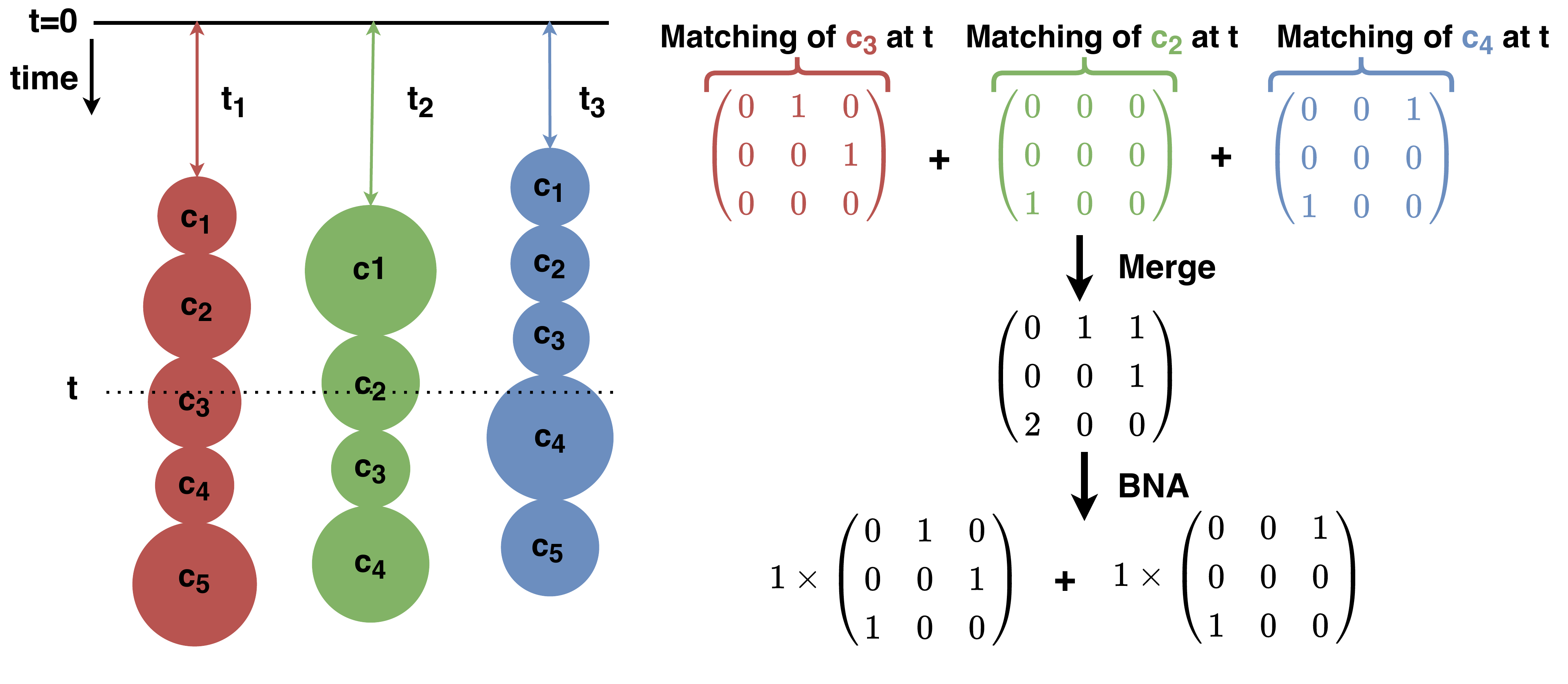}%
	\caption{Applying {\DMA} on $3$ multi-stage jobs. On the left side, a topological ordering and a random delay for each job are computed. On the right side, the merging procedure and {\BNA} output is shown for some time $t$.}
	\label{CH3DMAproc}%
	%\vspace{-15pt}
\end{figure}

As an illustration, Figure~\ref{CH3DMAproc} shows the procedure of {\DMA} on 3 multi-stage jobs in a $3 \times 3$ switch network. On the left side, {\DMA} computes a topological ordering for the coflows of each job and chooses a random delay for each job. The diameter of each node is proportional to the effective size of its corresponding coflow. Consider time slot $t$,  {\DMA} merges the matchings of coflow $3$ of the red job, coflow $2$ of the green job, and coflow $4$ of the blue job, and inputs the result to {\BNA}. Then, {\BNA} computes two matchings, where each should be used for one time slot.

\subsection{Performance Guarantee of {\DMA}} The following theorem states the main result regarding the performance of {\DMA}. The proof can be found in Section~\ref{proofDMA}.
\begin{theorem}
	\label{CH3Thm:mksgenjob}
	Given a set $\calN$ of jobs with general DAGs, {\DMA} runs in polynomial time and provides a feasible solution whose makespan $\calT^{(\calN)}$ is at most $O(\mu g(m))$ of the optimal makespan with high probability, where $g(m)=\log(m)/\log(\log(m))$.
\end{theorem}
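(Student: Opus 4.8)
The plan is to bound the makespan of the schedule produced by {\DMA} in terms of two natural lower bounds on the optimal makespan, and then to account for the loss incurred in each of the four steps of the algorithm. Let $\mathrm{OPT}$ denote the optimal makespan for the instance $\calN$. Two quantities clearly lower-bound $\mathrm{OPT}$: (i) the largest aggregate server load $\Delta=\max_{s,r}\big(\sum_{j}\sum_{c} d^{cj}_{s\cdot},\ \sum_{j}\sum_{c} d^{cj}_{\cdot r}\big)$, since every packet routed through the busiest server must be served one per slot; and (ii) the largest critical-path length $\max_{j} T_j$, since precedence constraints inside a single job force its coflows on any root-to-leaf path to be processed sequentially. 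Note also that the length of the isolated schedule of job $j$ from Step~1 is exactly $\sum_{c=1}^{\mu_j} D^{(cj)} \le \mu\, T_j \le \mu\,\mathrm{OPT}$ by Lemma~\ref{CH3optmkspan} and Definition~\ref{CH3def2} (each coflow on the path has effective size at most $T_j$, and there are at most $\mu$ of them); this is where the factor $\mu$ enters.

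First I would handle Steps~1--3. After delaying each isolated schedule by a uniform integer in $[0,\Delta/\beta]$ (Step~2) and merging (Step~3), the resulting (infeasible) schedule occupies the interval $[0,\ \Delta/\beta + \mu\,\mathrm{OPT}]$, so its length is $O(\mu\,\mathrm{OPT})$ using $\Delta \le \mathrm{OPT}$. All precedence constraints are still respected since each job's schedule is only shifted rigidly in time. The only thing broken is the matching (link-capacity) constraint. Second, and this is the crux, I would control $\alpha_t$, the maximum number of packets a single server must send or receive in slot $t$ of the merged schedule. For a fixed server $s$ and slot $t$, the contribution of job $j$ to the load on $s$ is an indicator that job $j$'s (random) delay places one of its flows touching $s$ at time $t$; these indicators are independent across $j$, and their expected sum is at most $\beta$ times (total load on $s$)/$\Delta \le \beta$ by the choice of delay range. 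A Chernoff bound for sums of independent $0/1$ variables with mean $O(1)$ then gives $\alpha_t = O\!\big(\log(m)/\log\log(m)\big) = O(g(m))$ simultaneously for all $2m$ server sides and all $O(\mu\,\mathrm{OPT})$ slots, with high probability, after a union bound (the number of slots is polynomial in the input, so the $\log$ of it is absorbed into $g(m)$ up to constants, or one argues per-slot and sums). Finally, Step~4 replaces each slot $t$ by an interval of length $\alpha_t$ and schedules its packets feasibly via {\BNA} (valid since $\alpha_t$ is exactly the effective size of the aggregated one-slot coflow, by Definition~\ref{CH3def1} and the Birkhoff--von Neumann guarantee). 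Hence the final makespan is $\sum_t \alpha_t \le \max_t \alpha_t \cdot (\text{number of slots}) = O(g(m)) \cdot O(\mu\,\mathrm{OPT}) = O(\mu\, g(m))\,\mathrm{OPT}$.

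The main obstacle is the concentration step: one must argue that the per-slot, per-server overload $\alpha_t$ is $O(g(m))$ with high probability \emph{uniformly}. The subtlety is that the number of slots in the merged schedule is $\Theta(\mu\,\mathrm{OPT})$, which can be large, so a naive union bound over slots would inflate the deviation beyond $g(m)$ unless one is careful; the resolution is that the Chernoff tail for a sum of independent indicators with mean $O(1)$ decays like $(\log m)^{-\Omega(\log m)} = m^{-\omega(1)}$ at the level $g(m)=\log m/\log\log m$, which beats any polynomial number of slots, giving the claimed "with high probability" bound. A secondary technical point is confirming that the expected load on each server per slot is genuinely $O(1)$: this requires that delaying over a window of length $\Delta/\beta$ spreads job $j$'s load on server $s$ — of total at most $\Delta$ across the whole (isolated, undelayed) horizon — so that the expected number landing in any fixed slot is at most $\beta \cdot (\text{load}_{s}(j)/\Delta)$, and summing over $j$ uses $\sum_j \text{load}_s(j) \le \Delta$; one must also verify that shifting by an integer amount keeps packets aligned to slot boundaries so the counting is exact. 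Everything else — polynomial runtime (topological sort, {\BNA} calls, merging), feasibility of the output, and respecting precedence — follows directly from the structure of the algorithm and Lemma~\ref{CH3optmkspan}.
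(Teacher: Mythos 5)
Your overall strategy---rigid random delays producing independent per-slot indicators with $O(1)$ expected load per server, followed by stretching each slot by the congestion $\alpha_t$---is exactly the paper's, and your bound of $O(\mu\,\mathrm{OPT})$ on the length of the merged schedule is fine. The gap is in how you aggregate the congestion bound over time. You want $\max_t \alpha_t = O(g(m))$ with high probability via a Chernoff bound plus a union bound over all slots, justified by the claim that ``the number of slots is polynomial in the input.'' That claim is false in general: the merged schedule has $\Theta((\mu+1/\beta)\Delta)$ slots, and $\Delta$ is a sum of flow sizes, so it is only pseudo-polynomial---it can be exponential in the binary input size and is not bounded by any function of $m$ and $\mu$. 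The Chernoff tail at level $a\,g(m)$ for a sum of independent indicators with mean $O(1)$ is only about $m^{-\Theta(a)}$, so surviving a union bound over $\Theta(\mu\Delta)$ slots would force $a=\Omega(\log\Delta/\log m)$, destroying the $O(g(m))$ guarantee. The paper (Lemmas~\ref{CH3numcolli1} and~\ref{CH3feaslen}) sidesteps this entirely: it bounds $\mathbb{E}[\alpha_t]=O(g(m))$ for each fixed $t$ via an exponential-moment/Jensen argument, sums these expectations over the $(\mu+1/\beta)\Delta$ slots, and applies Markov's inequality directly to $\sum_t\alpha_t$ (which \emph{is} the final makespan)---no union bound and no uniform control of $\max_t\alpha_t$ is needed. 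Your parenthetical ``or one argues per-slot and sums'' gestures at this, but it is not the argument you actually run; note also that the resulting guarantee is the weaker Markov-type one (probability $1-\epsilon$ at the cost of a factor $1/\epsilon$ in the hidden constant), which is all the theorem's ``with high probability'' in fact delivers.

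A second, smaller issue tied to the same phenomenon: polynomial running time does not ``follow directly from the structure of the algorithm,'' because a literal slot-by-slot execution of Steps 3--4 invokes {\BNA} once per slot, i.e., $\Theta(\mu\Delta)$ times. The paper's Lemma~\ref{CH3timecomplexityDMA} repairs this with an event-driven implementation: time is partitioned into the $O(\mu n m^2)$ maximal intervals on which every job's matching is constant, and {\BNA} is run once per interval on the merged demand. To make your proof complete you should either explicitly assume flow sizes are polynomially bounded in $m$, $n$, $\mu$ (and say so), or adopt both of these fixes.
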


\subsection{De-Randomization}
Step 2 of {\DMA} involves random choices of delays. There exist well-established techniques that one can utilize to de-randomized this step and convert the algorithms to deterministic ones. For instance, one approach for selecting good delays is to cast the problem as a vector selection problem and then apply techniques developed in~\cite{raghavan1987randomized,raghavan1988probabilistic,shmoys1994improved}.

\section{Makespan Minimization For Scheduling Multiple Rooted Tree Jobs}
\label{CH3mkspan4multijobs}
Now we consider the case where each job is represented by a rooted tree (Definition~\ref{rootedtreedef}). We propose an algorithm with an improved performance guarantee compared to the case of general DAGs.
We would like to emphasize that the $O(m)$ approximation algorithm~\cite{tian2019scheduling,tian2018scheduling} will not be improved if the graph is a rooted tree rather than a general DAG.

\subsection{{\DMART} (Delay-and-Merge Algorithm For A Single Rooted Tree)}
\label{CH3mksgenjobRT}
%\label{CH3makespanRT}
In this section, we develop an approximation algorithm for minimizing makespan of a single rooted-tree job and show that its solution is at most $O(\sqrt{\mu}\log (m\mu)/\log(\log(m\mu)))$ of the optimal makespan. 
Recall Definitions~\ref{rootedtreedef} and~\ref{CH3def5}. In what follows, we assume that the rooted tree $G_j$ has an orientation towards the root $R_j$ (i.e. fan-in tree). For the case that edge orientations points away from the root (i.e. fan-out tree), the algorithm is similar.
Recall that $S_0$ is the set of coflows with no in-edge in rooted tree $G_j$. For each coflow $c \in S_0$, we can find a directed path   
starting from $c$ and ending at coflow (node) $R_j$. We call each of these paths a \textit{path sub-job} of job $j$. We use $\calP_j$ to denote the set of all path sub-jobs of job $j$. Recall that $T_{p,j}$ is the size of directed path $p\in \calP_j$ and $T_j$ is the size of the critical path (see Definition~\ref{CH3def2}). Figure~\ref{exRT} shows a rooted tree with $3$ path sub-jobs.

\begin{figure}[t]
	\centering
		\includegraphics[width=1 in,height=1.3in]{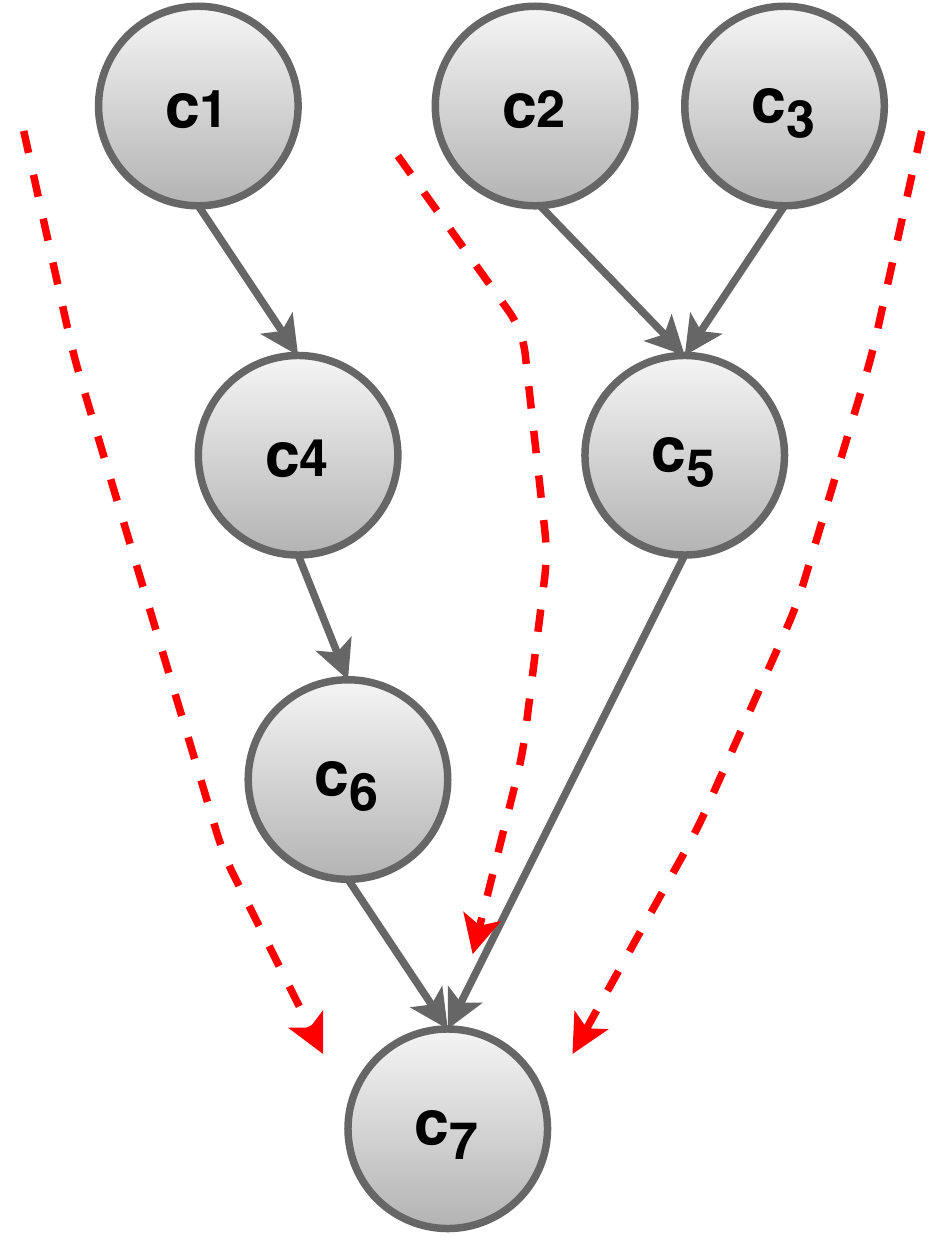}%
		\caption{A rooted tree with $3$ path sub-jobs.}
		%\vspace{-15pt}
		\label{exRT}
\end{figure}
Algorithm~\ref{CH3alg:DMA} provides description of {\DMART}.
\begin{algorithm}[!h]
	\captionof{algorithm}{{\DMART} for Scheduling a Rooted Tree $G_j$}\label{CH3alg:DMA}
	\begin{enumerate}[leftmargin=*,label=\arabic*.]
		\item Find the set of path sub-jobs $\calP_j$ of job $j$. For each path sub-job $p \in \calP_j$, Choose a random integer time $d_p$ uniformly in $[0, \Delta_j/\beta]$, for a constant $\beta > 1/e$, independent of other isolated schedules. Next, for each coflow $c \in p$, $p \in \calP_j$,  calculate the starting time of coflow $c$ according to $p$, $t_{c,p}=d_p+\sum_{c^\prime \prec c, c^\prime \in p}D^{{c^\prime j}}$.
		\item Find the coflow sets $S_i$, $i=0, \dots, H_j-1$ of job $j$ according to Definition~\ref{CH3def5}. For $i=0, \dots, H_j-1$, and for each coflow $c$ in $S_i$, find starting time of coflow $c$ as $t_c=\min \{t_{c,p}| t_{c,p}\geq \max_{c^\prime \in \pi_c} (t_{c^\prime}+D^{(c^\prime j)})\}$. 
		\item For each coflow $c$ in $G_j$, find an optimal schedule for each coflow $c$ using {\BNA}, i.e., $L_c,\tau_c=${\BNA}($\calD^{(cj)}$). We refer to these schedules as \textit{isolated} schedules. Then, delay the scheduling times by $t_c$, $\tau_c \leftarrow \tau_c+t_c$.
		\item Follow Step 3 of {\DMA}.
		\item Follow Step 4 of {\DMA}.
			
	\end{enumerate}
\end{algorithm}

Note that the algorithm calculates the starting time of each coflow, $t_c$, such that all the precedence constraints of the coflow are satisfied. In other words, $t_c$ is equal to the smallest time $t_{c,p}$ (starting time of $c$ based on path $p$) that all its preceding coflows in $G_j$ are completed. We say that $c$ is scheduled according to $p$ if $t_c=t_{c,p}$.
Therefore, the merged schedule satisfies all the precedence constraints among coflows, although the link capacity constraints may be violated. {\DMART} constructs a feasible merged schedule using {\BNA}. Note that in Step 5,  $\calD$ is multiplied by $l_I$ since each matching $L_c(i)$ runs for $l_I$ time units in its corresponding isolated schedule. In the final schedule, both link capacity constraints and precedence constraints among coflows are satisfied.

\subsection{Multiple Rooted Tree Jobs}
Now consider the case where we have multiple jobs where each job is a rooted tree. We seek to find a feasible schedule that minimizes the time to process all the jobs (makespan). Recall that $\mu$ is the maximum number of coflows in any job. We use $\Delta$ to denote the aggregate size of coflows of \textit{all the jobs} (Definition~\ref{CH3def4}). 

The scheduling algorithm is based on {\DMART} described in Section~\ref{CH3mksgenjobRT}. Specifically, we apply {\DMART} to find a feasible schedule for each job in the set. Then we apply Steps 2, 3 and 4 of {\DMA}, namely, we choose a random delay in $[0, \Delta/\beta]$ for a constant $\beta > 1/e$ for each individual schedule and delay it. Next, we merge the delayed schedules. Finally we use {\BNA} algorithm to resolve any collisions in the merged schedule. We refer to this algorithm as {\OCDMA}.

\subsection{Performance Guarantee of {\DMART} and {\OCDMA}} 
\label{OCDMAproofs}
\begin{theorem}
	\label{CH3Thm:mksgenjobRT}
	Given a single job $j$ with rooted tree $G_j$, {\DMART} runs in polynomial time and provides a feasible schedule whose makespan $C_j$ is at most $O(\sqrt{\mu_j} h(m,\mu_j))$ of the optimal makespan with high probability, where $h(m,\mu)= \log (m\mu)/\log(\log(m\mu))$.
\end{theorem}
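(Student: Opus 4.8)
The plan is to bound the makespan of the schedule returned by {\DMART} against the two elementary lower bounds $\Delta_j$ and $T_j$ on the optimal makespan $\mathrm{OPT}_j$: any feasible schedule needs at least $\Delta_j$ slots (the busiest server must carry $\Delta_j$ units of traffic) and at least $T_j$ slots (by Lemma~\ref{CH3optmkspan} applied to the critical path, which is itself a path sub-job of $j$), so $\max\{\Delta_j,T_j\}\le\mathrm{OPT}_j$. The first ingredient is a structural description of the start times assigned in Step~2. Arguing by induction up the tree towards $R_j$, one shows that every coflow $c$ is \emph{scheduled according to} a well-defined path sub-job $q_c\in\calP_j$ through $c$, meaning $t_c=d_{q_c}+\sum_{c'\in q_c,\,c'\prec c}D^{(c'j)}$: if $c'\in\pi_c$ is an immediate predecessor scheduled according to $q_{c'}$, then since in a fan-in tree $c$ is the unique out-neighbour of $c'$, the path $q_{c'}$ passes through $c$ and $t_{c,q_{c'}}=t_{c'}+D^{(c'j)}$, whence $t_c=\max_{c'\in\pi_c}(t_{c'}+D^{(c'j)})$ and the maximizing path can be taken as $q_c$. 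Two consequences are recorded: (i) the merged schedule produced after Step~4 respects all precedence constraints and has length $M:=\max_c\bigl(t_c+D^{(cj)}\bigr)\le\Delta_j/\beta+T_j=O(\mathrm{OPT}_j)$, since $t_c+D^{(cj)}=d_{q_c}+\sum_{c'\in q_c,\,c'\preceq c}D^{(c'j)}\le\Delta_j/\beta+T_j$; and (ii) for each fixed $q\in\calP_j$ the coflows $c$ with $q_c=q$ occupy pairwise disjoint time intervals, each sitting exactly where it would sit if all of $q$ were run contiguously starting at time $d_q$.

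Since Step~5 ({\BNA} expansion) returns a feasible, matching- and precedence-respecting schedule of makespan $\sum_{t=1}^{M}\alpha_t\le M\cdot\max_{t\le M}\alpha_t$, it suffices to show $\max_{t\le M}\alpha_t=O(\sqrt{\mu_j}\,h(m,\mu_j))$ with high probability. Fix a slot $t$ and a server $v$ and let $\alpha_t^{(v)}$ be the number of merged matchings using $v$ at $t$. For a coflow $c$ let $n_c$ be the number of path sub-jobs through $c$ (equivalently, the number of leaves of $G_j$ below $c$), call $c$ \emph{heavy} if $n_c>\sqrt{\mu_j}$ and \emph{light} otherwise, and split $\alpha_t^{(v)}$ accordingly. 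For the heavy part I would use that the coflows active at the common slot $t$ form an antichain of $G_j$ (by consequence~(i), $c\prec c'$ forces $t_{c'}\ge t_c+D^{(cj)}$) and that the heavy coflows form an upward-closed sub-tree $T_L$ of $G_j$ containing $R_j$; an antichain in the tree $T_L$ has size at most the number of its leaves, the leaves of $T_L$ are pairwise $\prec$-incomparable and hence subtend disjoint families of at least $\sqrt{\mu_j}$ path sub-jobs, so there are fewer than $|\calP_j|/\sqrt{\mu_j}\le\sqrt{\mu_j}$ of them. Thus $\alpha_t^{(v)}(\mathrm{heavy})\le\sqrt{\mu_j}$ deterministically, for every slot, server, and realization of the delays.

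For the light part, consequence~(ii) gives the domination $\alpha_t^{(v)}(\mathrm{light})\le\sum_{q\in\calP_j}W_q$, where $W_q$ indicates that in the contiguous run of $q$ started at $d_q$ some light coflow uses $v$ at time $t$; the $W_q$ are independent since each depends only on $d_q$, and the number of values $d_q\in\{0,\dots,\lfloor\Delta_j/\beta\rfloor\}$ with $W_q=1$ is at most $\sum_{c\in q,\,\mathrm{light}}d_v^{(cj)}$, so $\EE[W_q]\le(\beta/\Delta_j)\sum_{c\in q,\,\mathrm{light}}d_v^{(cj)}$. Summing over $q$ and using $\sum_c d_v^{(cj)}\le\Delta_j$ together with $n_c\le\sqrt{\mu_j}$ for light $c$ gives $\sum_q\EE[W_q]\le(\beta/\Delta_j)\sum_{c\,\mathrm{light}}d_v^{(cj)}n_c\le\beta\sqrt{\mu_j}$. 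A Chernoff bound then yields $\alpha_t^{(v)}(\mathrm{light})=O(\sqrt{\mu_j}\,h(m,\mu_j))$ except with probability $(m\mu_j)^{-\Theta(1)}$ — this is exactly the regime in which the deviation level $\sqrt{\mu_j}\,h(m,\mu_j)$ is what survives a union bound over the $O(mM)$ pairs $(t,v)$, since $h(m,\mu_j)\log h(m,\mu_j)=\Theta(\log(m\mu_j))$. Taking that union bound gives $\max_{t\le M}\alpha_t=O(\sqrt{\mu_j}\,h(m,\mu_j))$ w.h.p.; combining with $M\le\Delta_j/\beta+T_j$ and $\max\{\Delta_j,T_j\}\le\mathrm{OPT}_j$ yields a makespan of $O(\sqrt{\mu_j}\,h(m,\mu_j))\cdot\mathrm{OPT}_j$ w.h.p. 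Polynomial running time is immediate from the level/topological computation, the $\mu_j$ calls to {\BNA}, and the slot-by-slot {\BNA} expansion; the fan-out case is handled by the symmetric argument with path sub-jobs running from $R_j$ to the leaves.

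The main obstacle is the light/heavy decomposition and, within it, the reduction to the independent ``idealized contiguous run'' indicators $W_q$ — i.e., decoupling the dependence that the shared path-sub-job delays create among the coflow start times — together with the two combinatorial facts that make $\sqrt{\mu_j}$ the correct threshold: that the simultaneously active heavy coflows number at most $\sqrt{\mu_j}$ (the antichain/leaf-counting argument in $T_L$), and that the light coflows contribute expected per-slot congestion at most $\beta\sqrt{\mu_j}$. The antichain bound on the heavy side and the independence/domination claim on the light side are the delicate points; deriving $M=O(\mathrm{OPT}_j)$ from the scheduled-according-to characterization, and the Chernoff-plus-union-bound step, are routine.
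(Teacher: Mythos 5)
Your proposal reaches the same conclusion by a genuinely different route for the central concentration step. The paper's Lemma~\ref{CH3numcolli2} bounds $\Pr[\alpha_t^{(v)}\ge\alpha]$ directly by a union bound over $\binom{\Delta_j}{\alpha}$ subsets of packets and then upper-bounds the joint probability that $\alpha$ packets collide by the product of the marginals $\beta|\calP_{u,j}|/\Delta_j$, maximizing $\Pi_i|\calP_{u_i,j}|$ subject to $\sum_i|\calP_{u_i,j}|\le\mu_j$ to get $(e\beta\mu_j/\alpha^2)^{\alpha}$ --- this is where the $\sqrt{\mu_j}$ emerges for them, and the product step quietly asserts a form of independence among events that are in fact coupled through shared path delays and through the $\min/\max$ rule defining $t_c$. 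Your heavy/light decomposition at threshold $n_c=\sqrt{\mu_j}$ sidesteps exactly this: the heavy contribution is handled \emph{deterministically} by the antichain/leaf-counting argument in the upward-closed subtree (which is correct: active coflows form an antichain because $c\prec c'$ forces $t_{c'}\ge t_c+D^{(cj)}$, and incomparable heavy leaves subtend disjoint families of more than $\sqrt{\mu_j}$ paths each), while the light contribution is dominated by the indicators $W_q$, which really are mutually independent since each is a function of $d_q$ alone, so a standard Chernoff bound applies. Your domination step is also sound: a coflow's actual placement coincides with its placement in the contiguous run of the path $q_c$ it is scheduled according to, and $\EE[W_q]\le(\beta/\Delta_j)\sum_{c\in q,\,\mathrm{light}}d_v^{(cj)}$ because server $v$ is busy in {\BNA}'s isolated schedule of $c$ for exactly $d_v^{(cj)}$ slots. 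What your approach buys is a cleaner and more defensible probabilistic argument (genuine independence rather than an asserted product bound); what the paper's buys is brevity and a single unified tail estimate. The surrounding structure --- $t_c=t_{c,q_c}$ for a well-defined path $q_c$, the length bound $M\le\Delta_j/\beta+T_j$, the lower bounds $\Delta_j,T_j\le\mathrm{OPT}_j$, and the polynomial-time claim --- matches the paper's proof.

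One loose end: your union bound is over $O(mM)$ server-slot pairs, so the failure probability $(m\mu_j)^{-\Theta(1)}$ only suffices if $M=\Delta_j/\beta+T_j$ is polynomially bounded in $m$ and $\mu_j$. The paper confronts this explicitly at the end of the proof of Lemma~\ref{CH3numcolli2}: when the maximum flow size $\delta_j$ is not polynomially bounded, it rounds flow sizes down to multiples of $\delta_j/(m^2\mu_j)$, schedules the rounded instance, and restores the original sizes at an additive cost of at most $\delta_j\le T_j$. You should append the same (routine) rounding step, or restate the union bound over the $O(m^3\mu_j)$ distinct intervals of the merged schedule rather than over raw time slots; without it the ``w.h.p.'' claim is not fully justified for instances with very large flow sizes.
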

\begin{theorem}
	\label{CH3Thm:mksmultigenjobs}
	Given a set $\calN$ of jobs, each represented as a rooted tree, {\OCDMA} runs in polynomial time, and achieves a solution whose makespan $\calT^{(\calN)}$ is at most $O(\sqrt{\mu} g(m) h(m,\mu))$ of the optimal makespan with high probability.
\end{theorem}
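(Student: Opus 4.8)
The plan is to compose the single-job guarantee of Theorem~\ref{CH3Thm:mksgenjobRT} with the delay-and-merge analysis behind Theorem~\ref{CH3Thm:mksgenjob}, since {\OCDMA} is exactly ``run {\DMART} on each job, then apply Steps~2--4 of {\DMA}''. Write $\mathrm{OPT}$ for the optimal makespan of $\calN$. First I would record two lower bounds: $\mathrm{OPT}\ge \Delta$, because every server must transmit or receive its total load $\Delta$ (the aggregate size of all jobs, Definition~\ref{CH3def4}); and $\mathrm{OPT}\ge C_j^\ast$ for every $j$, where $C_j^\ast$ is the optimal makespan of job $j$ scheduled alone, since restricting any feasible schedule of $\calN$ to the flows of $j$ is a feasible schedule of $j$ by itself.

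Next I would bound the length $\ell_j$ of the schedule {\DMART} outputs for job $j$. By Theorem~\ref{CH3Thm:mksgenjobRT}, with high probability this is a feasible (matching-per-slot, precedence-respecting) schedule with $\ell_j = O(\sqrt{\mu_j}\,h(m,\mu_j))\,C_j^\ast \le O(\sqrt{\mu}\,h(m,\mu))\,\mathrm{OPT}$; a union bound over the $n$ jobs keeps all of these valid simultaneously. The key structural point is that {\DMART}'s schedule for $j$ neither creates nor destroys packets, so in it server $s$ is active in exactly $n_{j,s}=\sum_{c} d^{(cj)}_s$ slots, and $\sum_j n_{j,s}$ equals the load of $s$ in the grand aggregate coflow, hence $\sum_j n_{j,s}\le\Delta$.

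Then I would run the {\DMA} merging machinery with these {\DMART} schedules as the ``isolated'' schedules: delay schedule $j$ by an independent uniform integer $d_j\in[0,\Delta/\beta]$, merge, and expand each merged slot $t$ into $\alpha_t$ slots via {\BNA}. The merged pre-{\BNA} schedule has length $L=\max_j(\ell_j+d_j)\le \max_j\ell_j+\Delta/\beta=O(\sqrt{\mu}\,h(m,\mu))\,\mathrm{OPT}$ using $\Delta\le\mathrm{OPT}$ and $\beta=\Theta(1)$. For fixed $s,t$, the packet count $X_{s,t}$ that server $s$ must handle is a sum over $j$ of independent Bernoullis with $\EE[X_{s,t}]=\sum_j n_{j,s}/(\lfloor\Delta/\beta\rfloor+1)\le \beta\sum_j n_{j,s}/\Delta\le\beta$. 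A Chernoff bound gives $\Pr[X_{s,t}\ge\gamma g(m)]\le (mL)^{-\Omega(1)}$ for a suitable constant $\gamma$ --- here $g(m)=\log m/\log\log m$ is precisely the level $k$ at which $k\log k$ overtakes $\log(mL)$ --- so a union bound over $2m$ servers and $L$ slots yields $\alpha_t\le\gamma g(m)$ for all $t$ with high probability. Hence the final schedule has length $\sum_{t=1}^{L}\alpha_t\le L\cdot\gamma g(m)=O(\sqrt{\mu}\,g(m)\,h(m,\mu))\,\mathrm{OPT}$, it is feasible (each slot is a matching; precedence survives because {\DMART} respects it within each job and the cross-job delay, merge, and slot-expansion preserve the relative order of packets within a job), and it is polynomial-time since {\DMART}, the merge, and the $L$ invocations of {\BNA} are all polynomial.

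The step I expect to be the main obstacle is the concentration argument --- exactly as in Theorem~\ref{CH3Thm:mksgenjob}: one needs $\EE[X_{s,t}]=O(1)$, which relies both on {\DMART} preserving the per-server packet counts $n_{j,s}$ and on the telescoping $\sum_j n_{j,s}\le\Delta$ that makes the random window $[0,\Delta/\beta]$ wide enough, and one must be slightly careful that the horizon $L$ is polynomially bounded so that $\log(mL)=O(\log m)$ and the Chernoff threshold is $O(g(m))$ rather than larger. Everything else is a black-box composition of Theorem~\ref{CH3Thm:mksgenjobRT} with the {\DMA} merging lemma, together with a union bound over the two independent layers of randomness (inside each {\DMART} run and across jobs) so that all high-probability events hold at once.
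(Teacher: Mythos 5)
Your proposal is correct and is essentially the paper's own proof: the paper likewise feeds the {\DMART} outputs as the ``isolated'' schedules into the delay--merge--resolve pipeline of {\DMA}, bounds the per-slot congestion blow-up by $O(g(m))$, and divides by the lower bounds $\Delta$ and the per-job optima (via $\Delta_j/\beta+T_j$) to get $O(\sqrt{\mu}\,g(m)\,h(m,\mu))$. The one place you diverge is the concentration step: you bound $\max_t\alpha_t$ by a Chernoff bound plus a union bound over all $2mL$ server--slot pairs, which (as you note) needs the horizon $L$ to be polynomially bounded in $m$, whereas the paper instead bounds $\EE[\alpha_t]=O(g(m))$ for each slot separately and applies Markov to $\sum_t\alpha_t$ (Lemmas~\ref{CH3numcolli1} and~\ref{CH3feaslen}), which sidesteps that requirement; if you keep your version you should either switch to the expectation-plus-Markov argument or invoke the flow-size rounding used in Lemma~\ref{CH3numcolli2} to make the horizon polynomial.
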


The proofs of Theorems~\ref{CH3Thm:mksgenjobRT} and~\ref{CH3Thm:mksmultigenjobs} are presented in Section~\ref{proofDMART}.

\section{Total Weighted Completion Time Minimization}
\label{CH3weightedminsum}
We are now ready to present our combinatorial approximation algorithm for minimizing the total weighted completion time of multi-stage jobs with release times. \textit{In this section, we assume that the jobs have general DAGs, however, the results can be customized for the case that all the jobs are represented by rooted trees.} We use $\rho_j$ to denote the release time of job $j$, which implies that  job $j$ is available for scheduling only after time $\rho_j$. 
\subsection{Job Ordering}
\label{CH3coflowscheduling}
To formulate a relaxed linear program for our problem,
we note that if we ignore the precedence constraints among coflows of a job and aggregate all its coflows, we obtain a single-stage job (a coflow), and our problem is reduced to traditional coflow scheduling problem~\cite{qiu2015minimizing,ahmadi2017scheduling,shafiee2018improved,shafiee2017scheduling}.

Here, we use an LP formulation for such constructed single-stage jobs, but with an extra constraint for each job which roughly captures the barrier constraints among its coflows.
Formally, for each job $j$, consider the aggregate coflow $\calD^j=\sum_{c=1}^{\mu_j} \calD^{(cj)}$.
Let $\overline{\calM} :=\calM_S \cup \calM_R$. We use $d_i^j$, $i \in \overline{\calM}$ to denote the 
load of coflow $\calD^j$ on server $i$ (see Definition~\ref{CH3def1}). Recall Definition~\ref{CH3def2} and note that $T_j$ is the lower bound on the required time to schedule multi-stage job $j$ (in the original problem). Let $\calJ$ be any subset of jobs in $\calN$. 
We formulate the following LP (Linear Program):
\begin{subequations}
	\label{CH3RLP}
	\begin{align}
	\label{CH3LPobj}
	\min  &\sum_{j \in \calN} w_j C_j \ \ \ \ \ \mathbf{(LP)}\\
	\label{CH3scap}
	& \sum_{j \in \calJ} d_i^j C_j \geq \frac{1}{2} \big(\sum_{j \in \calJ} (d_i^j)^2 + (\sum_{j \in \calJ} d_i^j)^2 \big) ,\ i \in \overline{\calM}, \calJ \subseteq \calN\\
	\label{CH3rdandcrp}
	& C_j \geq T_j+\rho_j, \ \  j \in \calN. 
	\end{align}
\end{subequations}
Constraints~\dref{CH3scap} capture the links' capacity constraints and are used to lower-bound the completion time variables. To see this, consider a (source or destination) server $i$ and a subset of jobs $\calJ$. For each $j$ in $\calJ$, the completion time $C_j$ of its aggregate coflow $\calD^j$, has to be at least the summation of loads of coflows $\calD^{j^\prime}$ on  server $i$ that finish before $j$ plus its own load on server $i$. Also note that for every two coflows in the set $\calJ$, one finishes before the other one. Therefore, $\sum_{j \in \calJ} d_i^j C_j \geq \sum_{j \in \calJ} d_i^j (d_i^j+\sum_{j^\prime \in \calJ, j^\prime \prec j} d_i^{j^\prime})$, where $j^\prime \prec j$ means $C_{j^\prime} \leq C_j$. From this, Constraint~\dref{CH3scap} is derived easily. 

Note that this LP has \textit{exponentially many constraints}, since we need to consider all the subsets of $\calN$. However, we do not need to explicitly solve this LP and we only need to find an efficient ordering of jobs. 
To do so, we utilize the combinatorial primal-dual algorithm that first proposed in~\cite{mastrolilli2010minimizing} and later generalized in~\cite{ahmadi2017scheduling} to capture constraints of the form~\dref{CH3rdandcrp} for parallel scheduling problems. The algorithm builds up a permutation of the jobs
in the reverse order iteratively by changing the corresponding dual variables to satisfy some dual constraint. We have provided the detailed explanation of the combinatorial algorithm in Appendix~\ref{app1} for completeness. We show how we use this ordering to find the actual schedule of jobs' coflows in the next section.

\begin{remark}
Algorithm~\ref{CH3permalg} in Appendix~\ref{app1} runs in $O(n(\log(n)+m))$ time where $n$ is the number of jobs and $m$ is the number of servers. However, the time complexity of the best known algorithm for solving the LP used in~\cite{tian2019scheduling,tian2018scheduling} is $O((n^2+m)^{\omega} \log((n^2+m)/\epsilon))$, where $\omega$ is the exponent of matrix multiplication and $\epsilon$ is the relative accuracy~\cite{cohen2019solving,van2020deterministic}. For current value of $\omega = 2.38$~\cite{williams2012multiplying,le2014powers}, the time complexity of Algorithm~\ref{CH3permalg} is dramatically lower than the time complexity for solving the LP used in~\cite{tian2019scheduling,tian2018scheduling}.
\end{remark}

\subsection{Grouping Jobs}
\label{CH3partitionrule}
Let $D_j$ denote the maximum load that a server has to send or receive considering all coflows of the jobs up to and including job $j$ according to the computed ordering. In other words, $D_j$ is the effective size of an aggregate coflow constructed from coflows of the first $j$ jobs. Recall that $T_j$ is size of the critical path in job $j$ (Definition~\ref{CH3def2}).
Define $\gamma=\min_{s,r,c,j} d_{sr}^{cj}$ which is a lower bound on the time required to process any job. Also let $T=\max_j \rho_j+\sum_{j \in N} \sum_{c \in j} \sum_{s \in M_S} \sum_{r \in M_R} d_{sr}^{cj}$. The algorithm groups jobs into $B$ groups as follows.% partition rules used by our algorithm is as follows.

Choose $B$ to be the smallest integer such that $\gamma 2^{B} \geq T$, and consequently define
\be \label{CH3eq:partition rule}
a_b=\gamma 2^{b}, \mbox{ for } b=-1,0,1,...,B.
\ee
Then the $b$-th interval is defined as the interval $(a_{b-1},a_b]$ and the group $\calJ_b$ is defined as the subset of jobs whose $T_j+\rho_j+D_j$ fall within the $b$-th group, i.e.,
\be \label{CH3eq:subset}
\calJ_b=\{j \in N:  T_j+\rho_j+D_j  \in (a_{b-1},a_b]\};\ 0 \leq b \leq B.
\ee
This partition rule ensures that every job falls in some group.
\subsection{Scheduling Each Group $\calJ_b$}
To schedule jobs of each group $\calJ_b$, $b\in\{1,\cdots,B\}$, (defined by \dref{CH3eq:subset}), we use the {\DMA} algorithm. We refer to this algorithm as {\OJOD} algorithm which stands for \textit{Grouping jobs, followed by Delay-and-Merge} algorithms. We summarize {\OJOD} in Algorithm~\ref{CH3minsumalg}.
\begin{algorithm}[!h]
	\captionof{algorithm}{{\OJOD} for Scheduling Multi-Stage Jobs}
	\label{CH3minsumalg}

\begin{enumerate}[leftmargin=*,label=\arabic*.]
	\item Find an efficient permutation of jobs using Algorithm~\ref{CH3permalg} and re-index them.
	\item Let $D_j$ be effective size of the \emph{aggregate} coflow constructed from coflows of the jobs up to and including job $j$. Also, let $T_j$ be size of the critical path in job $j$.
	\item Partition jobs into disjoint subsets $\calJ_b$, $b=0,...,B$ as in~\dref{CH3eq:subset}.
	\item For each group $b=1, \dots,B$, wait until all jobs in $\calJ_b$ arrive, then apply the makespan minimization algorithm {\DMA} to schedule them.
\end{enumerate}
\end{algorithm}
\subsection{Performance Guarantee of {\OJOD}}
Recall that $g(m)=\log(m)/\log(\log(m))$, and $h(m, \mu)=\log(m\mu)/(\log(\log(m\mu))$. The following theorem states the main result regarding the performance of {\OJOD}.

\begin{theorem}
	\label{CH3minsumalgper}
	{\OJOD} is a polynomial-time $O(\mu g(m))$-approximation algorithm for the problem of total weighted completion time minimization of multi-coflow jobs with release dates.
\end{theorem}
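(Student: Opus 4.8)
The plan is to reduce the weighted completion time analysis to the makespan guarantee of {\DMA} via the standard "order, group geometrically, schedule each group" template. There are three ingredients: (a) the LP~\dref{CH3RLP} is a valid relaxation, so $\mathrm{LP}^{\star}\le\mathrm{OPT}$; (b) the permutation returned by Algorithm~\ref{CH3permalg} is good for this LP, namely $\sum_{j}w_j(T_j+\rho_j+D_j)=O(1)\cdot\mathrm{LP}^{\star}$; and (c) within each group $\calJ_b$ every job has critical path at most $a_b$ and the aggregate size of $\calJ_b$ is at most $a_b$, so {\DMA} finishes $\calJ_b$ in $O(\mu g(m))\,a_b$ slots. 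A geometric summation over the $B$ groups then turns the per-group makespan bound into the claimed approximation.

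First I would verify $\mathrm{LP}^{\star}\le\mathrm{OPT}$. Constraint~\dref{CH3rdandcrp} holds in any feasible schedule since job $j$ cannot start before $\rho_j$ and then needs at least $T_j$ slots along its critical path. For~\dref{CH3scap}, fix a server $i\in\overline{\calM}$ and a subset $\calJ$; in any schedule the jobs of $\calJ$ complete in some order, and the $k$-th to complete has completion time at least the sum of the server-$i$ loads $d^{j'}_i$ of the first $k$ of them, so $\sum_{j\in\calJ}d^{j}_iC_j\ge\sum_{j\in\calJ}d^{j}_i\big(d^{j}_i+\sum_{j'\prec j}d^{j'}_i\big)=\tfrac12\big(\sum_{j\in\calJ}(d^{j}_i)^2+(\sum_{j\in\calJ}d^{j}_i)^2\big)$, which is~\dref{CH3scap}. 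Hence every feasible schedule is LP-feasible. Then I would invoke the guarantee of the combinatorial primal--dual routine of Appendix~\ref{app1} (following~\cite{mastrolilli2010minimizing,ahmadi2017scheduling}): it produces the permutation together with a feasible dual solution of value $\Omega(1)\cdot\sum_{j}w_j(T_j+\rho_j+D_j)$, so weak duality gives $\sum_{j}w_j(T_j+\rho_j+D_j)=O(1)\cdot\mathrm{LP}^{\star}=O(1)\cdot\mathrm{OPT}$.

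Next I would bound the completion times produced by {\OJOD}. Because $D_j$ is non-decreasing along the permutation and every job of $\calJ_b$ has index at most that of the last job $j^{\star}$ of $\calJ_b$, the aggregate size of all coflows in $\calJ_b$ is at most $D_{j^{\star}}\le T_{j^{\star}}+\rho_{j^{\star}}+D_{j^{\star}}\le a_b$; likewise $\max_{j\in\calJ_b}T_j\le a_b$ and $\max_{j\in\calJ_b}\rho_j\le a_b$; and $\calJ_0=\varnothing$ since $T_j+D_j\ge 2\gamma>a_0$ for every job. By the makespan bound established for {\DMA} in the proof of Theorem~\ref{CH3Thm:mksgenjob} (its length is $O(\mu g(m))$ times the maximum of the aggregate size and the largest critical path of the scheduled set), {\DMA} applied to $\calJ_b$ produces, with high probability, a schedule of length at most $c_1\mu g(m)\,a_b$ for a universal constant $c_1$. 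Letting $E_b$ denote the time by which {\OJOD} has finished all jobs of $\calJ_b$, the algorithm yields $E_b\le\max(E_{b-1},\,a_b)+c_1\mu g(m)\,a_b$; since $a_{b-1}=a_b/2$ and $\mu g(m)\ge 1$, induction on $b$ gives $E_b\le C\mu g(m)\,a_b$ with $C:=\max(2,2c_1)$. Finally, for $j\in\calJ_b$ we get $C_j\le E_b\le C\mu g(m)\,a_b=2C\mu g(m)\,a_{b-1}<2C\mu g(m)\,(T_j+\rho_j+D_j)$, and summing against $w_j$ together with the previous paragraph gives $\sum_{j}w_jC_j=O(\mu g(m))\cdot\mathrm{OPT}$.

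The remaining points are routine: {\OJOD} runs in polynomial time since Algorithm~\ref{CH3permalg} takes $O(n(\log n+m))$, the number of groups $B=O(\log(T/\gamma))$ is polynomial in the input size, and each call to {\DMA} is polynomial; a union bound over the $B$ calls to {\DMA} keeps the overall failure probability small (alternatively one can de-randomize {\DMA}). The step I expect to need the most care is (b): one must check that the primal--dual construction really outputs a dual solution dominating $\sum_j w_j(T_j+\rho_j+D_j)$ up to a constant, correctly absorbing the release-date/critical-path term from constraint~\dref{CH3rdandcrp} — this is where the appendix's argument, generalizing~\cite{mastrolilli2010minimizing,ahmadi2017scheduling}, is doing the real work; the geometric composition in (c), while it must be set up carefully around the release times, is otherwise standard.
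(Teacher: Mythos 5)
Your proposal is correct and follows essentially the same route as the paper: LP relaxation as a lower bound (Lemma~\ref{CH3optlb}), the primal--dual permutation guarantee $\sum_j w_j(T_j+\rho_j+D_j)=O(1)\cdot\text{OPT}$ (which is exactly the content of Lemma~\ref{CH3bound1}, the step you rightly flag as doing the real work), and the geometric grouping plus the {\DMA} makespan bound to get $\widehat{C}_j=O(\mu g(m))(T_j+\rho_j+D_j)$. Your recurrence $E_b\le\max(E_{b-1},a_b)+c_1\mu g(m)a_b$ is a slightly tidier packaging of the paper's inequality chain~\dref{CH3bound3}--\dref{CH3bound4}, but the argument is the same.
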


For the case that we are given a set $\calN$ of jobs, each represented as a rooted tree, we modify {\OJOD} by using {\OCDMA} as the subroutine in the last step of {\OJOD}. We denote the modified version as {\GOCDM}. We then have the following Corollary.
\begin{corollary}
 {\GOCDM} is a polynomial-time algorithm with approximation ratio $O(\sqrt{\mu} g(m) h(m,\mu))$ for minimizing the total weighted completion time of rooted-tree jobs with release times.
\end{corollary}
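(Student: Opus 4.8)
The plan is to follow the proof of Theorem~\ref{CH3minsumalgper} essentially verbatim, using {\OCDMA} in place of {\DMA} as the per-group makespan subroutine. This substitution is legitimate because the whole instance consists of rooted-tree jobs, so every group $\calJ_b$ produced by the grouping rule~\dref{CH3eq:subset} is a set of rooted-tree jobs, to which {\OCDMA} (Theorem~\ref{CH3Thm:mksmultigenjobs}) applies. Concretely, {\GOCDM} computes the primal-dual job permutation (Algorithm~\ref{CH3permalg}) for the LP relaxation~\dref{CH3RLP}, partitions $\calN$ into the geometric groups $\calJ_1,\dots,\calJ_B$ of~\dref{CH3eq:subset}, and then processes the groups sequentially, invoking {\OCDMA} on $\calJ_b$ only after every job of $\calJ_b$ has been released. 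The permutation algorithm and {\OCDMA} both run in polynomial time, so {\GOCDM} does too.

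First I would record the standard output guarantee of the primal-dual ordering: there are completion-time surrogates $\bar C_j$, one per job, with $\sum_{j\in\calN} w_j\bar C_j = O(1)\cdot\mathrm{OPT}$ (the LP is a relaxation, and the ordering cost is within $O(1)$ of the LP value), and with $\bar C_j \ge \tfrac12 D_j$ (apply constraint~\dref{CH3scap} to the prefix set $\{1,\dots,j\}$ and the server attaining $D_j$, using $\bar C_{j'}\le\bar C_j$ for $j'\le j$) and $\bar C_j \ge T_j+\rho_j$ (constraint~\dref{CH3rdandcrp}). Combining these with the definition~\dref{CH3eq:subset} of the groups gives $a_b \le 2(T_j+\rho_j+D_j) = O(\bar C_j)$ for every $j\in\calJ_b$, which is the bound I will charge against.

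Next I would bound the wall-clock time {\GOCDM} spends on a single group. Every job of $\calJ_b$ is released by $\max_{j\in\calJ_b}\rho_j\le a_b$; once they all are, {\OCDMA} produces a schedule for $\calJ_b$ whose makespan is at most $O(\sqrt{\mu}\,g(m)\,h(m,\mu))$ times the lower bound $\max\{\Delta(\calJ_b),\ \max_{j\in\calJ_b}T_j\}$ on the optimal makespan of $\calJ_b$, and that lower bound is itself at most $a_b$, since $\Delta(\calJ_b)\le D_{j^\star}\le a_b$ (with $j^\star$ the job of $\calJ_b$ latest in the ordering, whose first-$j^\star$ aggregate coflow contains all coflows of $\calJ_b$) and $T_j\le a_b$ for all $j\in\calJ_b$. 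Hence {\GOCDM} finishes $\calJ_b$ within $a_b + O(\sqrt{\mu}\,g(m)\,h(m,\mu))\,a_b = O(\sqrt{\mu}\,g(m)\,h(m,\mu))\,a_b$ of the start of phase $b$; since the phases run sequentially and the $a_{b'}$ grow geometrically, any $j\in\calJ_b$ completes by $\sum_{b'\le b} O(\sqrt{\mu}\,g(m)\,h(m,\mu))\,a_{b'} = O(\sqrt{\mu}\,g(m)\,h(m,\mu))\,a_b = O(\sqrt{\mu}\,g(m)\,h(m,\mu))\,\bar C_j$. Weighting and summing gives $\sum_j w_j C_j = O(\sqrt{\mu}\,g(m)\,h(m,\mu))\sum_j w_j\bar C_j = O(\sqrt{\mu}\,g(m)\,h(m,\mu))\cdot\mathrm{OPT}$, which is the claim.

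I expect the main obstacle to be making two points precise rather than the reduction itself. The first is that the makespan guarantees of Theorems~\ref{CH3Thm:mksgenjobRT}--\ref{CH3Thm:mksmultigenjobs} actually hold against the concrete lower bound $\max\{\Delta(\calJ_b),\max_{j}T_j\}$ (so that it can be charged to $a_b$); this needs to be read off from their proofs in Section~\ref{proofDMART} rather than used as a black box, and it matters here because $\mathrm{OPT}^{\mathrm{mks}}(\calJ_b)$ can itself be an $\Omega(\sqrt\mu)$ factor above that lower bound (cf. contribution~4), so the weaker ``times $\mathrm{OPT}^{\mathrm{mks}}$'' form would not suffice. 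The second is that the ``with high probability'' clauses of the makespan subroutine must hold simultaneously across all $B=O(\log(T/\gamma))$ phases; this is a union bound after inflating the constants, or it can be sidestepped entirely using the de-randomization noted after Theorem~\ref{CH3Thm:mksgenjob}. Everything else is identical to the proof of Theorem~\ref{CH3minsumalgper}, with the factor $O(\mu g(m))$ of {\DMA} replaced throughout by the factor $O(\sqrt{\mu}\,g(m)\,h(m,\mu))$ of {\OCDMA}.
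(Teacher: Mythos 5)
Your proposal is correct and follows essentially the same route as the paper: the corollary is obtained by rerunning the proof of Theorem~\ref{CH3minsumalgper} with {\OCDMA} in place of {\DMA}, charging the geometrically-summed group makespans to $a_b=O(T_j+\rho_j+D_j)$ and then to $\text{OPT}$ via the primal-dual ordering guarantee (which the paper packages as Lemma~\ref{CH3bound1} through explicit dual fitting, whereas you phrase it via LP completion-time surrogates $\bar C_j$ --- an equivalent formulation). Your two flagged points are well taken: the paper's makespan proofs do bound the schedule length against the concrete lower bounds $\Delta$ and $\max_j T_j$ rather than the optimal makespan, and the simultaneous high-probability guarantee across all $B$ groups is indeed glossed over in the paper and is handled exactly as you suggest (union bound or de-randomization).
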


The proofs can be found in Section~\ref{proofGDM}.
\section{Empirical Evaluation}
\label{simulation}
To demonstrate the gains in practice, we conducted extensive evaluations using a real workload. This workload has been widely used in coflow related research~\cite{Chowdhury2014, qiu2015minimizing, tian2018scheduling, shafiee2018improved}. We compared the performance of our algorithm {\GOCDM} with the $O(m)$-algorithm in~\cite{tian2019scheduling,tian2018scheduling} which is the previous state-of-the-art algorithm and compare its performance with that of our algorithm. In~\cite{tian2019scheduling,tian2018scheduling}, the authors have shown that their algorithm outperforms single-stage coflow scheduling algorithms by around $83\%$, and Aalo~\cite{chowdhury2015efficient} by up to $33\%$ for the case of equal weights for job (as Aalo cannot handle the weighted scenario). Hence, we only report comparison with this algorithm. The results indicate that our algorithm outperforms the $O(m)$-algorithm~\cite{tian2019scheduling,tian2018scheduling} by up to $53 \%$ \textit{in the same settings}. We also investigate the performance of the algorithms for different values of delaying parameter $\beta$, and problem size $\mu$ and $m$.

\begin{figure}[t]
	\centering
	\includegraphics[width=2.3in,height=1.6in]{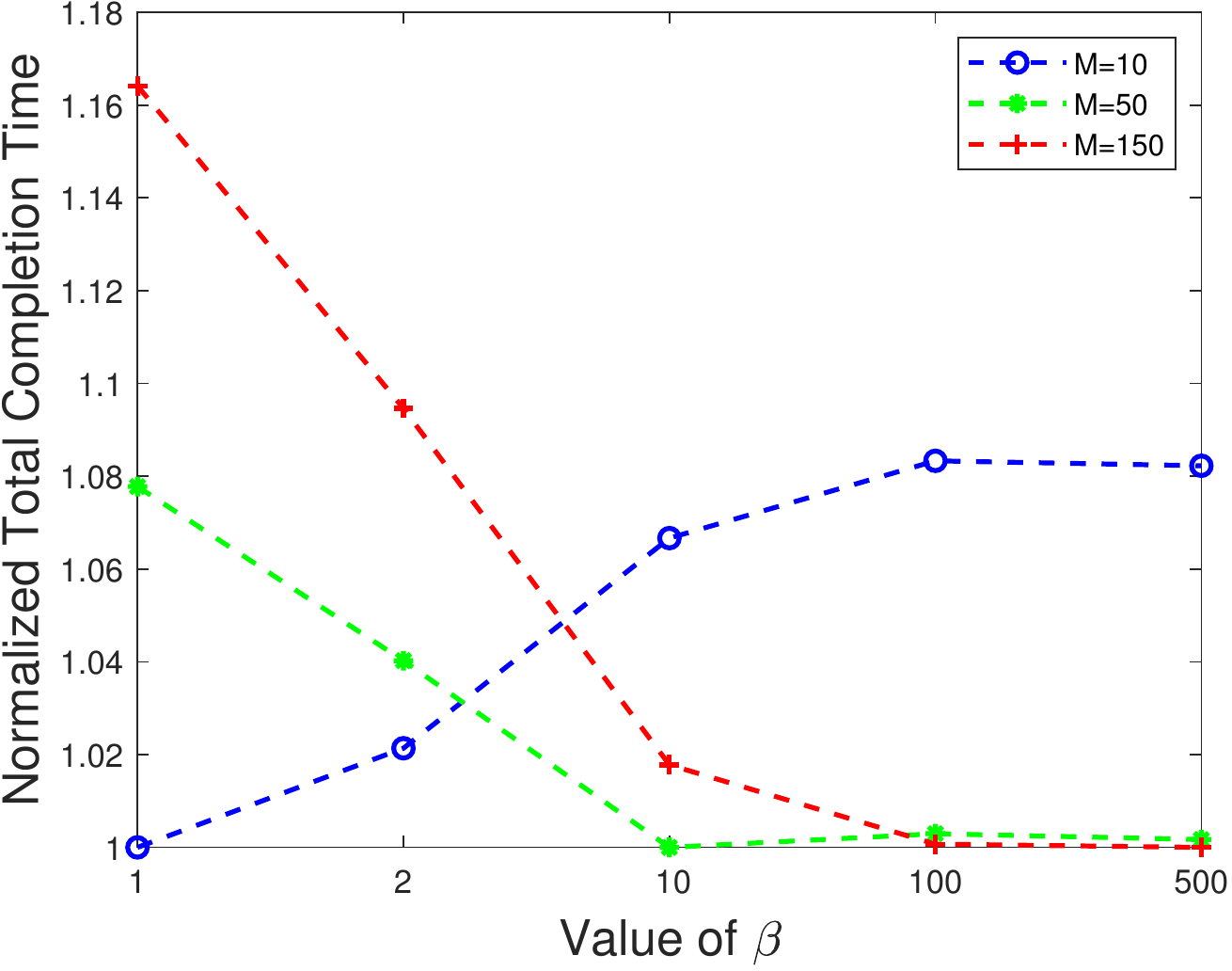}%
	\caption{Performance of {\GOCDM} for different  number of servers and different values of  $\beta$, and $\bar{\mu}=5$.}
	\label{effectofbeta}%
	%\vspace{-15pt}
\end{figure}

\begin{figure*}[h]
	\centering
	\begin{subfigure}{0.3\textwidth}
		\includegraphics[width=\textwidth]{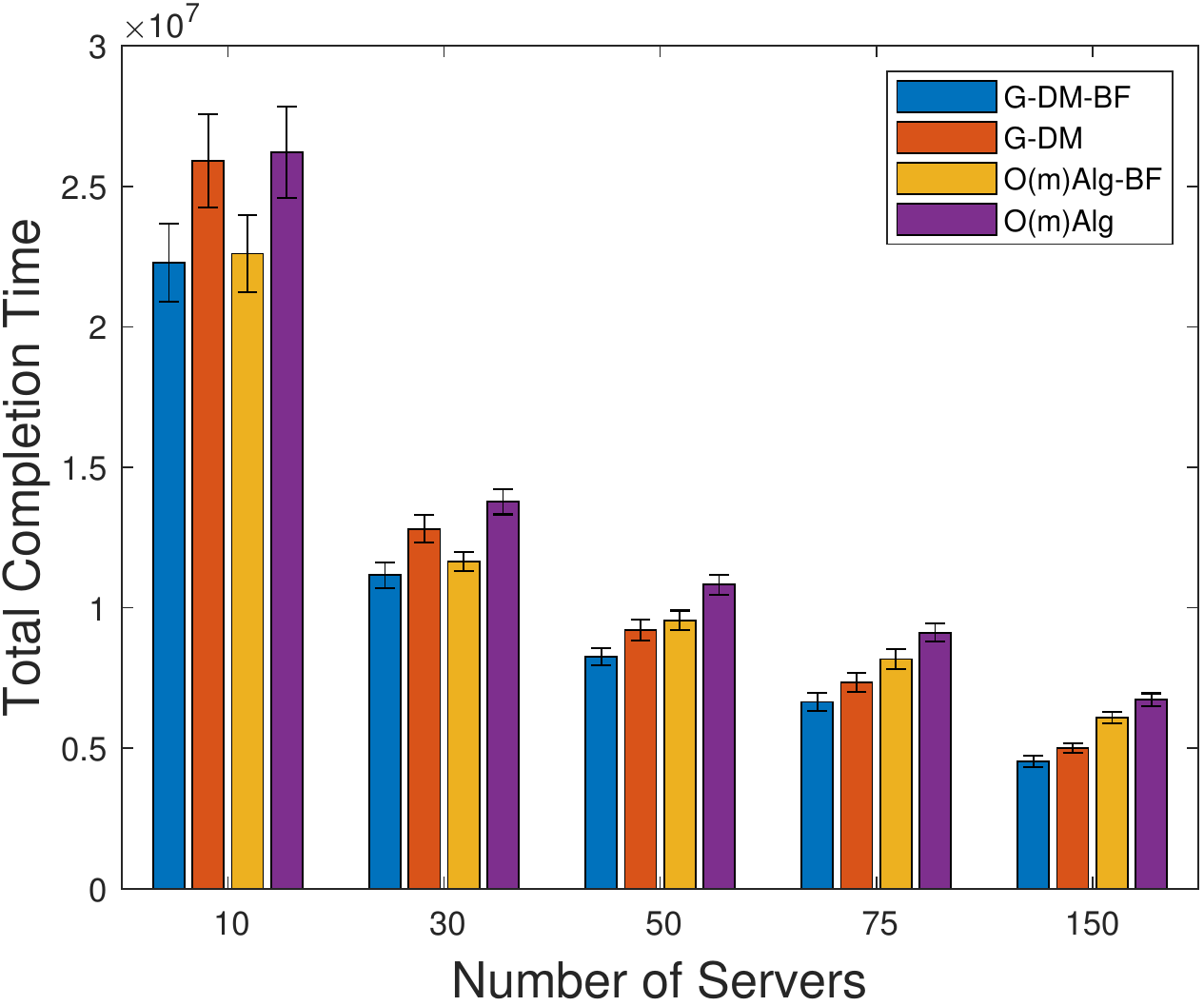}%
		\vspace{4pt}
		\caption{Performance of {\OJOD} and {\textsf{O(m)Alg}} with and without backfilling for different numbers of servers, and $\bar{\mu}=5$.}
		\label{CH3GDMserver}%%
	\end{subfigure}\hfill
	\begin{subfigure}{0.3\textwidth}
		\vspace{8pt}
		\includegraphics[width=\textwidth,height=1.8 in]{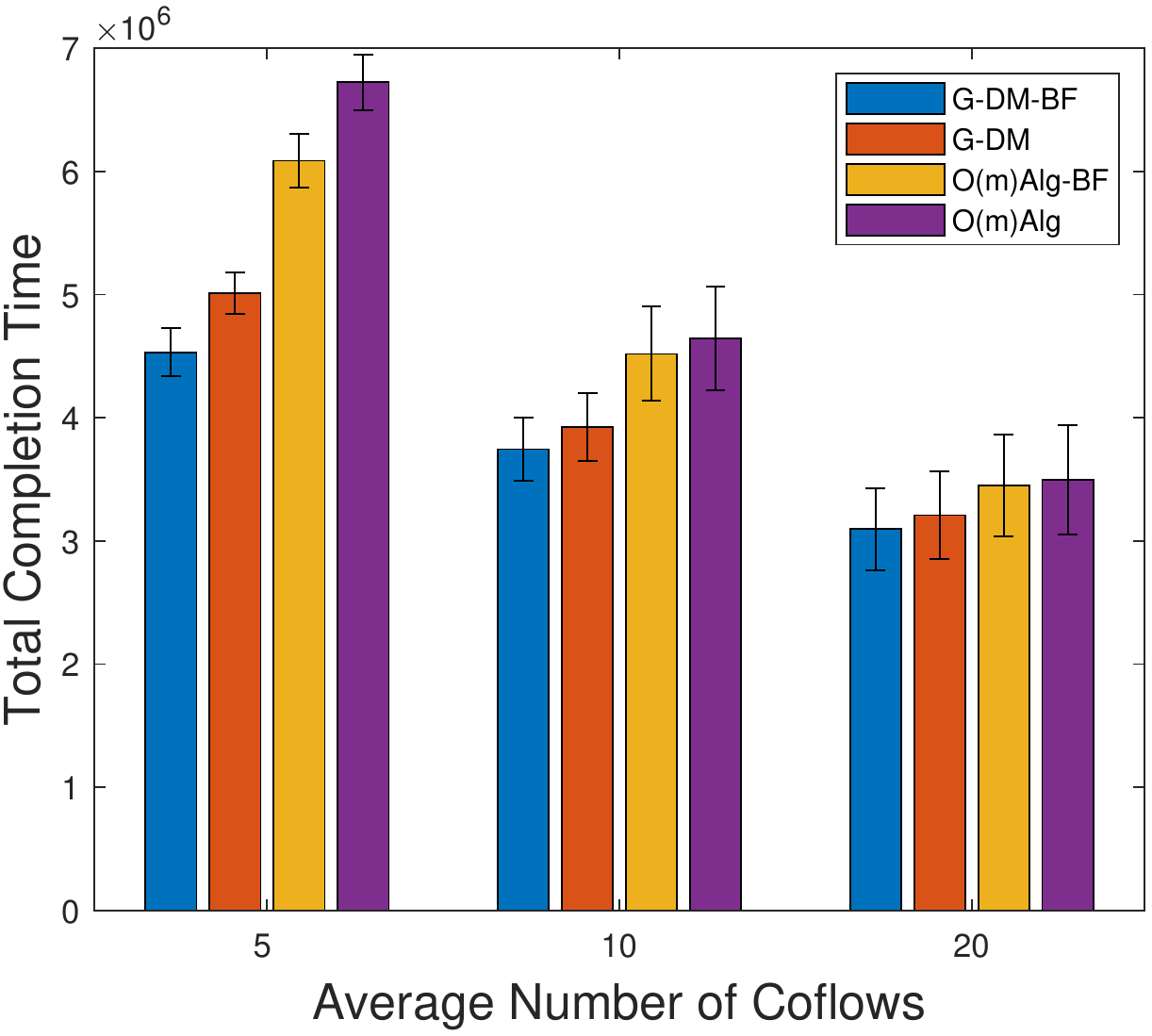}%
		\vspace{3.5pt}
		\caption{Performance of {\OJOD} and {\textsf{O(m)Alg}} with and without backfilling for different average numbers of coflows per job, and $m=150$.}
		\label{CH3GDMcoflow}%
	\end{subfigure}\hfill
	\begin{subfigure}{0.31\textwidth}
		\vspace{-2pt}
		\includegraphics[width=\textwidth,height=1.78 in]{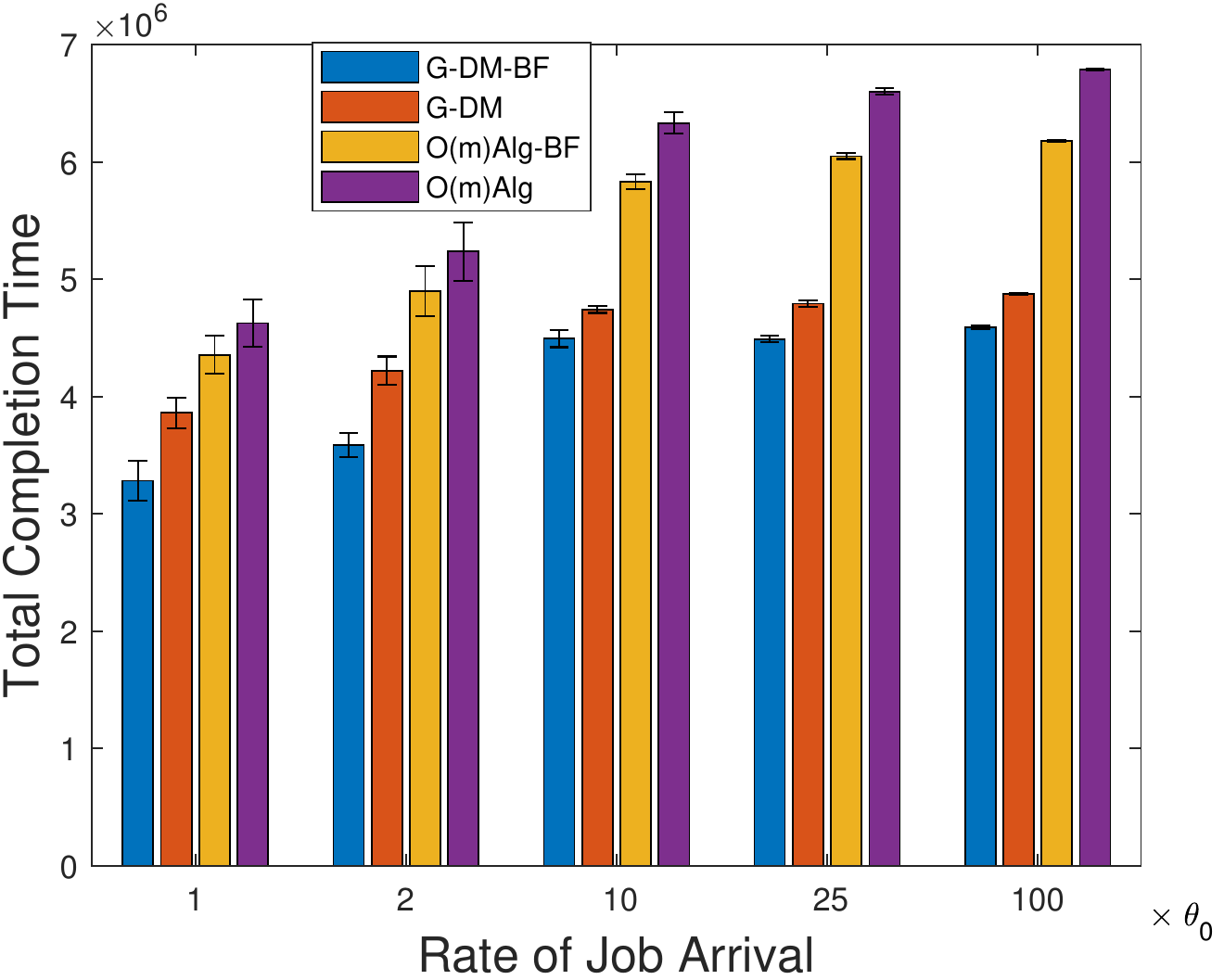}%
		\vspace{5pt}
		\caption{Performance of {\OJOD} and {\textsf{O(m)Alg}} with and without backfilling for different arrival rates, and $\bar{\mu}=5$, $m=150$.}
		\label{CH3GDMolt}%
	\end{subfigure}
	\caption{Performance of {\OJOD} and {\textsf{O(m)Alg}} for scheduling general DAGs with and without backfilling.}
	%\vspace{-15pt}
\end{figure*}

\begin{figure*}[h]
	\centering
	\begin{subfigure}{0.3\textwidth}
		\includegraphics[width=\textwidth]{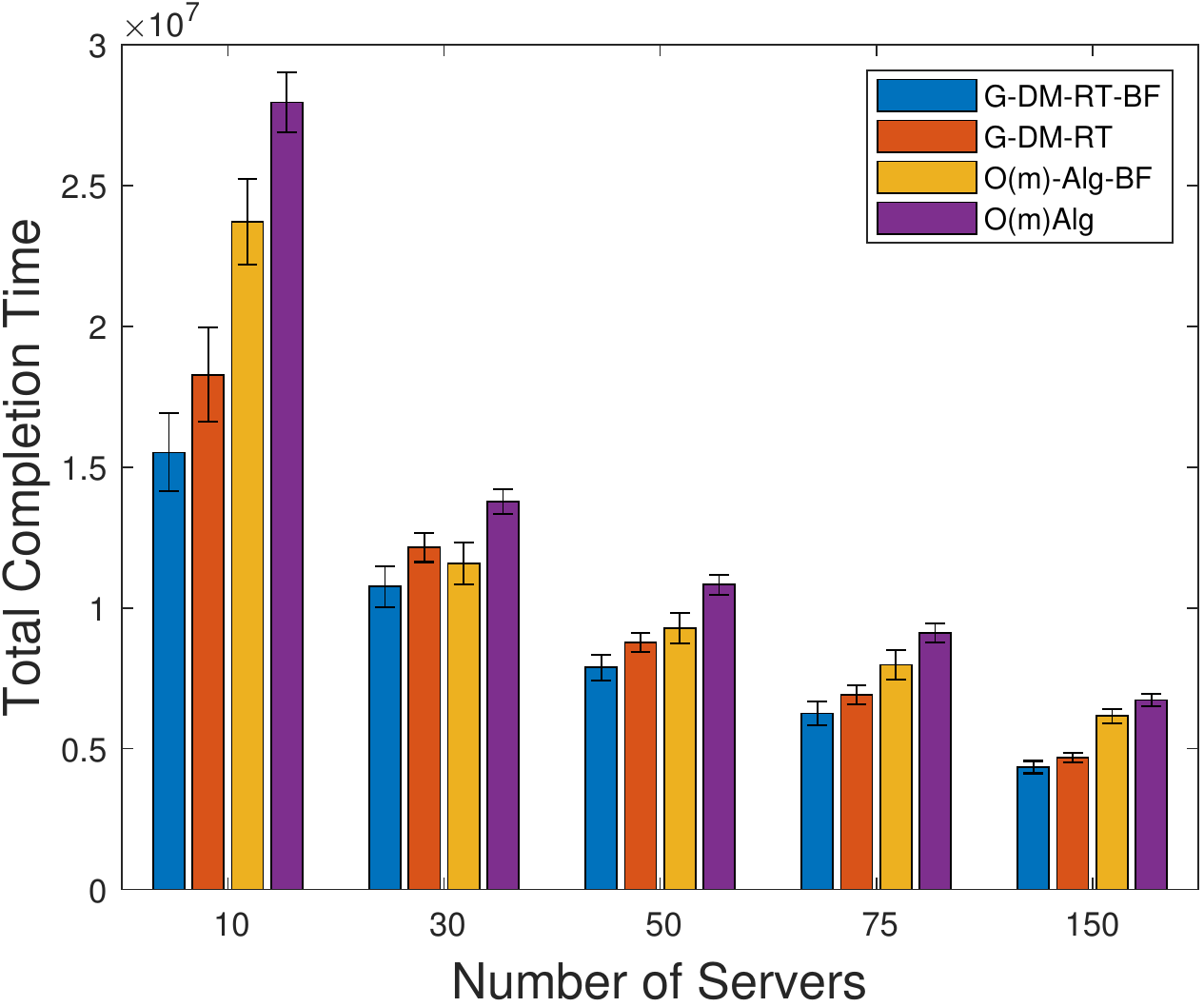}%
		\vspace{4pt}
		\caption{Performance of {\GOCDM} and {\textsf{O(m)Alg}} with and without backfilling for different numbers of servers, and $\bar{\mu}=5$.}
		\label{CH3equalweightserver}%%
	\end{subfigure}\hfill
	\begin{subfigure}{0.3\textwidth}
		\vspace{8pt}
		\includegraphics[width=\textwidth,height=1.8 in]{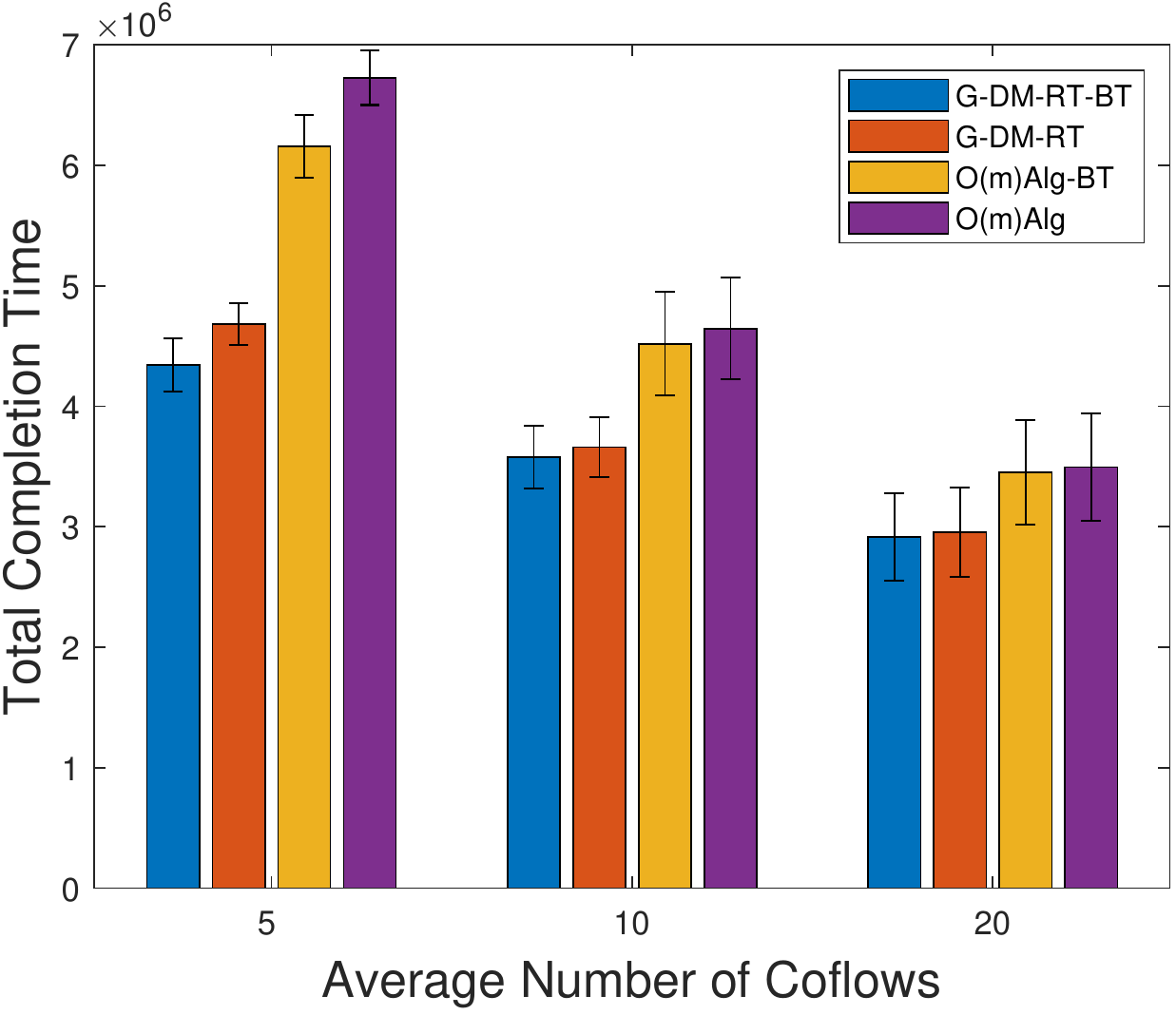}%
		\vspace{3.5pt}
		\caption{Performance of {\GOCDM} and {\textsf{O(m)Alg}} with and without backfilling for different average numbers of coflows per job, and $m=150$.}
		\label{CH3equalweightcoflow}%
	\end{subfigure}\hfill
	\begin{subfigure}{0.31\textwidth}
		\vspace{-2pt}
		\includegraphics[width=\textwidth,height=1.78 in]{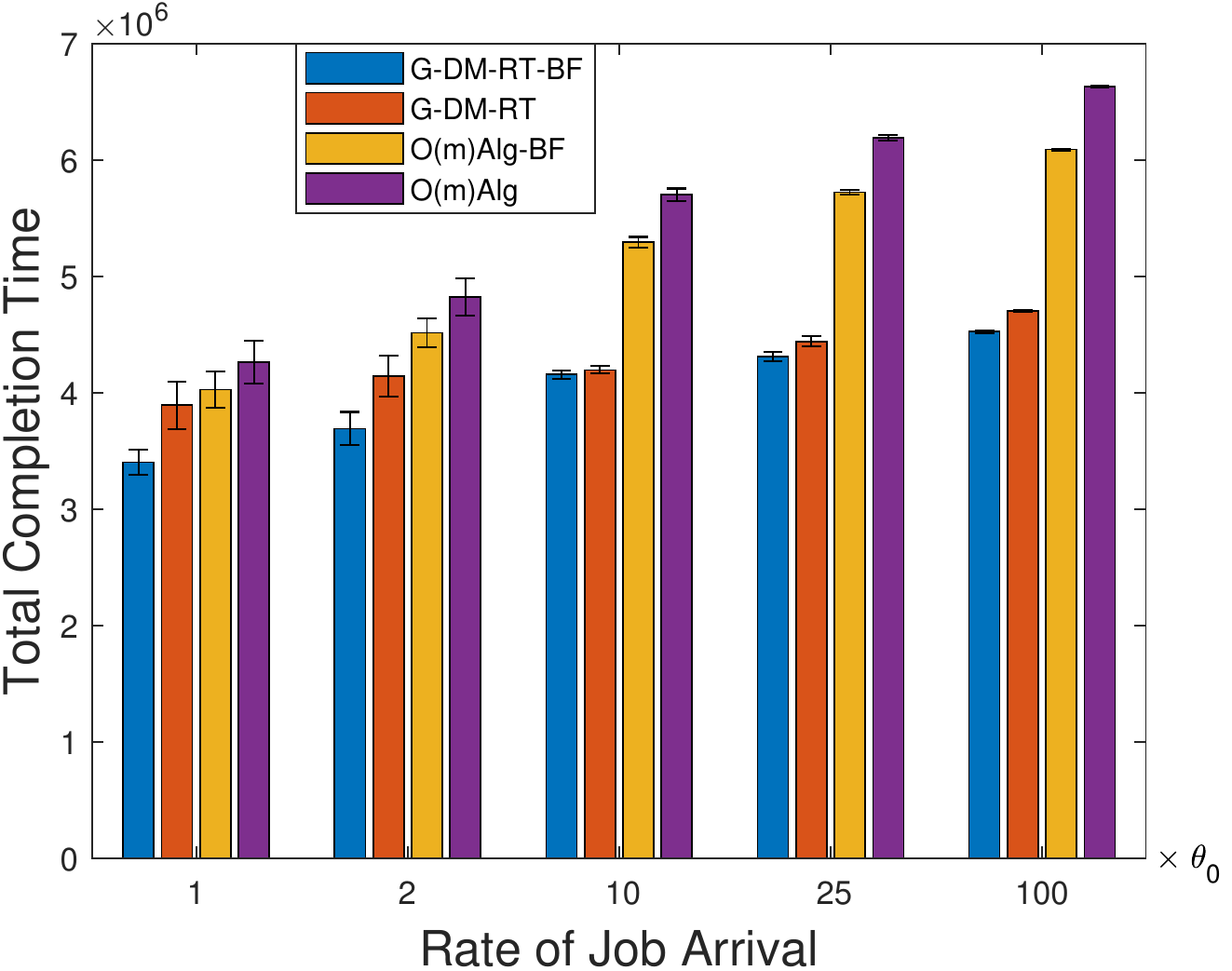}%
		\vspace{5pt}
		\caption{Performance of {\GOCDM} and {\textsf{O(m)Alg}} with and without backfilling for different arrival rates, and $\bar{\mu}=5$, $m=150$.}
		\label{CH3olequalweight}%
	\end{subfigure}
	\caption{Performance of {\GOCDM} and {\textsf{O(m)Alg}} for scheduling rooted tree jobs with and without backfilling.}
	%\vspace{-15pt}
\end{figure*}

{\bf{Workload:}} The workload is based on a Hive/MapReduce trace at a Facebook cluster with $150$ racks, and only contains coflows information. The data set contains $267$ coflows with $\mu_j$ ranging from $10$ to $21170$. Further, size of the smallest flow is equal to $\gamma=1$, size of the largest flow is equal to $2472$, and effective size of coflows, $\Delta_j$, is between $5$ and $232145$. Finally, the maximum load a server should send or receive considering all the coflows, i.e., the effective size of the aggregate coflow, is equal to $\Delta=440419$.

To assess performance of algorithms under different traffic intensity, we generate workloads with different number of machines (servers) by mapping flows of the original $150$ racks to $m$ machines with various values of $m$. To generate multi-stage jobs, we randomly partition the coflows into multi-stage jobs that each has $\overline{\mu}$ coflows on average. To generate the corresponding rooted tree, we first generate a random graph in which probability of picking each of the edges is $0.5$, and then converting it to a tree by removing its cycles. We ran the algorithms for two cases of equal weights for all jobs and randomly selected weights from interval $[0,1]$. We also consider the online scenario where multi-stage jobs arrive over time and their release (arrival) times follow a Poisson process with a parameter $\theta$.

{\bf{Algorithms:}} We simulate our multi-stage job algorithms (referred to as {\OJOD} and {\GOCDM}) and the algorithm in~\cite{tian2019scheduling,tian2018scheduling} (referred to as {\textsf{O(m)Alg}}). 
For each algorithm, we present two versions, one with no backfilling and one with backfilling. Backfilling is a common technique in scheduling to increase utilization of system resources by allocating the underutilized link capacities (or servers, depending on the problem) to other jobs. We apply the same backfilling strategy to both algorithms for a fair comparison.
We use {\OJODBF}, {\GOCDMBF}, and {\textsf{O(m)Alg-BF}} to refer to the versions of algorithms with backfilling. 

{\bf{Metrics:}} We compare the  total weighted completion times of jobs under the two algorithms for various workloads and scenarios. We present results for offline and online scenarios with equal and random job weights. We also investigate the performance of the algorithms for different values of $m$, $\bar{\mu}$, $\theta$.

\subsection{Impact of Random Delays and $\beta$} 
The current implementations of {\OJOD} and {\GOCDM} have a random component as it uses {\DMA} and {\DMART} as a subroutine. To show that in practice running the algorithm \emph{once} is sufficient to achieve a satisfactory solution, we need to show that its relative standard deviation (RSD) is small. RSD is defined as standard deviation divided by the mean (average). Hence, to analyze the effect of random delays in the performance of our algorithm, we ran it on some instances, each for $10$ times.
Based on our experiments, RSDs of {\OJOD} and {\GOCDM} are always less than $0.5\%$ and RSDs of {\OJODBF} and {\GOCDMBF} are always less than $0.9\%$, which both are very small. In the rest of simulations, we run our algorithms only once on each instance.

Furthermore, we studied the effect of parameter $\beta$ (see Sections~\ref{CH3mkspan4multijobs0} and~\ref{CH3mkspan4multijobs}) on the performance of our algorithms. For each algorithm, we ran the algorithm using a wide range of $\beta$ values. 
Based on our experiments, for smaller $m$ (higher traffic intensity) it is better to choose a small value of $\beta$ ($1$ or $2$) to reduce the collision probability (\ref{CH3kepsilon}), while choosing larger $\beta$ ($100$ or $500$) for larger $m$ helps the algorithm to use the unused capacity to schedule flows of other coflows in the system. Moreover, the amount of improvement by optimizing over $\beta$ was less than  $16\%$ in all the experiments. Figure~\ref{effectofbeta} shows the results for different values of $\beta$ and $m$ when $\bar{\mu}$ is set to $5$ for {\GOCDM}.

\subsection{Evaluation Results for General GADs}
\subsubsection{Offline Setting} In the offline scenario, all the jobs are available at time $0$. For each set of parameters $(m,\bar{\mu})$, we generate $10$ different instances randomly and report the average and standard deviation of each algorithm's performance. 

Figure~\ref{CH3GDMserver} and~\ref{CH3GDMcoflow} depict some of the results for the case that jobs have general DAGs and equal weights. Figure~\ref{CH3GDMserver} shows the performance of {\OJOD} and {\textsf{O(m)Alg}} for the case that  average number of coflows per job, $\bar{\mu}$, is $5$ and different number of servers. {\OJOD} performs as well as {\textsf{O(m)Alg}} for $m=10$. It outperforms {\textsf{O(m)Alg}} from $9\%$ for $m=30$ to about $36\%$ for $m=150$. Moreover, Figure~\ref{CH3GDMcoflow} shows that \textit{our algorithm outperforms {\textsf{O(m)Alg}} for all values of average coflows per job, by $36\%$ to $11\%$}. The results for the case of random job weights are very similar and omitted.

\subsubsection{Online Setting} For the online scenario, jobs arrives to the system according to a Poisson process with rate $\theta$. Every time that a job arrives both {\OJOD} ({\GOCDM}) and {\textsf{O(m)Alg}} suspend the previously active jobs, update the list of jobs and their remaining demands, and reschedule them. Moreover, completion time of a job in the online scenario is measured from the time that the job arrives to the system. The job arrival rate is determined as follows: $\theta=a \times \theta_0$ for $a=\{1,2,10,25,100\}$, and $\theta_0=\frac{\sum_j \mu_j}{\sum_j \sum_c D^{cj}}$, in which $\sum_j \mu_j$ is the total number of coflows among all jobs. The denominator, $\sum_j \sum_c D^{cj}$, is summation of coflows' effective sizes and an upper bound on the jobs' makespan.

Figure~\ref{CH3GDMolt} shows the results under {\OJOD}  and {\textsf{O(m)Alg}} for the case that $m=150$ (original data set), $\bar{\mu}=5$, and all the jobs have equal weights. {\OJOD} always outperforms {\textsf{O(m)Alg}}, from $20\%$ to $36\%$. 
Furthermore, \textit{{\GOCDMBF} always outperforms {\textsf{O(m)Alg-BF}}, by $30\%$ to $37\%$.}

\subsection{Evaluation Results for Rooted Trees}
Now we provide the simulation results for the case that all the jobs are rooted trees.
\subsubsection{Offline Setting} Figure~\ref{CH3equalweightserver} shows the performance of two algorithms for different number of servers, $\bar{\mu}=5$, and equal weights for jobs. As we can see, {\GOCDM} always outperforms {\textsf{O(m)Alg}}, for about $53\%$ for $m=10$ to about $46\%$ for $m=150$.  For all values of average coflows per jobs, \textit{our algorithm outperforms {\textsf{O(m)Alg}} , by $46\%$ to $18\%$} as depicted in Figure~\ref{CH3equalweightcoflow}.

\subsubsection{Online Setting} Figure~\ref{CH3olequalweight} shows the results with and without backfilling for the case that $m=150$ (original data set), $\bar{\mu}=5$, and all the jobs have equal weights. {\GOCDM} always outperforms {\textsf{O(m)Alg}}, from $10\%$ to $46\%$.  
Furthermore, \textit{{\GOCDMBF} always outperforms {\textsf{O(m)Alg-BF}}, by $22\%$ to $36\%$.}

We would like to point out that, as we expect, the gain under {\GOCDM} is greater than {\OJOD}, as the former algorithm utilizes the network resources more efficiently by interleaving schedules of different coflows of the \emph{same job} as well as interleaving schedules of coflows of \emph{different jobs}. Furthermore, backfilling strategy generally yields a larger improvement when combined by {\OJOD} and {\textsf{O(m)Alg}} compared to {\GOCDM}, as they leave more resources unused.
\section{Discussion on Approximation Results}
\label{CH3discussion} 
An interesting research direction is to improve the approximation ratios for the algorithms. As we showed in the previous sections, once we have an algorithm for scheduling a single job whose solution is a factor $\eta$ of the simple lower bounds $\Delta_j$ and $T_j$ (Definitions~\ref{CH3def4} and~\ref{CH3def2}), we can directly utilize the rest of our approach and get approximation algorithms with approximation ratio $O(\eta \log(m)/\log\log(m))$ for the problems of makespan minimization and total weighted completion time  minimization for multiple jobs.

To improve the result for the case of general DAGs, one approach is to first consider scheduling a single job (with a general DAG), and try to generalize {\DMART} to a general DAG by careful construction of paths in the algorithm, so that we do not need to consider all the paths in the DAG which could be exponentially many. However, even if one could show that $O(\mu_j)$ paths is sufficient to construct a feasible schedule, it is challenging to analyze the performance through computing the probability of collisions or the average number of collisions in the merged schedule as we did in proof of Lemma~\ref{CH3numcolli1}. 
This is due to the underlying dependency among the unrelated coflows in $G_j$ (these are coflows among which there is no directed path in $G_j$, thus they can collide) which appears in the probability that a given coflow is assigned to start scheduling at a given time given that a specific $O(\mu_j)$ set of paths is generated by the algorithm.

Besides these challenges for scheduling a job with a general DAG, we can illustrate the existence of instances for which the optimal makespan is $\Omega(\sqrt{\mu_j})$ factor larger than the two simple lower bounds, $\Delta_j$ and $T_j$. We state this in the following lemma.
\begin{lemma}
\label{CH3tightness}
There exist arbitrary sized instances of DAG job scheduling such that its optimal makespan is $\Omega(\sqrt{\mu_j}(\Delta_j+T_j))$.
\end{lemma}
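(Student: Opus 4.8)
The plan is to construct, for every integer $k\ge 1$, an instance $I_k$ of single-job scheduling on an $m\times m$ switch with $m=2k$ servers whose DAG $G_j$ has $\mu_j=k^2$ coflows, for which the two simple lower bounds are $\Delta_j=k$ and $T_j=k$ while the optimal makespan is exactly $k^2$. Since $\sqrt{\mu_j}=k$, this gives $\mathrm{OPT}=k^2=\tfrac12\sqrt{\mu_j}\,(\Delta_j+T_j)=\Omega\!\big(\sqrt{\mu_j}(\Delta_j+T_j)\big)$, and letting $k\to\infty$ produces arbitrarily large instances, as required.

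\textbf{The instance.} Partition the $k^2$ coflows into $k$ \emph{phases} $P_1,\dots,P_k$, each of size $k$. For $1\le \ell,t\le k$, the coflow $c_{\ell,t}\in P_\ell$ is a single flow of size $1$ from sender server $\ell$ to receiver server $k+t$ (so all flow sizes are integers $\ge 1$ and there are no self-flows). The graph $G_j$ has a directed edge from every coflow of $P_\ell$ to every coflow of $P_{\ell+1}$ for $\ell=1,\dots,k-1$, and no edge inside a phase; this is a layered complete DAG, in particular acyclic.

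\textbf{Evaluating the three quantities.} For $T_j$ (Definition~\ref{CH3def2}): any directed path of $G_j$ meets each phase in at most one coflow, hence contains at most $k$ coflows, each of effective size $1$, so $T_j\le k$; the path $c_{1,1}\to c_{2,1}\to\cdots\to c_{k,1}$ attains it, so $T_j=k$. For $\Delta_j$ (Definition~\ref{CH3def4}): in the aggregate coflow $\calD^{j}=\sum_{c}\calD^{(cj)}$, sender server $\ell$ carries exactly the $k$ flows of $P_\ell$ and receiver server $k+t$ carries exactly one flow per phase, so every server has load $k$ and $\Delta_j=k$. For $\mathrm{OPT}$: under the Starts-After semantics, no flow of $P_{\ell+1}$ can be sent before \emph{all} flows of $P_\ell$ have completed, so in any feasible schedule the time windows occupied by the phases are pairwise disjoint and ordered $P_1,\dots,P_k$; hence $\mathrm{OPT}\ge\sum_{\ell=1}^{k}(\text{time to complete }P_\ell)$. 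All $k$ flows of $P_\ell$ originate at sender server $\ell$, which transmits at most one packet per slot (matching/link-capacity constraint), so completing $P_\ell$ requires $\ge k$ slots; thus $\mathrm{OPT}\ge k^2$. Running the phases consecutively and, within each phase, sending its $k$ unit flows in $k$ successive slots (a legal matching in each slot) achieves makespan $k^2$, so $\mathrm{OPT}=k^2$, and the lemma follows.

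\textbf{Where the work is.} Everything except the makespan lower bound is a direct check against Definitions~\ref{CH3def1}, \ref{CH3def4}, and~\ref{CH3def2}. The only point needing care is arguing that $k^2$ cannot be beaten: overlapping a phase with its predecessor is impossible precisely because the dependency is \emph{Starts-After} rather than a pipelined Finishes-Before constraint, and no interleaving of the $k$ unit flows inside a phase helps because they share a source server and the per-slot schedule must be a matching. It is worth remarking that the gap is robust: if one allows a flow whose sender and receiver coincide, reusing the receiver pool (route $c_{\ell,t}$ from server $\ell$ to server $t$) keeps $\Delta_j=k$ with only $m=k$ servers; and this same family shows that no analysis of a single-DAG schedule that relies only on the bounds $\Delta_j$ and $T_j$ can get below the $\sqrt{\mu_j}$ barrier.
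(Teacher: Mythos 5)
Your construction is correct, and it realizes the same underlying idea as the paper's: a DAG that is $\sqrt{\mu_j}$ levels deep and $\sqrt{\mu_j}$ wide, where the coflows in each level are forced to serialize on a single server, while $T_j$ only sees one coflow per level and $\Delta_j$ only sees one level per server. The differences are in execution, and they are worth noting. The paper routes all $2K$ coflows of level $i$ over the \emph{same} source--destination pair $(i{+}1,i{+}2)$ and uses a carefully offset, \emph{partial} parent structure (each coflow depends on a window of $K$ coflows of the previous level), which permits adjacent levels to overlap by half and yields makespan $(2K{+}1)Kd$; it then exhibits that schedule and asserts its optimality, which only directly gives an upper bound on $C_{opt}$ rather than the lower bound the $\Omega(\cdot)$ statement requires. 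You instead give each phase a common source but distinct destinations and impose \emph{complete} bipartite Starts-After dependencies between consecutive phases, so the phases' time windows are provably disjoint and the bound $\mathrm{OPT}\ge k\cdot k$ follows immediately from the per-slot matching constraint at the shared sender; you then match it with an explicit schedule. Your version thus buys a fully rigorous two-sided argument ($\mathrm{OPT}=k^2$ exactly) at the cost of losing the (inessential for this lemma) extra structure in the paper's example showing that even partial inter-level dependencies do not help below the $\sqrt{\mu_j}$ barrier. One cosmetic caveat: your aside about routing $c_{\ell,t}$ from server $\ell$ to server $t$ with $m=k$ does not require ``self-flows'' in this model, since every server already appears once in $\calM_S$ and once in $\calM_R$ with independent input and output links; the remark is harmless but unnecessary.
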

\begin{figure}
	\centering
	\begin{subfigure}{0.5\columnwidth}
	\centering
	\includegraphics[width=1.2 in,height=1.6in]{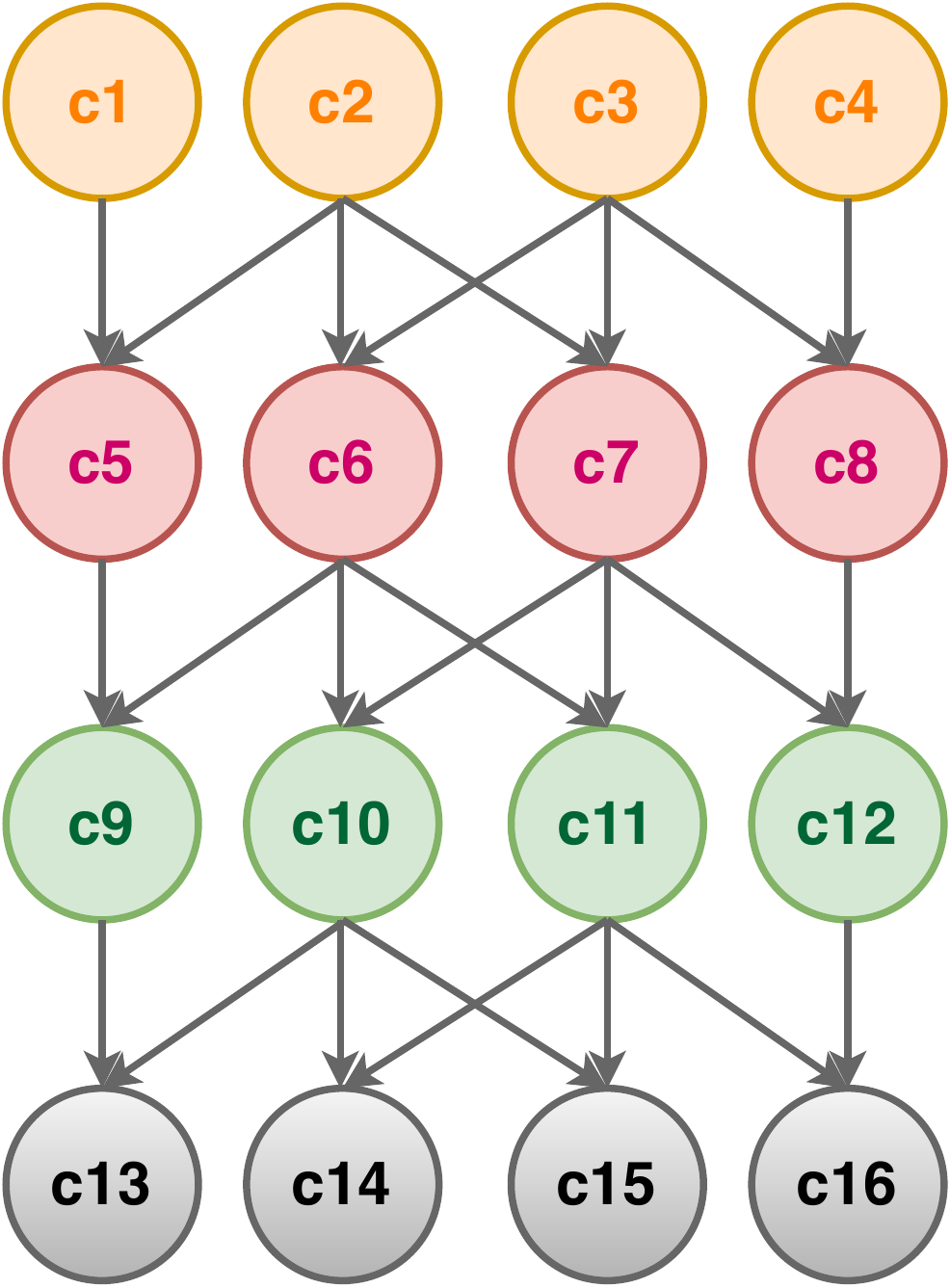}%
	\caption{A DAG job with $16$ coflows.}
	\label{tightDAG}
	\end{subfigure}\hfill
	\begin{subfigure}{0.5\columnwidth}
	%\vspace{20pt}
	\centering
	\includegraphics[width=1.2 in,height=1.6in]{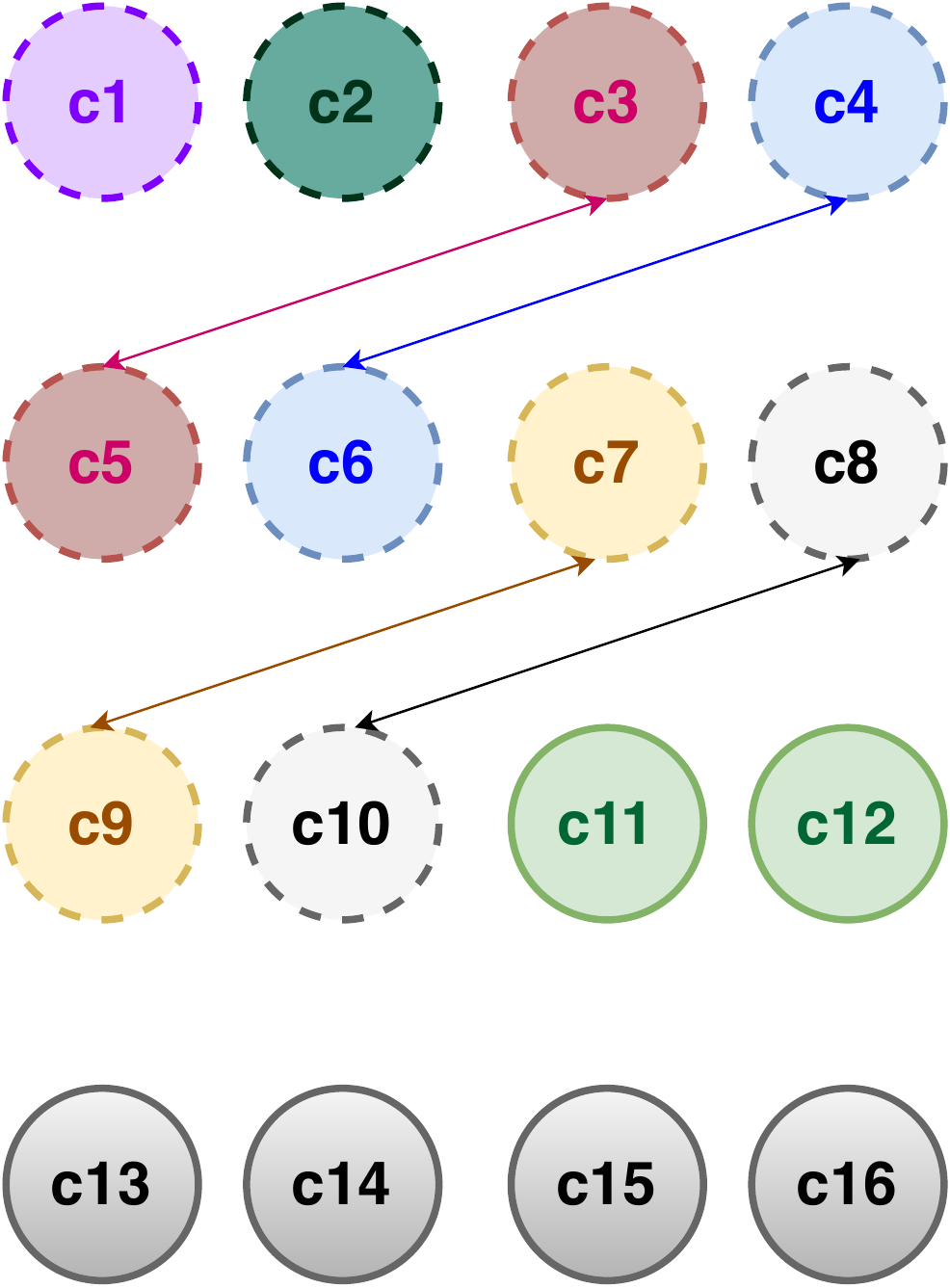}%
	%\vspace{13pt}
	\caption{Scheduling of coflows.}
	\label{tightDAGsche}
	\end{subfigure}\hfill
	\caption{An example of a DAG with $C_{opt}=\Omega(\sqrt{\mu}(\Delta+T))$.}
	%\vspace{-15pt}
	\label{tightexample}
\end{figure}

\begin{proof}
Consider a DAG job with $\mu_j$ coflows to be scheduled in an $m \times m$ switch, with $\mu_j=(2K)^2$ for some $K$ and $m > 2K$. Recall that $T_j$ and $\Delta_j$ denote the size of its critical path and its aggregate size, respectively. For simplicity, we drop the subscript $j$. We construct the job as follows. 

First, we describe the demand matrix of each coflow. For coflows $c=1, \dots, 2K$, each coflow has a single flow of size $d$ from server $1$ to server $2$, where $2K=\sqrt{\mu}$ by assumption. These coflows are the root nodes in the job's DAG. For coflows $c=i(2K)+1, \dots, (i+1)(2K)$, $i=1,\dots, (2K)-1$, each coflow has a single flow of size $d$ from server $i+1$ to server $i+2$. 

Now we specify the precedence constraints among coflows. We construct $G_j$ such that its height is $\sqrt{\mu}=2K$ and each of its coflow set has $\sqrt{\mu}=2K$ coflows (see Definition~\ref{CH3def5}).
Consider coflow $c \in S_i$ for $i=1, \dots, H_j-1$. If $i(2K)+1\leq c \leq (i+1/2)(2K)$, then the parent set of coflow $c$ is $\pi_c=\{c^\prime| c-2K \leq c^\prime \leq c-K-1\}$. If $(i+1/2)(2K)+1 \leq c \leq (i+1)(2K)$, then the parent set of coflow $c$ is $\pi_c=\{c^\prime| c-3K+1 \leq c^\prime \leq c-2K\}$. Figure~\ref{tightDAG} shows an example with $\mu=16$. For the constructed DAG, it is easy to see that $T=\Delta=2Kd=\sqrt{\mu}d$.

Next, we specify an optimal schedule for the constructed DAG, and compute its makespan denoted by $C_{opt}$. We first schedule coflows $1, \dots K$, which takes $Kd$ amount of time. We then schedule coflows $K+1$ and $2K+1$ simultaneously. This is feasible since there is no precedence constraint between these two coflows, all the parents of coflow $2K+1$ has been scheduled, and the two coflows do not share a server. Similarly, we schedule coflows $2(i-1/2)K+c$ and $2iK+c$, for $i=1, \dots, 2K-1$ and $c=1, \dots, K$ at the same time. Finally, we schedule the last $K$ coflows,  $c=4K^2-K+1, \dots, 4K^2$ back to back which takes $Kd$ amount of time. For instance, consider the example of Figure~\ref{tightexample}. Coflow $c_1$ and $c_2$ are scheduled back to back from time $0$ to $2d$. Then coflow $c_3$ and $c_5$ get scheduled from $2d$ to $3d$ and so on. Figure~\ref{tightDAGsche} shows the instance at which the first ten coflows (the coflows with dashed lines) are scheduled. The coflows with the same color (that are also linked by an arrow) have been scheduled at the same time.

By scheduling coflows in this fashion, all the precedence constraints and capacity constraints are respected. Moreover, the length of the schedule is $C_{opt}=(2K+1) K \times d=\Omega(\mu d)$. Therefore, $C_{opt}=\Omega(\sqrt{\mu}(\Delta+T))$.
\end{proof}
\section{Proofs of Main Results}
\label{proof}
In this section, we provide detailed proofs of the theorems stating performance guarantees for the proposed algorithms. 
Recall that $g(m)=\log(m)/\log(\log(m))$, and $h(m, \mu)=\log(m\mu)/(\log(\log(m\mu))$.
\subsection{Proofs Related To {\DMA}}
\label{proofDMA}
To prove Theorem~\ref{CH3Thm:mksgenjob} we need the following lemmas. 
\begin{lemma}
	\label{CH3infeaslen}
The length of the infeasible merged schedule (Step 3) is at most $(\mu+1/\beta)\Delta$.
\end{lemma}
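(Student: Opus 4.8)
The plan is to bound the length of the infeasible merged schedule (produced in Step~3 of {\DMA}) by accounting separately for the two sources of temporal extent: the random delay applied in Step~2, and the intrinsic length of each job's isolated schedule from Step~1. First I would observe that for any job $j$, its isolated schedule (obtained by running {\BNA} successively on coflows $c=1,\dots,\mu_j$ in topological order) has length exactly $\sum_{c=1}^{\mu_j} D^{(cj)}$ by Lemma~\ref{CH3optmkspan}. Since each coflow effective size $D^{(cj)}$ is at most $\Delta_j\le\Delta$ (the per-coflow load on any server is dominated by the aggregate load of all of job $j$'s coflows, which in turn is dominated by the aggregate over all jobs), and since $\mu_j\le\mu$, the isolated schedule of job $j$ has length at most $\mu\Delta$. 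More carefully, one should note $\sum_{c=1}^{\mu_j} D^{(cj)} \le \mu_j \max_c D^{(cj)} \le \mu \Delta$, or even just bound it by $\mu \Delta_j \le \mu \Delta$.

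Next I would add the delay: in Step~2 the isolated schedule of each job is shifted forward by a random integer in $[0,\Delta/\beta]$, so the latest time slot occupied by job $j$'s (delayed) schedule is at most $\Delta/\beta + \mu\Delta = (\mu + 1/\beta)\Delta$. Since the merged schedule in Step~3 is simply the superposition (slot-wise sum of matchings) of all the delayed isolated schedules, no job contributes a packet beyond time $(\mu+1/\beta)\Delta$, hence the merged schedule occupies no time slot later than this bound. This gives the claimed length $(\mu+1/\beta)\Delta$.

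The argument is essentially a routine accounting; the only point requiring a moment of care is the inequality $\sum_{c=1}^{\mu_j} D^{(cj)} \le \mu\Delta$, i.e. that the sum of per-coflow effective sizes for a single job is bounded by $\mu$ times the global aggregate load $\Delta$. This follows because for each coflow $c$ of job $j$ and each server $i$, the load $d_i^{(cj)}$ is at most $\sum_{c'=1}^{\mu_j} d_i^{(c'j)} = d_i^j \le \Delta_j \le \Delta$ (loads are nonnegative), so $D^{(cj)} = \max_i d_i^{(cj)} \le \Delta$, and summing over the at most $\mu$ coflows of job $j$ gives the bound. I do not anticipate any genuine obstacle here; the lemma is a preparatory bound whose role is to feed into the subsequent collision/Chernoff analysis that establishes the $O(\mu g(m))$ blow-up incurred in Step~4.
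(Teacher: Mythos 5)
Your proof is correct and follows essentially the same route as the paper: bound each isolated schedule's length by $\mu_j\Delta_j\le\mu\Delta$ (using $D^{(cj)}\le\Delta_j$, which the paper asserts and you justify slightly more explicitly via nonnegativity of the per-server loads) and then add the maximum delay $\Delta/\beta$. No gaps.
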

\begin{proof}
First note that the isolated schedule for job $j$ in Step 1 spans from $0$ to at most $\mu_j \Delta_j$, since the effective size of each of its coflows is at most $\Delta_j$. By delaying the isolated schedules by at most $\Delta/\beta$,  length of the infeasible merged schedule is at most $\max_j (\mu_j \Delta_j)+\Delta/\beta$ which is bounded from above by $(\mu+1/\beta)\Delta$.
\end{proof}

%For each time slot $t$, consider an interval of length $\alpha_t$, and use {\BNA} to feasibly schedule all its packets 

\begin{lemma}
	\label{CH3numcolli1}
Let $\alpha_t \geq 1$ denote the maximum number of packets that a server needs to send or receive at time slot $t$ in the merged schedule (Step 3). For any $t \in [0,(\mu+1/\beta)\Delta]$, $\mathbb{E}[\alpha_t] =O(g(m))$.
\end{lemma}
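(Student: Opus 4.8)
\textbf{Proof proposal for Lemma~\ref{CH3numcolli1}.}

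The plan is to fix a server $i\in\overline{\calM}$ and a time slot $t$, and bound the number of jobs whose delayed isolated schedule requires server $i$ to send or receive a packet at time $t$; summing the (at most one packet per job) contributions gives $\alpha_t$ at that server, and taking the max over the $2m$ servers gives $\alpha_t$. So I would first define, for each job $j$, the indicator $X_{ij}$ that the delayed isolated schedule of $j$ uses server $i$ at time $t$. Since in the isolated schedule of job $j$ at most one packet is sent/received by $i$ in any slot, and the random delay $t_j$ is uniform in $\{0,1,\dots,\lfloor\Delta/\beta\rfloor\}$, the event $X_{ij}=1$ happens iff $t-t_j$ lands in the (at most $\Delta_j\le\Delta$, but more relevantly the $d_i^j$) slots at which $i$ is active in $j$'s isolated schedule. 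Hence $\prob{X_{ij}=1}\le d_i^j\big/(\Delta/\beta+1)\le \beta d_i^j/\Delta$, where $d_i^j$ is the load job $j$ places on server $i$ (as in Definition~\ref{CH3def1} applied to the aggregate coflow $\calD^j$). The $X_{ij}$ are independent across $j$ because the delays are chosen independently.

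Next I would set $\alpha_t^{(i)}:=\sum_{j}X_{ij}$, so $\mu_i:=\EE[\alpha_t^{(i)}]=\sum_j\prob{X_{ij}=1}\le (\beta/\Delta)\sum_j d_i^j\le \beta$, since $\sum_j d_i^j\le\Delta$ by definition of $\Delta$ as the maximum aggregate load on any server. Thus each $\alpha_t^{(i)}$ is a sum of independent $\{0,1\}$ variables with mean at most the constant $\beta$, and a Chernoff bound gives $\prob{\alpha_t^{(i)}\ge k}\le \big(e\mu_i/k\big)^{k}\le\big(e\beta/k\big)^{k}$. Choosing $k=\Theta(\log m/\log\log m)=\Theta(g(m))$ makes this probability at most, say, $m^{-3}$; a union bound over the $2m$ servers shows that with probability $1-O(m^{-2})$ we have $\alpha_t=\max_i\alpha_t^{(i)}=O(g(m))$. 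Finally, $\alpha_t$ is trivially bounded by the total number of jobs $n$ (or by $\mu+1/\beta$ times something polynomial), so on the complementary $O(m^{-2})$-probability event its contribution to the expectation is negligible; therefore $\EE[\alpha_t]\le O(g(m))\cdot 1 + n\cdot O(m^{-2})=O(g(m))$, using that $n$ is polynomially bounded (or, if one wants a clean bound, carrying the tail sum $\sum_{k\ge g(m)}(e\beta/k)^k$ which is itself $O(g(m))$ or even $O(1)$). This establishes the claim.

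The main obstacle is getting the independence and the per-slot activity count exactly right: one must be careful that in job $j$'s \emph{isolated} schedule (built by concatenating {\BNA} outputs of its $\mu_j$ coflows along a topological order) server $i$ is active in at most $d_i^j$ slots total, which is where the aggregate-load Definition~\ref{CH3def1} applied to $\calD^j$ enters, and that the only randomness is the single shared offset $t_j$ per job, so within one job the events at server $i$ over different target slots are \emph{not} independent but across jobs they are. The Chernoff/union-bound calibration of $k=\Theta(g(m))$ against the $\mathrm{poly}(m)$ number of (server, relevant-time) pairs is the quantitative heart of the $g(m)$ factor, and mirrors standard Valiant--Brebner / Leighton--Maggs--Rao style routing arguments; the only subtlety is ensuring the bound is on $\EE[\alpha_t]$ for a \emph{fixed} $t$ as stated (so no union bound over $t$ is needed here), which makes the argument above actually easier than a high-probability-over-all-$t$ statement.
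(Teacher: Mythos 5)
Your proof is correct and follows essentially the same route as the paper: the same per\nobreakdash-server, per\nobreakdash-job indicators, the same independence across jobs coming from the independent random delays, and the same key estimate $\sum_{j}\prob{z_{ijt}=1}\le \beta$ via $\prob{z_{ijt}=1}\le \beta d_i^j/\Delta$ and $\sum_j d_i^j\le\Delta$, calibrated to $\Theta(g(m))$ by an exponential-moment (Chernoff) computation. The only difference is the last step: the paper bounds $\mathbb{E}[\delta^{\alpha_t}]\le 2m\,e^{\beta(\delta-1)}$ (replacing the max over servers by a sum inside the exponential moment) and applies Jensen to get $\mathbb{E}[\alpha_t]\le\delta=O(g(m))$ directly, which sidesteps your conversion from a high-probability bound to an expectation bound --- your main-line version of that conversion leans on $n$ being polynomially bounded in $m$, which the paper does not assume, but your parenthetical repair of integrating the tail $\sum_{k\ge k_0}2m\,(e\beta/k)^{k}=O(1)$ for $k_0=\Theta(g(m))$ is sound and closes the argument without that assumption.
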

\begin{proof}
	 Let $\overline{\calM} :=\calM_S \cup \calM_R$. To prove the lemma, we define random variable $z_{ijt}$ to be 1 if some flow of job $j$ with an end point on server  $i$ is scheduled at time slot $t$. Then $\alpha_t=\max_{i \in \overline{\calM} }\sum_{j \in \calN} z_{ijt}$. Further, note that due to the random delay of jobs' isolated schedules, variables $z_{ijt}$, $j \in \calN$ are mutually  independent. Let $\delta=a g(m)$ for some constant $a$ such that $\delta>1$. Therefore,
	\be
	\label{CH3ineq1}
	\mathbb{E}[\delta^{\alpha_t}] =\mathbb{E}[\delta^{\max_{i \in \overline{\calM} }\sum_{j \in \calN} z_{ijt}}] \leq \expect{\sum_{i \in \overline{\calM} } \delta^{\sum_{j \in \calN} z_{ijt}}}.
	\ee
	Define $p_{ijt}$ to be the probability that $z_{ijt}=1$. By the independent property of $z$ variables, we can write
	\be
	\begin{aligned}
	\label{CH3ineq2}
	\expect{\delta^{\sum_{j \in \calN} z_{ijt}}}&= \Pi_{j \in \calN} \expect{\delta^{z_{ijt}}}\\
	& =\Pi_{j \in \calN} (p_{ijt}\delta+(1-p_{ijt}))\\
	& \leq \Pi_{j \in \calN} e^{p_{ijt}(\delta-1)}=e^{{(\delta-1)}\sum_{j \in \calN}p_{ijt}}\\
	&=e^{(\delta-1)\mathbb{E}[ \sum_{j \in \calN} z_{ijt}]}\leq e^{\beta(\delta-1)},
	\end{aligned}
	\ee
	where the last inequality is due to $\expect{\sum_{j \in \calN} z_{ijt}} \leq \beta$. This is because by choosing delays uniformly at random, $\mathbb{E}[ z_{ijt}]$ is at most the load of job $j$ on server $i$ divided by $\Delta/\beta$, i.e., $\beta d_i^j/\Delta$, where $d_i^j$ is the load of job $j$ (or equivalently the aggregate coflow $\calD^j$) on server $i$ (see Definition~\ref{CH3def1}). Thus, 
	\begin{equation*}
	\expect{\sum_{j \in \calN} z_{ijt}}= \sum_{j \in \calN}\mathbb{E}[ z_{ijt}] \leq \beta, 	
	\end{equation*}
	as $ \sum_{j \in \calN} d_i^j \leq \Delta$ by definition.

	Combining Inequality~\dref{CH3ineq1} and~\dref{CH3ineq2}, and by Jensen's inequality we can write,
	\be
	\label{CH3ineq3}
	\delta^{\mathbb{E}[\alpha_t]} \leq \mathbb{E}[\delta^{\alpha_t}] \leq \sum_{i \in \overline{\calM}} e^{\beta(\delta-1)} = 2me^{\beta(\delta-1)}
	\ee 
	Now, note that if we choose $a$ sufficiently large, then $2me^{\beta(\delta-1)} \leq \delta^\delta$, by definition of $g(m)$. Therefore, we can conclude that $\mathbb{E}[\alpha_t] \leq \delta$, and the proof is complete.
\end{proof}
\begin{lemma}
\label{CH3feaslen}
For any $\epsilon>0$, the probability that the length of the final schedule (Step 4) is greater than $O(g(m))(\mu+1/\beta)\Delta$, is less than $\epsilon$.
\end{lemma}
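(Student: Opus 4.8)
The plan is to express the length of the final schedule (Step~4) as the sum $\sum_t \alpha_t$ over the time slots $t$ of the infeasible merged schedule of Step~3, bound the number of such slots deterministically via Lemma~\ref{CH3infeaslen}, bound $\mathbb{E}[\alpha_t]$ via Lemma~\ref{CH3numcolli1}, and then invoke Markov's inequality. The constant hidden in the $O(g(m))$ of the statement is allowed to depend on $\epsilon$, which is exactly what makes a Markov-level argument sufficient.

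First I would make precise why Step~4 inflates each time slot $t$ of the Step~3 schedule into an interval of length exactly $\alpha_t$: the superposition at slot $t$ of the per-job matchings is a demand matrix whose row and column sums are precisely the numbers of packets the servers must send and receive at $t$, so its effective size in the sense of Definition~\ref{CH3def1} equals $\alpha_t$, and by the Birkhoff--von Neumann guarantee underlying {\BNA} all of these packets can be served within an interval of length $\alpha_t$. Concatenating these intervals, the final schedule has length $\sum_t \alpha_t$. By Lemma~\ref{CH3infeaslen} the merged schedule spans at most $(\mu+1/\beta)\Delta$ slots, so this is a sum of at most $(\mu+1/\beta)\Delta$ nonnegative terms, each indexed by some $t\in[0,(\mu+1/\beta)\Delta]$.

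Next, Lemma~\ref{CH3numcolli1} gives $\mathbb{E}[\alpha_t]\le c\,g(m)$ for an absolute constant $c$ and every $t$ in that range. By linearity of expectation, $\mathbb{E}\big[\sum_t \alpha_t\big]\le c\,g(m)(\mu+1/\beta)\Delta =: \Lambda$. Since $\sum_t \alpha_t \ge 0$, Markov's inequality gives $\Pr\big[\sum_t \alpha_t > \Lambda/\epsilon\big]\le \epsilon$ for every $\epsilon>0$, and $\Lambda/\epsilon = (c/\epsilon)\,g(m)(\mu+1/\beta)\Delta$ is $O(g(m))(\mu+1/\beta)\Delta$ with a hidden constant depending only on $\epsilon$, which is exactly the claimed bound.

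I do not anticipate a genuine obstacle. The one subtlety to keep in mind is that the random variables $\alpha_t$ are highly dependent across $t$ (they are all functions of the same random job delays $t_j$), so one should not attempt a Chernoff-type concentration bound on their sum; fortunately the statement only asks for a Markov-level tail, for which linearity of expectation suffices and the dependence is irrelevant. If a true high-probability bound were desired one could instead push the moment-generating estimate $\mathbb{E}[\delta^{\alpha_t}]\le \delta^\delta$ with $\delta=\Theta(g(m))$ from the proof of Lemma~\ref{CH3numcolli1} through a union bound over the at most $(\mu+1/\beta)\Delta$ slots, but this sharpening is not needed for Theorem~\ref{CH3Thm:mksgenjob}.
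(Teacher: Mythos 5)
Your proof is correct and follows essentially the same route as the paper's: bound the number of slots of the merged schedule by Lemma~\ref{CH3infeaslen}, write the final length as $\sum_t \alpha_t$, apply Lemma~\ref{CH3numcolli1} with linearity of expectation, and finish with Markov's inequality. Your added remarks on why {\BNA} serves each slot in exactly $\alpha_t$ time and on the dependence of the $\alpha_t$'s are sound elaborations, not deviations.
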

\begin{proof}
Recall that the constructed merged schedule (Step 3) spans from time $0$ to at most $(\mu+1/\beta)\Delta$ due to Lemma~\ref{CH3infeaslen}. Note that, the length of the final schedule is at most $\sum_{t \in [0,(\mu+1/\beta)\Delta)} \alpha_t$. Using Lemma~\ref{CH3numcolli1} and Markov inequality, for any $\epsilon>0$,
	\begin{equation}
	\begin{aligned}
	\prob {\sum_{t \in [0,(\mu+1/\beta)\Delta)} \alpha_t \geq (a/\epsilon) g(m)(\mu+1/\beta)\Delta }  \leq \epsilon
	\end{aligned}
	\end{equation}
	 Therefore, the proof is complete.

\end{proof}
\begin{lemma}
\label{CH3timecomplexityDMA}
Steps 3 and 4 in {\DMA} can be executed in polynomial time.
\end{lemma}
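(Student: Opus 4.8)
The plan is to analyze the two steps separately, bounding the number of elementary operations each performs, and to rely on the fact (established in Lemma~\ref{CH3optmkspan} and the discussion around {\BNA}) that {\BNA} applied to an $m \times m$ demand matrix $\calD$ with effective size $D$ runs in polynomial time and returns a list of at most $O(m^2)$ matchings. First I would handle Step~3 (greedy merging). After the random delays are fixed, each isolated schedule of job $j$ is a sequence of matchings with associated time intervals; merging amounts to, for each time slot $t$, summing the (at most $n$) matrices of the jobs active at $t$. The total horizon is $(\mu+1/\beta)\Delta$ by Lemma~\ref{CH3infeaslen}. The key observation to keep this polynomial (rather than pseudo-polynomial in $\Delta$) is that the merged matrix changes only at the $O(nm^2)$ break-points coming from the union of all the $\tau_{cj}$ lists and their delays; between consecutive break-points the merged matrix is constant, so one only needs to compute and store $O(nm^2)$ distinct aggregate matrices, each obtainable by $O(n)$ matrix additions of size $m^2$. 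Hence Step~3 costs $O(n^2 m^4)$ time, which is polynomial in the input size.

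Next I would handle Step~4. For each of the $O(nm^2)$ maximal intervals on which the merged matrix is constant, say with aggregate matrix $\calD_t$ and congestion $\alpha_t$ (the maximum row/column sum of $\calD_t$ restricted to that slot, scaled by the interval length), we feed $\calD_t$ (appropriately scaled by the interval length) into {\BNA}; this produces a feasible sub-schedule of length $\alpha_t$ using at most $O(m^2)$ matchings, in time polynomial in $m$ and $\log$ of the entries. Concatenating these sub-schedules over all intervals gives the final feasible schedule. The number of {\BNA} invocations is $O(nm^2)$, and each invocation is polynomial, so Step~4 is polynomial as well. I would also note that all the quantities involved ($\Delta$, the entries $d_{sr}^{cj}$, the delays) are bounded by the numeric input, so ``polynomial time'' here means polynomial in $n$, $m$, and the bit-length of the demands, which is the correct notion.

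The main obstacle I anticipate is making sure the running time is genuinely polynomial in the \emph{input size} and not merely pseudo-polynomial in $\Delta$: a naive slot-by-slot implementation of Steps~3 and~4 would loop over all $(\mu+1/\beta)\Delta$ time slots, which is exponential in the bit-length of the demands. The resolution, as sketched above, is to exploit the piecewise-constant structure of the merged schedule --- it has only $O(nm^2)$ distinct pieces because each isolated schedule is itself piecewise constant with $O(m^2)$ pieces (a consequence of {\BNA} returning $O(m^2)$ matchings) --- and to run {\BNA} once per piece rather than once per slot. Everything else is routine bookkeeping: accounting for the $O(m^2)$ matchings per coflow, the $O(n\mu)$ coflows, and the polynomial cost of each {\BNA} call.
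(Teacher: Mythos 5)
Your proposal is correct and follows essentially the same route as the paper's proof: both identify the pseudo-polynomial pitfall of iterating over all $(\mu+1/\beta)\Delta$ slots, exploit the piecewise-constant structure of the merged schedule induced by the fact that each {\BNA} call returns at most $m^2$ matchings, and invoke {\BNA} once per maximal constant interval rather than once per time slot. The only nit is that the number of break-points should be $O(n\mu m^2)$ (as your closing paragraph correctly tallies from the $O(n\mu)$ coflows times $O(m^2)$ matchings each), not the $O(nm^2)$ stated mid-argument.
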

\begin{proof}
In view of Steps 3 and 4 in {\DMA} algorithm, we may need to run {\BNA} for $(\mu+1/\beta)\Delta$ times. However, in the case that $\Delta$ is not polynomially bounded in $m$, $n$, and $\mu$, we can modify the last step of {\DMA} to ensure that it runs in polynomial time. To do so, define
	%\begin{equation*}
	 $H=\{\tau_{cj} | c \in G_j, j \in \calN \}  \ \text{and} \  L=\{L_{cj} | c \in G_j, j \in \calN\},$
	 %\end{equation*} 
	 to be the set of all scheduling times and matchings. we sort $H$, and let $\calI$ be the set of time intervals created from elements of $B$, $\calI=\{[h_k,h_{k+1})| k=1,\dots |H|-1\}$. Thus, $\calI$ consists of the time intervals during which the corresponding matching of every coflow is fixed. 
	 
	 For each interval $I$ in $\calI$, we merge the matchings of coflows, namely $L_{cj}(k)$'s, for which the interval $I$ is entirely in the corresponding time interval $[\tau_{cj}(k), \tau_{cj}(k+1))$. In other words, we compute 
	\begin{equation*}
	\calD=\sum_{c,j,k: I \subseteq [\tau_{cj}(k), \tau_{cj}(k+1))} L_{cj}(k). 
	\end{equation*}
Finally, for each merged matching $\calD$, we find an optimal schedule using {\BNA}, i.e., $L,\tau=${\BNA}($l_I \times\calD$), where $l_I$ is length of the interval $I$ of merged matching $\calD$. Then we schedule demand matrix $l_I \times\calD$ according to $L$ and $\tau$.

Note that whenever we run {\BNA}, the number of elements in the list $L$, output of {\BNA}, is at most $m^2$.  This is because according to line 5 in {\BNA}, at each iteration, $t$ is computed such that at least one node becomes tight (i.e., it appears in the set $\Omega$ of line 3 in the next iteration) or a flow completes. Further, $|\tau|=|L|+1$ and the last element of $\tau$ is $D$.
Hence, in view of Steps 3 and 4 in {\DMA} algorithm, we need to run {\BNA} for at most $O(\mu n m^2)$ times as the number of intervals in the set $\calI$ is $O(\mu n m^2)$. Combining this with the fact that {\BNA} runs in polynomial time, the proof is complete.
\end{proof}
We are now ready to prove Theorem~\ref{CH3Thm:mksgenjob}.
\begin{proof}[Proof of Theorem~\ref{CH3Thm:mksgenjob}]
	Steps 1 and 2 in {\DMA} can be executed in polynomial time.  Combining this with Lemma~\ref{CH3timecomplexityDMA}, we can easily conclude that {\DMA} runs in polynomial time.
	
	Moreover, given that $\Delta$ is a lower bound for the optimal makespan, $\beta$ is a constant, and Lemma~\ref{CH3feaslen}, we conclude that makespan of the final schedule is at most $O(\mu g(m))$ of the optimal makespan with high probability.
	\end{proof}

\subsection{Proofs Related To {\DMART} and {\OCDMA}}
\label{proofDMART}
Consider {\DMART}. Let $\alpha_t \geq 1$ denote the maximum number of packets that a server needs to send or receive at time slot $t$ in the infeasible merged schedule (Step 5 in {\DMART}). To prove Theorem~\ref{CH3Thm:mksgenjobRT}, we first state the following lemma that provides a high-probability bound on $\alpha_t$. 

\begin{lemma}
	\label{CH3numcolli2}
For any $\epsilon>0$, $\max_t \alpha_t \leq k_\epsilon \sqrt{\mu_j} h(m,\mu_j)$, with probability greater than $(1-\epsilon)$, for a constant $k_\epsilon$ depending on $\epsilon$, for $t \in [0,\Delta_j/\beta+T_j]$.
\end{lemma}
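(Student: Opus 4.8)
The plan is to bound $\alpha_t$ by controlling, for each server $i\in\overline{\calM}$, the number of coflows of job $j$ that touch server $i$ and are scheduled in time slot $t$ in the infeasible merged schedule, and then take a union bound over the $2m$ servers. I would start by setting up the relevant indicator random variables: for a server $i$ and a coflow $c$ of job $j$ with an endpoint on $i$, let $z_{ict}$ be the indicator that $c$ is being scheduled at time $t$ in the merged schedule. Then $\alpha_t=\max_{i\in\overline{\calM}}\sum_{c:\,i\in c} z_{ict}$. The crucial structural point, which distinguishes this from the {\DMA} analysis in Lemma~\ref{CH3numcolli1}, is that the $z_{ict}$ for fixed $i$ are \emph{not} mutually independent, because coflows lying on a common path sub-job share the same random delay $d_p$. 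However, coflows on the same directed path in $G_j$ can never collide with each other (their schedules are laid out back-to-back along that path), so only \emph{unrelated} coflows — those in distinct path sub-jobs — contribute to a collision at server $i$. Since $G_j$ is a rooted tree, I would argue that the coflows on server $i$ that can simultaneously be active break into at most $\sqrt{\mu_j}$ ``independent'' groups in a suitable sense, or alternatively that the total load on server $i$ across path sub-jobs, divided by the delay range $\Delta_j/\beta$, is controlled.

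Next I would carry out the tail estimate. Fix $i$ and $t$. Conditioning appropriately, the number of path sub-jobs through server $i$ that can place a coflow at $t$ is at most the number of path sub-jobs, but more usefully, since each path sub-job contributes at most its total server-$i$ load divided by the delay window, and the sum of these loads over all path sub-jobs is at most $\sqrt{\mu_j}$ times the server load $d_i^j$ (because every coflow lies on at most... actually a coflow can lie on many path sub-jobs in a fan-in tree, which is exactly why the $\sqrt{\mu_j}$ factor appears). I would use the fact that the height $H_j$ and the number of leaves are each at most $\mu_j$, and that one of them being large forces the other to be small — more precisely, if a server carries coflows from $\ell$ distinct path sub-jobs whose pairwise-unrelated portions have total effective size $s$, then either $\ell\le\sqrt{\mu_j}$ or the relevant paths are long, in which case $s$ is large relative to $\Delta_j$ and the collision probability per slot is tiny. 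Plugging $\EE[\sum_{c} z_{ict}]=O(\sqrt{\mu_j}\,\beta)$ into a Chernoff/Bernstein-type bound of the same flavor as inequality~\dref{CH3ineq2}, I get $\prob{\sum_c z_{ict}\ge k\sqrt{\mu_j}\,h(m,\mu_j)}\le (m\mu_j)^{-\Omega(k)}$ for a suitable constant, once $h(m,\mu_j)=\log(m\mu_j)/\log\log(m\mu_j)$ is chosen so that the $\delta^\delta$ term dominates $2m\mu_j\,e^{\beta(\delta-1)}$ with $\delta=\Theta(\sqrt{\mu_j}\,h(m,\mu_j))$.

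Then I would finish with a union bound. The merged schedule has length at most $\Delta_j/\beta+T_j\le (1+1/\beta)(\Delta_j+T_j)$, which is at most $\mathrm{poly}(m,\mu_j)$ after the same polynomial-time truncation trick used in Lemma~\ref{CH3timecomplexityDMA} (so only polynomially many distinct slots/intervals matter); taking the union over these $\le\mathrm{poly}(m,\mu_j)$ slots and over the $2m$ servers, and absorbing the polynomial into the exponent of $(m\mu_j)^{-\Omega(k_\epsilon)}$ by enlarging $k_\epsilon$, gives $\max_t\alpha_t\le k_\epsilon\sqrt{\mu_j}\,h(m,\mu_j)$ with probability at least $1-\epsilon$.

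The main obstacle, and the step I would spend the most care on, is making precise the claim that the effective ``independence width'' at any server is $O(\sqrt{\mu_j})$ despite arbitrary sharing of coflows among path sub-jobs of a rooted tree — i.e., pinning down exactly why $\EE[\sum_c z_{ict}]=O(\sqrt{\mu_j}\beta)$ rather than $O(\mu_j\beta)$. This is the rooted-tree analogue of the $\sqrt{\mu}$ lower bound exhibited in Lemma~\ref{CH3tightness}, and I expect it requires the observation that in a rooted tree a coflow at depth-level $i$ (in the sense of the sets $S_i$ of Definition~\ref{CH3def5}) is scheduled according to a \emph{single} chosen path $p$ with $t_c=t_{c,p}$, so for a fixed realization of delays each coflow is placed at one deterministic time; the randomness enters only through the $|\calP_j|$ leaf-delays, and the number of pairwise-unrelated coflows simultaneously active at server $i$ is at most $\min(H_j,|\calP_j|)\le\sqrt{\mu_j}$ when the tree is ``balanced'' and is controlled by the load argument otherwise. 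Handling both regimes uniformly, and confirming the conditioning is legitimate (the event that $c$ is scheduled according to $p$ depends only on the delays of paths through $c$'s ancestors), is the technical heart of the proof.
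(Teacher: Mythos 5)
Your structural observations are the right ones --- coflows on a common directed path never collide, only antichains of pairwise-unrelated coflows can contribute to $\alpha_t$, and for unrelated coflows in a rooted tree the sets of path sub-jobs through them are disjoint, which is where $\sqrt{\mu_j}$ ultimately comes from --- but the probabilistic machinery you propose for converting them into a tail bound does not go through, and the step you defer as ``the technical heart'' is precisely the one that cannot be done your way. You propose to establish $\EE\bigl[\sum_c z_{ict}\bigr]=O(\sqrt{\mu_j}\,\beta)$ and then apply a Chernoff/MGF bound ``of the same flavor as~\dref{CH3ineq2}.'' But inequality~\dref{CH3ineq2} rests entirely on the factorization $\expect{\delta^{\sum_j z_{ijt}}}=\Pi_j\expect{\delta^{z_{ijt}}}$, i.e.\ on \emph{mutual independence} of the indicators, which you yourself note fails here: the start time $t_c$ is essentially a maximum over the delays of the path sub-jobs through $c$'s ancestors, so indicators of two unrelated coflows whose ancestor sets share leaves are correlated through the common delays (shared maxima push toward \emph{positive} association --- the wrong direction for Chernoff), and no negative-association substitute is claimed or apparent. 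Moreover, the expectation bound itself is not established and is delicate: summing the natural per-packet bound $P_u\le\beta|\calP_{u,j}|/\Delta_j$ over a server gives only $O(\beta\mu_j)$ in general (e.g.\ a caterpillar whose spine carries all of server $i$'s load, where each spine coflow lies on many path sub-jobs), and recovering $O(\sqrt{\mu_j})$ requires re-injecting the fact that a chain contributes at most one active coflow deterministically --- i.e.\ the very antichain decomposition you are trying to avoid formalizing. Both halves of your plan therefore reduce to the same unproved claim.

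The paper's proof sidesteps the dependence entirely with a first-moment count over $\alpha$-subsets of packets rather than a concentration bound on the sum: $P_0\le\binom{\Delta_j}{\alpha}\Pi_{i=1}^{\alpha}P_{u_i}$, where the product of marginals is a legitimate upper bound on the joint probability because $\alpha$ colliding packets must be scheduled according to \emph{distinct} path sub-jobs (same-coflow and same-path packets never collide) and the delays $d_p$ of distinct path sub-jobs are independent. Disjointness of the sets $\calP_{u_i,j}$ for unrelated coflows gives $\sum_{i=1}^{\alpha}|\calP_{u_i,j}|\le|\calP_j|\le\mu_j$, and AM--GM turns this into $\Pi_{i=1}^{\alpha}|\calP_{u_i,j}|\le(\mu_j/\alpha)^{\alpha}$, yielding $P_0\le(e\beta\mu_j/\alpha^2)^{\alpha}$; the $\alpha^2$ in the denominator is what forces $\alpha=\Theta(\sqrt{\mu_j}\,h(m,\mu_j))$ after the union bound over the $2m(\Delta_j+T_j)$ server--time pairs (with the flow-size rounding that keeps this count polynomial, which you do anticipate correctly). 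To salvage your route you would need a concentration inequality for sums of positively-correlated indicators with the stated mean --- exactly what the subset-counting argument is there to replace.
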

\begin{proof}
	To prove the lemma, let $P$ denote the probability that any server at any time is assigned more that $\alpha$ packets (to be specified shortly). In what follows we first bound $P_0$ the probability that at least $\alpha$  packets are scheduled to be sent or received by a server $i$ at time $t$. Note that there are at most ${\Delta_j} \choose {\alpha}$ ways to choose $\alpha$ packets from those that have an end point (source or destination) on server $i$. For packet $u$, the probability that it is scheduled at time $t$ is at most $\beta|\calP_{u,j}|/\Delta_j$, where, $\calP_{u,j} \subseteq \calP_{j}$ is the set of path-jobs containing packet $u$ (or equivalently, the coflow to which packet $u$ belongs.). That is because of the random uniform delay for scheduling coflows in $S_0$. More precisely, let $E_{u,t}$ be the event that a specific packet $u$ is scheduled at time $t$ and $P_u$ be the probability that $E_{u,t}$ happens. Furthermore, let $E_{u \in p}$ denote the event that scheduling of $u$ in the final schedule is according to the schedule of path-job $p$.  Then,
	\be
	\label{CH3pu1}
	\begin{aligned}
	P_u & = \sum_{p \in \calP_{u,j}}\mathbb{P}\{E_{u,t} , E_{u \in p}\}\stackrel{(1\star)}=\sum_{p \in \calP_{u,j}} \mathbb{P}\{d_p=t_p,E_{u \in p}\}\\
	& =\sum_{p \in \calP_{u,j}}\mathbb{P}\{E_{u \in p} | d_p=t_p\} \mathbb{P}\{d_p=t_p\}
	\end{aligned}
	\ee
	Equality ($1\star$) is because the probability that packet $u$ is scheduled at $t$ \emph{and} according to the path-job $p$ is equal to the probability that path-job $p$ is delayed by some specific time $t_p$ \emph{and} packet $u$ is scheduled according to the path-job $p$. Regardless of the value of $t$, the probability that path-job $p$ is delayed by $t_p$ is either $\beta/\Delta_j$ or zero (if $t_p<0$). Hence,
	\be
	\label{CH3pu2}
	P_u \leq \frac{\beta}{\Delta_j} \sum_{p \in \calP_{u,j}} \mathbb{P}\{E_{u \in p}| d_p=t_p\} \leq \frac{\beta |\calP_{u,j}|}{\Delta_j}
	\ee
	%where, $|\calP_{u,j}|$ is cardinality of set $\calP_{u,j}$.
	
	Moreover, for two different packets $u$ and $v$ with at least a common (source or destination) server, the probability that they collide (i.e., are assigned to the same time slot) is zero if they both belong to the same coflow or same path-job, due to the feasible scheduling of each coflow and satisfaction of precedence constraints at each path-job. Otherwise, the probability that the two events $E_{u,t}$ and $E_{v,t}$ happen can be upper-bounded by multiplications of two terms of the form $\beta |\calP_{.,j}|/\Delta_j$ (using arguments similar to Equations~\dref{CH3pu1} and~\dref{CH3pu2}), since the random delays are chosen independently. 
%	\be
%	\begin{aligned}
%	\mathbb{P}\{E_{u,t}, E_{v,t}\}&=\sum_{\{p_u,p_v\}} \mathbb{P}\{E_{u,t} ,E_{v,t} , E_{u \in p_u}, E_{v \in p_v} \} \\
%	&=\sum_{\{p_u,p_v\}} \mathbb{P}\{d_{p_u}= t_{p_u},d_{p_v}=t_{p_v} , E_{u \in p_u}, E_{v \in p_v}\} \\
%	&=\sum_{\{p_u,p_v\}}\mathbb{P}\{E_{u \in p_u}, E_{v \in p_v} | d_{p_u}= t_{p_u},d_{p_v}=t_{p_v} \}  \mathbb{P}\{d_{p_u}= t_{p_u},d_{p_v}=t_{p_v}\}
%	\\
%	& \leq (\frac{ \beta}{\Delta_j})^2 |\calP_{u,j}| \times |\calP_{v,j}|
%	  \end{aligned}
%	\ee
	
	Therefore,
	\be
	\begin{aligned}
	\label{CH3p0bound}
	P_0 & \leq {{\Delta_j} \choose {\alpha}} \Pi_{i=1}^\alpha P_{u_i} \leq (\frac{e \Delta_j}{\alpha})^\alpha  (\frac{ \beta}{\Delta_j})^\alpha  \times \Pi_{i=1}^\alpha |\calP_{u_i,j}| \\
	& \stackrel{(2\star)}\leq (\frac{e\beta\mu_j}{\alpha^2})^\alpha
	\end{aligned}
	\ee
	Note that the size of set $\calP_j$ is bounded by $|S_0|$ (and therefore $\mu_j$) as there is only one path for any coflow in $S_0$ to coflow $R_j$. Therefore, $\sum_{i=1}^\alpha|\calP_{u_i,j}| \leq |\calP_j| \leq \mu_j$. Combining this with the fact that $\Pi_{i=1}^\alpha |\calP_{u_i,j}|$ is maximized when $|\calP_{u_i,j}|=\mu_j/\alpha$, Inequality $(2\star)$ is yielded. 
	
	If we choose $\alpha=k_\epsilon \sqrt{\mu_j} $ then $P_0\leq (m\mu_j)^{-(k_\epsilon-1)}$. Hence, the probability that any server at any time is assigned more that $\alpha$ packets can be bounded by $P \leq 2m (\Delta_j+T_j)P_0 < 2m (\Delta_j+T_j) (m\mu_j)^{-(k_\epsilon-1)}$. This last step is similar to the argument in~\cite{shmoys1994improved, leighton1988universal}, for job shop scheduling problem.
	To specify $k_\epsilon$, note that we require $P$ to be less than $\epsilon >0$, which is satisfied by choosing $k_\epsilon$ as 
	\be
	\label{CH3kepsilon}
&	k_\epsilon \geq \log_{m\mu_j} \big(\frac{2m (\Delta_j+T_j) }{\epsilon}\big)+1
	\ee
	We now need to show that $k_\epsilon$ is a constant by showing that $\Delta_j+T_j$ is polynomially bounded in $m$ and $\mu_j$.
	Let $\delta_j$ denote the maximum size of a flow in job $j$. Note that $\Delta_j+T_j$ is polynomially bounded in $m$, $\mu_j$ and $\delta_j$. In the case that $\delta_j$ is polynomially bounded in $m$ and $\mu_j$, it is easy to see that by choosing $k_\epsilon$ according to~\dref{CH3kepsilon}, with probability $(1-\epsilon)$, there is at most $k_\epsilon(\sqrt{\mu_j}h(m, \mu_j)$ packets on any server at any time.
	If $\delta_j$ is not polynomially bounded in $m$ and $\mu_j$, we round down each flow size $d_{sr}^{cj}$ to the nearest multiple of $\delta_j/m^2\mu_j$ and denote it by $d_{sr}^{\prime cj}$. This ensures that we have at most $m^2\mu_j$ distinct values of modified flow sizes. Therefore, we can treat $d_{sr}^{\prime cj}$ as integers in $\{0,1,\dots,m^2 \mu_j\}$ and trivially retrieve a schedule for $d_{sr}^{\prime cj}$ by rescaling. Let $\calS^\prime$ denote this schedule. If we increase the flow sizes from $d_{sr}^{\prime cj}$ to $d_{sr}^{cj}$ in $\calS^\prime$ by increasing the length of the last matching that flow $(s,r,c,j)$ is scheduled in and achieve schedule $\calS$, we can argue that the length of $\calS$ and $\calS^\prime$ differs in at most $\delta_j$ amount. This is because there are at most $m^2 \mu_j$ number of flows. Thus, length of $\calS$ is at most $(k_\epsilon+1)\sqrt{\mu_j}h(m, \mu_j)$ as $\delta_j \leq T_j$. 
\end{proof}
We are now ready to prove Theorem~\ref{CH3Thm:mksgenjobRT}.
\begin{proof}[Proof of Theorem~\ref{CH3Thm:mksgenjobRT}]
It is easy to see that steps 1-3 of {\DMART} can be done in polynomial time. By Lemma~\ref{CH3timecomplexityDMA}, Steps 4 and 5 of {\DMART} are also executed in polynomial time. Therefore, {\DMART} is a polynomial time algorithm.

Moreover, the completion time of each coflow is bounded by $\Delta_j/\beta+T_j$, since the maximum delay is $\Delta_j/\beta$ and the maximum starting time of coflow $c$ is $T_j-D^{(cj)}$. Using Lemma~\ref{CH3numcolli2}, we conclude that the length of the final schedule is at most $O(\sqrt{\mu_j}h(m, \mu_j))(\Delta_j/\beta+T_j)$ with a high probability. Given that both $\Delta_j$ and $T_j$ are lower bounds for the optimal makespan, the proof is complete.
\end{proof}

We now prove Theorem~\ref{CH3Thm:mksmultigenjobs} regarding performance of {\OCDMA}.
\begin{proof}[Proof of Theorem~\ref{CH3Thm:mksmultigenjobs}]
The proof is similar to proof of Theorem~\ref{CH3Thm:mksgenjob}. Using {\DMART}, completion time of job $j$ is $O(\sqrt{\mu_j}h(m, \mu_j)) \times (\Delta_j/\beta+T_j)$. Delaying and merging these schedules and applying an argument similar to proof of Lemma~\ref{CH3numcolli1} and~\ref{CH3feaslen}, we can conclude that the final solution is bounded from above by $O(\sqrt{\mu}g(m)h(m, \mu)) \times (\Delta/\beta+\max_j T_j)$.
	Combining this with the fact that both $\Delta$ and $\max_j T_j$ are lower bounds on the optimal makespan, we can conclude the result.
\end{proof}

\subsection{Proofs Related to {\OJOD}}
\label{proofGDM}

We use $\tilde{C}_j$ to denote the optimal solution to LP~\dref{CH3RLP} for the completion time of job $j$, and use $\widetilde{\text{OPT}}=\sum_j w_j \tilde{C}_j$ to denote the corresponding objective value. Similarly we use $C_j^\star$ to denote the optimal completion time of job $j$ in the original job scheduling problem, and use $\text{OPT}=\sum_j w_j {C_j^\star}$. The following lemma establishes a relation between $\widetilde{\text{OPT}}$ and $\text{OPT}$. To prove Theorem~\ref{CH3minsumalgper}, we first show the following.

\begin{lemma}
	\label{CH3optlb}
	The optimal value of the LP, $\widetilde{\text{OPT}}$, is a lower bound on the optimal total weighted completion time $\text{OPT}$ of multi-stage coflow scheduling problem, i.e., $\widetilde{\text{OPT}} \leq \text{OPT}$ .
\end{lemma}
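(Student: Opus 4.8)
The plan is to show that the optimal schedule for the original multi-stage coflow scheduling problem, when restricted to the aggregate-coflow view, yields a feasible point for the LP~\dref{CH3RLP} with the same objective value; feasibility of that point then forces $\widetilde{\text{OPT}} \leq \text{OPT}$ since the LP is a minimization. First I would fix an optimal schedule $\calS^\star$ of the original problem and let $C_j^\star$ be the completion time of job $j$ under $\calS^\star$. I would then verify that the vector $(C_j^\star)_{j \in \calN}$ satisfies all three families of constraints of the LP.

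For constraint~\dref{CH3rdandcrp}, the bound $C_j^\star \geq T_j + \rho_j$ is immediate: job $j$ cannot begin before its release time $\rho_j$, and once started, any valid schedule must respect all precedence constraints along every directed path, so in particular the critical path of length $T_j$ must be traversed sequentially (each coflow on it needs at least its effective size in time, by the remark after Definition~\ref{CH3def1}), giving $C_j^\star - \rho_j \geq T_j$.

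The substance is constraint~\dref{CH3scap}. Fix a server $i \in \overline{\calM}$ and a subset $\calJ \subseteq \calN$. Under $\calS^\star$, order the jobs in $\calJ$ by their completion times; since the $C_j^\star$ are real numbers we may assume (breaking ties arbitrarily) a total order $\prec$ on $\calJ$ with $j' \prec j \Rightarrow C_{j'}^\star \leq C_j^\star$. Server $i$ processes at most one packet per time slot, and by time $C_j^\star$ it must have completed all traffic of job $j$ incident to $i$ (load $d_i^j$) as well as all traffic incident to $i$ of every job $j'$ with $C_{j'}^\star \leq C_j^\star$. Hence $C_j^\star \geq \sum_{j' \in \calJ:\, j' \preceq j} d_i^{j'} = d_i^j + \sum_{j' \prec j} d_i^{j'}$. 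Multiplying by $d_i^j$ and summing over $j \in \calJ$ gives
\[
\sum_{j \in \calJ} d_i^j C_j^\star \;\geq\; \sum_{j \in \calJ} \Bigl( (d_i^j)^2 + d_i^j \sum_{j' \prec j} d_i^{j'} \Bigr) \;=\; \frac{1}{2}\Bigl( \sum_{j \in \calJ} (d_i^j)^2 + \bigl(\sum_{j \in \calJ} d_i^j\bigr)^2 \Bigr),
\]
where the last equality is the standard identity $\bigl(\sum_j x_j\bigr)^2 = \sum_j x_j^2 + 2\sum_{j' \prec j} x_j x_{j'}$. This is exactly~\dref{CH3scap}. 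Since $(C_j^\star)_j$ is LP-feasible with objective $\sum_j w_j C_j^\star = \text{OPT}$, and $\widetilde{\text{OPT}}$ is the minimum over all feasible points, $\widetilde{\text{OPT}} \leq \text{OPT}$.

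The main obstacle, such as it is, is purely a matter of care rather than difficulty: one must argue that aggregating the coflows of a job into a single demand matrix $\calD^j$ does not lose the load lower bound at a server — i.e., that the load $d_i^j$ of the aggregate coflow equals the total traffic of job $j$ incident to server $i$, which follows directly from Definition~\ref{CH3def1} since loads add coordinatewise — and that this aggregate load must be fully served by $C_j^\star$ regardless of how the individual coflows are interleaved in $\calS^\star$. The precedence (DAG) structure only restricts the feasible set further, so it cannot help the original optimum beat the relaxation; this is why the argument works uniformly for general DAGs and rooted trees alike.
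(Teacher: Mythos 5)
Your proof is correct and follows the same route as the paper: the paper's proof is the one-line observation that the optimal schedule induces a feasible LP point, with the verification of constraint~\dref{CH3scap} sketched in the text immediately after the LP formulation, and your write-up simply fills in those details (the per-server load argument, the square-expansion identity, and the critical-path/release-time bound) explicitly and correctly.
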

\begin{proof}
	It is easy to see that an optimal solution for the original multi-stage job scheduling problem is a feasible solution to LP~\dref{CH3RLP} from which the lemma's statement can be concluded. 
\end{proof}

\begin{lemma}
	\label{CH3bound1}
	If there is an algorithm that generates a feasible job schedule such that for any job $j$, $C_j^{\text{ALG}}=O(\zeta)(T_j+\rho_j+D_j)$, then $\sum_j w_j C_j^{\text{ALG}}=O(\zeta) \times \text{OPT}$, where $C_j^{\text{ALG}}$ is completion time of job $j$ under the algorithm.
\end{lemma}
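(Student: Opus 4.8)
The plan is to combine the grouping rule \dref{CH3eq:subset} with the LP lower bound (Lemma~\ref{CH3optlb}) via a standard interval-doubling / delay-line argument. First I would fix a group index $b$ and observe that, by the definition \dref{CH3eq:subset}, every job $j\in\calJ_b$ satisfies $T_j+\rho_j+D_j\le a_b=\gamma 2^{b}$. By hypothesis, the algorithm (which schedules $\calJ_b$ only after the whole group has arrived, i.e.\ no earlier than $\max_{j\in\calJ_b}\rho_j\le a_b$) produces for each such job a completion time $C_j^{\text{ALG}}=O(\zeta)(T_j+\rho_j+D_j)=O(\zeta)a_b$. Summing the makespan contributions of the groups $1,\dots,b$ that are processed before $\calJ_b$ finishes, and using the geometric growth $a_{b'}=\gamma 2^{b'}$ so that $\sum_{b'\le b}a_{b'}\le 2a_b$, I get that the actual finishing time of any job $j\in\calJ_b$ under {\OJOD} is $O(\zeta)a_b=O(\zeta)\,a_{b-1}\cdot 2 < O(\zeta)(T_j+\rho_j+D_j)$ — i.e.\ up to the $O(\zeta)$ and a universal constant, $C_j^{\text{ALG}}=O(\zeta)(T_j+\rho_j+D_j)$ holds \emph{after} accounting for the serialization of earlier groups as well.

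Second, I would lower-bound $T_j+\rho_j+D_j$ by a constant multiple of $\tilde C_j$, the LP-optimal completion time of job $j$. The term $T_j+\rho_j$ is directly $\le \tilde C_j$ by constraint \dref{CH3rdandcrp}. For the load term $D_j$: recall $D_j$ is the effective size of the aggregate coflow built from the first $j$ jobs in the computed order, so $D_j=\max_{i\in\overline\calM}\sum_{j'\le j}d_i^{j'}$. Fixing the server $i$ achieving this max and taking $\calJ=\{1,\dots,j\}$ in constraint \dref{CH3scap}, together with monotonicity $\tilde C_1\le\cdots\le\tilde C_j$ enforced by the primal–dual ordering (Algorithm~\ref{CH3permalg}), gives $\tilde C_j\sum_{j'\le j}d_i^{j'}\ge \sum_{j'\le j}d_i^{j'}\tilde C_{j'}\ge \tfrac12\big(\sum_{j'\le j}(d_i^{j'})^2+(\sum_{j'\le j}d_i^{j'})^2\big)\ge\tfrac12 D_j^2$, hence $\tilde C_j\ge D_j/2$. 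Combining, $T_j+\rho_j+D_j\le 3\tilde C_j$, so $C_j^{\text{ALG}}=O(\zeta)\tilde C_j$ for every $j$.

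Finally I would multiply by $w_j\ge 0$ and sum over $j\in\calN$:
\[
\sum_j w_j C_j^{\text{ALG}} = O(\zeta)\sum_j w_j \tilde C_j = O(\zeta)\,\widetilde{\text{OPT}} \le O(\zeta)\,\text{OPT},
\]
the last inequality being Lemma~\ref{CH3optlb}. This completes the proof.

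The main obstacle I expect is the second step — specifically, being careful that the job ordering produced by the combinatorial primal–dual algorithm really does make the LP-completion-time surrogate values nondecreasing in the index, so that the telescoped sum $\sum_{j'\le j}d_i^{j'}\tilde C_{j'}$ can be bounded below by $\tilde C_j$ times $\sum_{j'\le j}d_i^{j'}$ scaled appropriately; if instead one only has that $\{\tilde C_j\}$ are the LP optimum without the ordering consistency, one must instead invoke the standard argument that the primal–dual permutation itself certifies $\tilde C_j=\Theta(\max_i\sum_{j'\preceq j}d_i^{j'}+T_j+\rho_j)$ up to constants. The rest (the interval-doubling bookkeeping and the final summation) is routine.
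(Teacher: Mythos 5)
Your overall plan --- bound $C_j^{\text{ALG}}$ by $O(\zeta)\tilde C_j$ job by job and then sum --- has a genuine gap at its central step, and you have in fact located it yourself: the inequality $\tilde C_j \ge D_j/2$ does not follow from constraint \dref{CH3scap} alone. Taking $\calJ=\{1,\dots,j\}$ and the bottleneck server $i$ only yields $\bigl(\max_{j'\le j}\tilde C_{j'}\bigr)\sum_{j'\le j}d_i^{j'} \ge \tfrac12 D_j^2$, i.e.\ a lower bound on the \emph{maximum} LP completion time among the first $j$ jobs of the computed permutation, not on $\tilde C_j$ itself. To convert this into $\tilde C_j\ge D_j/2$ you need the permutation produced by Algorithm~\ref{CH3permalg} to be nondecreasing in $\tilde C_j$. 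Nothing guarantees this: the combinatorial primal--dual routine never solves the LP, its ordering is driven by the dual updates, and the values $\tilde C_j$ are never computed. Your proposed fallback --- that the permutation ``certifies'' $\tilde C_j=\Theta(T_j+\rho_j+D_j)$ --- is also not something the algorithm provides; it only provides a feasible dual solution $(\eta,\lambda)$ together with tightness of the dual constraint \dref{CH3con1} for each job at the moment it is placed. (Your first paragraph is also not needed here: the bound $C_j^{\text{ALG}}=O(\zeta)(T_j+\rho_j+D_j)$ is the lemma's \emph{hypothesis}, established separately in the proof of Theorem~\ref{CH3minsumalgper}; re-deriving the interval-doubling bookkeeping inside this lemma conflates the two statements.)

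The paper closes exactly this gap by arguing on the dual side instead of the primal side (dual fitting). Since job $j$ enters the permutation only when \dref{CH3con1} becomes tight, one writes $w_j=\eta_j+\sum_{i\in\overline{\calM}}\sum_{k\ge j}d_i^j\lambda_{i,k}$, substitutes this into $\sum_j w_j C_j^{\text{ALG}}\le O(\zeta)\sum_j w_j(T_j+\rho_j+D_j)$, and then uses the case distinction built into Algorithm~\ref{CH3permalg} ($\eta_j>0$ only when $T_j+\rho_j>D_j$; $\lambda_{i,k}>0$ only when $T_j+\rho_j\le D_k$ for the jobs $j\le k$) to bound the result by a constant times the dual objective $\sum_{i}\sum_{\calJ}\lambda_{i,\calJ}f_i(\calJ)+\sum_j\eta_j(T_j+\rho_j)$. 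Weak duality then gives $O(\zeta)\widetilde{\text{OPT}}\le O(\zeta)\,\text{OPT}$, with no need for any monotonicity of the $\tilde C_j$. To salvage your primal route you would have to either solve the LP and order jobs by $\tilde C_j$ (defeating the purpose of the combinatorial algorithm) or prove the ordering-consistency property, which is not available here.
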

\begin{proof}
	The proof is similar to the proofs of Lemmas $5$ and $6$, and Theorem $1$ in~\cite{ahmadi2017scheduling}. We first show the following,
	\begin{equation}
	\label{CH3Ineq1}
	\begin{aligned}
	\sum_j w_j C_j^{\text{ALG}} = O(\zeta) (\sum_{i \in \overline{\calM}} \sum_{\calJ \subseteq \calN} \lambda_{i,\calJ} f_i(\calJ)+\sum_{j \in \calN}\eta_j(T_j+\rho_j)),
	\end{aligned}
	\end{equation}
	for $\eta$ and $\lambda$ as computed in Algorithm~\ref{CH3permalg} in Appendix~\ref{app1}. We would like to emphasize that the values of $\eta$ and $\lambda$ at the end of Algorithm~\ref{CH3permalg} constitute a feasible dual solution~\cite{ahmadi2017scheduling}. Note that the second term in the right hand side of inequality~\dref{CH3Ineq1} is the optimization objective in the Dual LP~\dref{CH3DRLP}. Therefore, from weak duality (as $\eta$ and $\lambda$ constitute a feasible dual solution), we can conclude that $\sum_j w_j C_j^{\text{ALG}}= O(\zeta) \times \text{OPT}$.
To show Inequality~\dref{CH3Ineq1}, first note that
\begin{equation}
	\label{CH3wvalue}
	w_j=\eta_{j}+\sum_{i \in \overline{\calM}}\sum_{k \geq j} d_i^j \lambda_{i,k},
\end{equation}
	where, with a slight abuse of notation, $\lambda_{i,k}=\lambda_{i,\calN^\prime}$ when $\calN^\prime=\{1,2,\dots,k\}$. Equation~\dref{CH3wvalue} is correct as job $j$ is added to the permutation in Algorithm~\ref{CH3permalg} only if Constraint~\dref{CH3con1} gets tight for this job. Therefore by the lemma's assumption,
	\begin{equation*}
	\begin{aligned}
	\sum_j w_j C_j^{\text{ALG}}<&O(\zeta) (\sum_j (\eta_{j}+\sum_{i \in \overline{\calM}}\sum_{k \geq j} d_i^j \lambda_{i,k})(T_j+\rho_j+D_j))
	\end{aligned}
	\end{equation*}
	We first bound the term $\sum_j \eta_{j}(T_j+\rho_j+D_j)$. Note that for every job $j$ that has a nonzero $\eta_j$, $T_j+\rho_j > d_{\phi(j)}=D_j$. Therefore,
	\begin{equation}
	\label{CH3help1}
	\sum_j \eta_{j}(T_j+\rho_j+D_j)< 2\sum_j \eta_{j}(T_j+\rho_j).
	\end{equation}
	To bound the term $\sum_j\sum_{i \in \overline{\calM}}\sum_{k \geq j} d_i^j \lambda_{i,k}(T_j+\rho_j+D_j)$, note that for every set $\calN^\prime=\{1,2,\dots,k\}$ with nonzero $\lambda_{i,k}$, we have $T_j+\rho_j \leq d_{\phi(k)}=D_j$. Therefore,
	\begin{equation}
	\label{CH3help2}
	\begin{aligned}
	&\sum_j \sum_{i \in \overline{\calM}}\sum_{k \geq j} d_i^j \lambda_{i,k}(T_j+\rho_j+D_j) \leq 2\sum_j \sum_{i \in \overline{\calM}}\sum_{k \geq j}d_i^j \lambda_{i,k}D_j\\
	&\leq 2\sum_k \sum_{i \in \overline{\calM}}\lambda_{i,k}D_k\sum_{j \leq k}d_i^j \leq 2\sum_k \sum_{i \in \overline{\calM}}\lambda_{i,k} D_k^2\\
	&\stackrel{\star}\leq 4 \sum_{i \in \overline{\calM}} \sum_{\calJ \subseteq \calN} \lambda_{i,\calJ} f_i(\calJ),
	\end{aligned}
	\end{equation}
	where, Inequality ($\star$) is by~\dref{CH3fdef} and the fact that $\lambda_{i,\calJ}$ is only nonzero for the sets of the form $\calJ={1,2,\dots,k}$ for some $k$. Combining~\dref{CH3help1} and~\dref{CH3help2}, Inequality~\dref{CH3ineq1} is derived.
\end{proof}
\begin{proof}[Proof of Theorem~\ref{CH3minsumalgper}]
Recall that $\tilde{C}_j$ is the optimal completion time of job $j$ according to the LP \dref{CH3RLP}. Let $\widehat{C}_j$ denote the actual completion time of job $j$ under {\OJOD}. Also, let $l_j$ be the index of the group to which job $j$ belongs based on \dref{CH3eq:subset}. Let $j_b$ be the last job in group $b$, and $T_b$ be the maximum size of critical paths of jobs in group $b$. Also let $\calT^{(\calJ_b)}$ be the amount of time spent on processing all the jobs in $\calJ_b$. 	 
	Then,
	\be
	\widehat{C}_j&\stackrel{(1\star)}\leq& \max_{k \in b, b \leq l_j} \rho_k+\sum_{b=0}^{l_j} \calT^{(\calJ_b)}\nonumber \\
	&\stackrel{(2\star)} \leq& \max_{k \in b, b \leq l_j} \rho_k+O(\mu g(m))\sum_{b=0}^{l_j} (D_{j_b}+T_b) \nonumber \\
	&\stackrel{(3\star)}\leq & a_{l_j} + O(\mu g(m)) \sum_{b=0}^{l_j}  a_b,
		\label{CH3bound3}
	\ee
	where $g(m)= \log(m)/\log(\log(m))$. Inequality ($1\star$) bounds the completion time of job $j$ with sum of two terms: the first term is the maximum release time of the jobs in the first $l_j$ groups (note that $\max_{k \in b, b \leq l_j} \rho_k$ can possibly be greater than $\rho_j$); The second term is the total time the algorithm spends on scheduling jobs of previous groups plus the time it spends on scheduling $\calJ_{l_j}$.  Lemma ~\ref{CH3feaslen} implies inequality ($2\star$), and inequality ($3\star$) follows from the fact that $\max_{k \in b, b \leq l_j} \rho_k \leq a_{l_j}$, and $D_b$ and $T_b$ are both bounded by $a_b$ for every job $k \in \calJ_b$. From \dref{CH3eq:partition rule},
	\be
	\label{CH3bound4}
	& \sum_{b=0}^{l_j}  a_b = \gamma \sum_{b=0}^{l_j} 2^b = \gamma 2^{(l_j+1)}-1< 2 a_{l_j}.
	\ee
	Combining~\dref{CH3bound3} with~\dref{CH3bound4}, and the fact that $a_{l_{j-1}}=a_{l_j}/2$,
	\begin{equation*}
	\label{CH3bound5}
	\begin{aligned}
	\widehat{C}_j <O(\mu g(m)) a_{l_{j-1}}  \stackrel{(4\star)} <O(\mu g(m)) (T_j+\rho_j+D_j)
	\end{aligned}
	\end{equation*}
	where ($4\star$) is because $T_j+\rho_j+D_j$ falls in $(a_{l_j-1},a_{l_j}]$. This inequality combined with Lemma~\ref{CH3bound1} implies that
	\begin{equation*}
	\begin{aligned}
	\sum_j w_j \widehat{C}_j &\leq O(\mu\log (m)/\log(\log(m))) \times \text{OPT}.
	%O(\log (m\mu)) \times \widetilde{\text{OPT}} \\& 
	\end{aligned}
	\end{equation*}
\end{proof}

\section{Conclusion}\label{conclusion}
In this work, we proposed algorithms for scheduling coflows of multi-stage jobs in order to minimize their makespan or total weighted completion time, and provided a performance guarantee for each algorithm. Moreover. simulation results based on real traffic traces showed that our algorithms indeed improve the total jobs' completion times in practice.

This problem is practically well-motivated, involves new challenges, and
deserves further study. As we showed through an example, it is not possible for an approximation algorithm to provide a solution that is within $o(\sqrt{\mu})$ of the two simple lower bounds for a job with a general DAG. An interesting open problem is to improve the results of this paper.

%\clearpage
\bibliography{IEEEabrv,references}
\bibliographystyle{IEEEtran}

\appendix[Combinatorial Algorithm]
\label{app1}
In this section, we provide the detailed explanation of the combinatorial algorithm used in {\OJOD} to find a good permutation of jobs and proof of Lemma~\ref{CH3bound1}. 

Recall LP~\dref{CH3RLP}. Define $f_i(\calJ)$ to be the right-hand side of Constraints~\dref{CH3scap} for server $i$ and subset of jobs $\calJ$, i.e.,
\begin{equation}
\label{CH3fdef}
f_i(\calJ)=\frac{1}{2} \big(\sum_{j \in \calJ} (d_i^j)^2 + (\sum_{j \in \calJ} d_i^j)^2 \big).
\end{equation}

 We now formulate dual of  LP~\dref{CH3RLP} as follows:
\begin{subequations}
	\label{CH3DRLP}
	\begin{align}
	\label{CH3DLPobj}
	\max & \sum_{i \in \overline{\calM}} \sum_{\calJ \subseteq \calN} \lambda_{i,\calJ} f_i(\calJ)+\sum_{j \in \calN}\eta_j(T_j+\rho_j) \ \ \mathbf{(Dual~ LP)}\\
	\label{CH3con1}
	& \sum_{i \in \overline{\calM}} \sum_{\calJ: j \in \calJ} d_i^j \lambda_{i,\calJ}+\eta_j \leq  w_j, \ \ j \in \calN\\
	\label{CH3con2}
	& \eta_j \geq 0, \ \  j \in \calN\\
	\label{CH3con3}
	& \lambda_{i,\calJ} \geq 0, \ \ i \in \overline{\calM}, \ \calJ \subseteq \calN. 
	\end{align}
\end{subequations}
The algorithm is presented in Algorithm~\ref{CH3permalg}. Let $\calN^\prime$ be the set of unscheduled jobs, initially $\calN^\prime=\calN$. Also, set $\eta_j=0$ for $j \in \calN$. Define $\Lambda$ to be the set of $\lambda_{i,\calJ}$'s that get specified in the algorithm, and initialize $\Lambda=\varnothing$ (to avoid initializing all the $\lambda_{i,\calJ}=0$, which takes exponential amount of time) (line 1). In any iteration, let $j$ be the unscheduled job with the greatest $T_j+\rho_j$, let $\phi$ be the server with the highest load and let $d_\phi$ be the load on server $\phi$ (lines 3 and 4). Now, if $T_j+\rho_j > d_\phi$, we raise the dual variable $\eta_{j}$ until the corresponding dual constraint is tight and place job $j$ to be the last job in the permutation (lines 5-7). However, if $T_j+\rho_j \leq d_\phi$, we choose job $j^\prime$ as in line 9. Then we define the dual variable $\lambda_{\phi,\calN^\prime}$, set it so that the dual constraint for job $j^\prime$ becomes tight, and place job $j^\prime$ to be the last in the permutation (lines 10-12).
\begin{algorithm}[!h]
	\captionof{algorithm}{Combinatorial Algorithm for Job Ordering}
	\label{CH3permalg}
	Given a set of multi-stage jobs $\calN$: 
	\begin{algorithmic}[1]
	\State $\calN^\prime=\calN$, $\eta_j=0$ for $j \in \calN$, $\Lambda=\varnothing$.
				%\begin{algorithmic}
			\For {$k=n,n-1,...,1$}
			\State $\phi(k)=\arg\max_{i \in \overline{\calM}} d_i$
			\State $j=\arg\max_{l \in \calN^\prime} T_l+\rho_l$
			\If {$T_j+\rho_j >  d_{\phi(k)}$}
			\State $\eta_{j}=w_j-\sum_{i \in \overline{\calM}}\sum_{\calJ, j \in \calJ} d_i^j \lambda_{i,\calJ}$.
			\State $\sigma(k)=j$.
			\Else %{ $T_j+\rho_j  \leq d_{\phi(k)}$}
			\State $j^\prime=\arg\min_{j \in \calN^\prime}\big( \frac{w_j-\sum_{i \in \overline{\calM}}\sum{\calJ, j \in \calJ} d_i^j \lambda_{i,\calJ}}{d_{\phi(k)}^j} \big)$.
			\State $\lambda_{\phi(k),\calN^\prime}=\big( \frac{w_{j^\prime}-\sum_{i \in \overline{\calM}}\sum{\calJ, j^\prime \in \calJ} d_i^{j^\prime} \lambda_{i,\calJ}}{d_{\phi(k)}^{j^\prime}} \big)$
			\State $\Lambda \leftarrow \Lambda \cup \{\lambda_{\phi(k),\calN^\prime}\}$.
			\State $\sigma(k)=j^\prime$.
			\EndIf
			\State $\calN^\prime \leftarrow \calN^\prime / \sigma(k)$.
			\State $d_i \leftarrow d_i-d_i^{\sigma(k)}, \ \forall i \in \overline{\calM}$.
			\EndFor
			\State Output permutation $\sigma$.
		\end{algorithmic}
		
	%\end{enumerate}
\end{algorithm}

\end{document}